\documentclass[envcountsect,runningheads,orivec]{llncs}
\usepackage[T1]{fontenc}
\usepackage[utf8]{inputenc}
\usepackage{amsmath,amsfonts,amssymb}
\usepackage{stmaryrd}
\SetSymbolFont{stmry}{bold}{U}{stmry}{m}{n}
\usepackage{scalerel}
\usepackage{proof}
\usepackage{mathtools}
\usepackage{needspace}
\usepackage{hyperref}
\usepackage{color}

\hypersetup{
	colorlinks=true,
	linkcolor=blue,
	filecolor=magenta,
	urlcolor=cyan,
}

\allowdisplaybreaks

\newcommand\CPM{\ensuremath{\mathsf{CPM}}}
\newcommand\D[1][n]{\ensuremath{\mathcal D_{#1}}}
\newcommand\X{\ensuremath{\mathsf{X}}}
\newcommand\Z{\ensuremath{\mathsf{Z}}}
\providecommand\bra[1]{\ensuremath{\langle #1|}}
\providecommand\braket[2]{\ensuremath{\langle #1|#2 \rangle}}
\newcommand\cnot{\ensuremath{\mathsf{Cnot}}}
\newcommand\ext[1]{\ensuremath{\overline{#1}}}
\newcommand\fsem[2][\theta]{\ensuremath{\llbracket #2\rrbracket_{#1}}}
\newcommand\had{\ensuremath{\mathsf{H}}}
\newcommand\hbigotimes{\ensuremath{\widehat{\bigotimes}}}
\newcommand\hotimes{\ensuremath{\mathbin{\hat\otimes}}}
\newcommand\hsum{\ensuremath{\widehat{\sum}}}

\providecommand\ket[1]{\ensuremath{|{#1}\rangle}}
\newcommand\ketbra[2]{\ensuremath{|{#1}\rangle\!\langle{#2}|}}
\newcommand\letxk[4]{\ensuremath{\mathsf{let}\ {#1}^{\otimes #2}  = #3\ \mathsf{in}\ #4}}
\newcommand\letxn[3]{\ensuremath{\mathsf{let}\ {#1}^{\otimes n}  = #2\ \mathsf{in}\ #3}}
\newcommand\qlambdens{\ensuremath{\lambda_\rho^\circ}}
\newcommand\qletcase[3]{\ensuremath{\mathsf{letcase}^\circ\ #1 = #2\ \mathsf{in}\ \{#3\}}}

\newcommand\tr{\ensuremath{\mathsf{tr}}}
\newcommand\tsem[1]{\ensuremath{\llbracket #1\rrbracket}}

\newcommand{\SN}{\mathsf{SN}}

\begin{document}

\title{A quantum let within the lambda calculus}
\titlerunning{A quantum let within the lambda calculus}
\author{Alejandro Díaz-Caro\inst{1,2}
  \and
  Tomás Miguez \inst{3,}\thanks{This paper is based on the second's author Master's thesis~\cite{Miguez26}.}
}
\authorrunning{A.~Díaz-Caro and T.~Miguez}

\institute{Université de Lorraine, CNRS, Inria, LORIA, F-54000 Nancy, France \and
  InCo, FIng, Universidad de la República, Colonia del Sacramento, Uruguay \\
  \email{adiazcaro@fing.edu.uy}\\
\and
DC, FCEN, Universidad de Buenos Aires, CABA, Argentina\\
\email{tomasmiguez99@gmail.com}}
\maketitle              
\begin{abstract}
Since the seminal work of Selinger and Valiron, the standard design for quantum
lambda calculi has kept the quantum state outside the program: terms manipulate
pointers to an external register. This is largely due to the difficulty of
eliminating tensor products. For example, the calculus $\qlambdens$ embeds
density matrices directly within terms, where terms carry the entire
computation state, a feature particularly appealing for program verification.
However, lacking a tensor elimination construct, it can neither access the
individual qubits of a compound state nor discard them. Borgna showed that this
inability to discard qubits makes the calculus strictly less expressive than
the quantum lambda calculus of Selinger and Valiron.

In this paper we show that tensor elimination is possible in this setting. The
key observation is that the Pauli decomposition, combined with the spectral
decomposition of the Pauli matrices, allows any $n$-qubit density matrix to be
expressed as a real linear combination of tensor products of single-qubit
density matrices. Exploiting this fact, we extend $\qlambdens$ with a construct
$\mathsf{let}\ x^{\otimes n} = \rho\ \mathsf{in}\ t$, which binds each $x_i$ to
a single-qubit density matrix arising from the decomposition of $\rho$.

We equip the extended calculus with a rewrite system, a type system, and a
denotational semantics, and prove Subject Reduction, Progress, Strong
Normalisation, Soundness, and Adequacy. The new construct also recovers the
missing ability to discard qubits, thereby restoring expressiveness. Moreover,
we show that this is achieved in a physically principled way: a variable unused
in $t$ is interpreted exactly as being partial-traced out, as dictated by the
no-deleting theorem. We illustrate the resulting compositionality through
quantum teleportation and the three-qubit bit-flip code.

\keywords{Quantum lambda calculus \and Lambda calculus \and Classical control.}
\end{abstract}

\section{Introduction}
\label{sec:introduction}

Quantum programming languages provide a hardware-independent, high-level
description of quantum algorithms. Since the seminal work of Selinger and
Valiron~\cite{SelingerValironMSCS06}, the standard design for higher-order
quantum languages has followed the quantum data / classical control paradigm:
the quantum state is kept in an external register, while terms manipulate names
(pointers) to the qubits stored in it. This design sidesteps the problem of
decomposing entangled compound states within the language itself. By contrast,
calculi that represent quantum states directly as terms must confront this
problem explicitly.

Another line of quantum lambda calculi takes a different route, embedding the
computation state directly as a constant within terms, instead of keeping it in
an external store. Reasoning in such calculi more directly reflects the
evolution of the computation state, which makes them especially appealing for
program verification~\cite{AvanziniDiazcaroHainryPechoux25}: since a term
carries its entire computation state, program equivalences can be established
by reasoning within the calculus itself, rather than about configurations of an
external store. One such
calculus is $\qlambdens$~\cite{DiazCaro17}, which uses density matrices to
represent states as part of terms.

This internal-state approach has recently been used as the basis for an
expectation-based analysis of higher-order quantum programs
\cite{AvanziniDiazcaroHainryPechoux25}. That work considers a derivative
calculus using vectors instead of density matrices and adding recursion, among
other extensions, and exploits its similarities with a probabilistic classical
language~\cite{AvanziniBartheDalLagoICFP21}. In that setting, the calculus
serves as an intermediate language: a higher-order quantum program can be
compiled into it, and the analysis can then be performed there.

The need for such an intermediate-language role is partly due to a limitation
of $\qlambdens$ and its derivative calculi: they lack a tensor elimination
construct, or let-tensor. As a consequence, although compound quantum states
can be represented, they cannot be decomposed within the language itself. The
difficulty is not specific to $\qlambdens$: being unable to split the state of
a multi-qubit system into its components, precisely because of entanglement, is
a limitation shared by other calculi for quantum computation, such as those of
Hasuo and Hoshino~\cite{HasuoHoshino17} and Delbecque~\cite{Delbecque08}. This
affects compositionality. A single quantum computation can still be expressed
by encoding operations directly at the matrix level, so the language can still
serve its purpose as an intermediate language. However, the individual qubits
of a compound state cannot be addressed independently.

Consider a concrete scenario: given a two-qubit state $\rho$, suppose that we
want to apply a function $f$ to the first qubit and a function $g$ to the
second qubit independently. In $\qlambdens$, there is no way to express this as
$f(x_1)$ and $g(x_2)$, where $x_1$ and $x_2$ are the individual qubits of
$\rho$. One must instead encode the operation at the matrix level, using an
operator such as $U_f \otimes U_g$ applied to the whole state, losing the
modularity that lambda calculi are designed to provide.

This limitation also has an expressiveness-level counterpart. Since
$\qlambdens$ provides no mechanism for decomposing a state, it also provides no
mechanism for discarding qubits. Borgna~\cite{Borgna19} showed that, as a
consequence, some quantum computations cannot be directly encoded in it.

In this paper we present an extension of $\qlambdens$ that addresses both
concerns. It provides compositional access to the individual qubits of a state
and, through affine discard (an \emph{affine} type system being one that
admits weakening but not contraction), recovers the ability to discard qubits,
thereby restoring expressiveness. Moreover, the resulting notion of discard is
shown to coincide exactly with the partial trace, yielding a physically
meaningful interpretation of affine weakening.

The key observation is that combining the Pauli decomposition of density
matrices with the spectral decomposition of the Pauli
matrices~\cite{NielsenChuang10} allows any $n$-qubit density matrix to be
expressed as a real linear combination of tensor products of single-qubit
density matrices. We exploit this fact to extend $\qlambdens$ with a construct
$\letxn{x}{\rho}{t}$ that binds the individual qubit components of a compound
state within a body, so that programs may operate on them in an algebraically
coherent way.

\paragraph{Contributions.}
We make the following contributions.
\begin{enumerate}
\item \textbf{A compositional $\mathsf{let}$ construct.}
We extend $\qlambdens$ with a construct $\letxn{x}{\rho}{t}$ that binds the
variables $x_1, \dots, x_n$ to the single-qubit density matrices arising from
the decomposition of $\rho$ within the body $t$. The decomposition is
obtained by combining the Pauli decomposition with the spectral decomposition
of the Pauli matrices: the state $\rho$ is expressed as a weighted sum of
tensor products of single-qubit density matrices, and the body $t$ is
evaluated for each such tensor product, weighted accordingly.

\item \textbf{A denotational semantics.}
We provide a denotational semantics for the extended calculus, interpreting
the new $\mathsf{let}$ construct through the corresponding decomposition of
density matrices.

\item \textbf{Metatheoretical properties.}
We prove Subject Reduction, Progress, Strong Normalisation, Soundness, and
Adequacy for the extended calculus. We also prove the central result specific
to this extension: discard coincides with partial trace. This establishes
that the $\mathsf{let}$ construct is faithful to the no-deleting theorem.

\item \textbf{Illustrative examples.}
We implement quantum teleportation and the three-qubit bit-flip error
correction code in the extended calculus. Both examples demonstrate how the
$\mathsf{let}$ construct enables compositional programming over multi-qubit
states, including the separation of roles in teleportation and ancilla
management in error correction. Throughout the paper, we also use the Bell
state as a running example: we exhibit its Pauli and combined decompositions
and use it to show how discarding a qubit computes the partial trace.
\end{enumerate}

\section{Preliminaries}
\label{sec:preliminaries}

We briefly recall quantum computing in the density matrix formalism and the
base calculus $\qlambdens$. We refer to \cite{NielsenChuang10} for a
comprehensive introduction to the former.

\subsection{Quantum Computing with Density Matrices}
\label{sec:density-matrices}

While a pure state can be described by a unit vector in a Hilbert space, this
representation cannot capture situations in which our knowledge of the system
is incomplete, for instance after a measurement whose outcome is unknown. The
density matrix formalism encompasses both pure and mixed states.

\begin{definition}[Density matrix]
A \emph{density matrix} (or density operator) is a positive semidefinite
matrix $\rho$ with $\tr(\rho)=1$, where $\tr$ denotes the trace. For an
$n$-qubit system, $\rho$ is a $2^n\times 2^n$ matrix. We write $\D$ for the
set of density matrices of $n$-qubit systems.
\end{definition}

A pure state $\ket{\psi}$ corresponds to the density matrix
$\rho=\ketbra{\psi}{\psi}$. A mixed state, representing a statistical ensemble
in which the system is in state $\rho_i$ with probability $p_i$ (where
$\sum_i p_i=1$), is described by $\rho=\sum_i p_i\rho_i$.

The four postulates of quantum mechanics can be formulated entirely in terms of
density matrices. The \emph{evolution} of a closed system is described by a
unitary operator $U$, which transforms a state $\rho$ into
$U\rho U^\dagger$. A \emph{measurement} is described by a family of operators
$\{\pi_i\}_i$ satisfying $\sum_i \pi_i^\dagger\pi_i=I$. If the system is in
state $\rho$, the probability of outcome $i$ is
$p_i=\tr(\pi_i^\dagger\pi_i\,\rho)$ and, when $p_i>0$, the corresponding
post-measurement state is
$\pi_i\rho\pi_i^\dagger/p_i$. Finally, composite systems are represented by
tensor products: if systems $A$ and $B$ are respectively in states $\rho^A$
and $\rho^B$, then their joint state is $\rho^A\otimes\rho^B$.

\begin{example}[Bell state]
\label{ex:bell-def}
The Bell state
$\ket{\beta_{00}}=\frac{1}{\sqrt{2}}(\ket{00}+\ket{11})$
has density matrix
\[
\beta_{00}
=
\frac12(
\ketbra{00}{00}
+
\ketbra{00}{11}
+
\ketbra{11}{00}
+
\ketbra{11}{11}
).
\]
This state cannot be written as $\rho^A\otimes\rho^B$ for any single-qubit
density matrices $\rho^A$ and $\rho^B$. It will serve as a running example
throughout the paper.
\end{example}

Two physical principles constrain any language manipulating quantum data.
Quantum information cannot be freely \emph{duplicated} (the \emph{no-cloning}
theorem~\cite{WoottersZurekN82}), which motivates the affine type system of
$\qlambdens$. Likewise, quantum information cannot be freely \emph{deleted}
(the \emph{no-deleting} theorem~\cite{PatiBraunstein00}). We return to this
point in Section~\ref{sec:discard}, where we show that discard corresponds
exactly to partial trace.

\subsection{The Calculus \texorpdfstring{$\qlambdens$}{lambda-rho-circ}}
\label{sec:qlambdens}

The calculus $\qlambdens$~\cite{DiazCaro17} is a quantum extension of the
lambda calculus in which quantum data is represented by density matrices.
Following the quantum data / classical control paradigm, measurements are
handled through deterministic rewriting with probabilistic control: a
measurement does not reduce to a single outcome but rather to a probabilistic
mixture containing all possible outcomes.

The grammar of terms (Table~\ref{tab:Grammar}) contains the usual lambda
calculus constructs together with terms corresponding to the quantum
operations described in the previous section and constructs for probabilistic
control. In $\rho^n$, the superscript $n$ denotes the number of qubits.
The term $U^m t$ applies a unitary transformation, $\pi^m t$ performs a
measurement, and $t\otimes r$ forms a composite system. The term
$\sum_i p_i t_i$ represents a probabilistic mixture. Finally,
$\mathsf{letcase}^\circ$ eliminates a measurement result by producing the
weighted sum of all possible branches.

Although $\qlambdens$ provides a constructor for composite systems through
$\otimes$, it provides no corresponding elimination construct. This absence
prevents the individual components of a compound state from being manipulated
independently and is precisely the limitation addressed in the present work.

\begin{table}[t]
  \centering
  \begin{align*}
    t & := x\mid \lambda x.t \mid tt & \text{(Standard lambda calculus)}\\
      & \hspace{2mm}\mid \rho^n\mid U^n t\mid \pi^n t\mid t\otimes t & \text{(Quantum postulates)}\\
      & \hspace{2mm}\mid \sum_{i=1}^n p_i t_i\mid  \qletcase xt{t,\dots,t} & \text{(Probabilistic control)}
  \end{align*}
  where $p_i\in (0,1]$, $\sum_{i=1}^n p_i = 1$, and $\sum$ is considered modulo
  associativity and commutativity.
  \caption{Grammar of terms of $\qlambdens$.}
  \label{tab:Grammar}
\end{table}

To state Progress, we distinguish values from reducible terms.
Since measurements are evaluated through $\mathsf{letcase}^\circ$, a term
$\pi^m\rho^n$ is considered a value.
\begin{definition}[Values]
  \label{def:values}
  A value is a term $v$ defined by the following grammar
  \begin{align*}
  \hat w &:= \lambda x.t \mid \pi^m \rho^n \mid \sum_i p_i \hat w_i\\
  v &:= \hat w\mid\rho^n
  \end{align*}
  where $\forall i\neq j, \hat w_i \neq \hat w_j$. This condition is to be read
  syntactically: the summands must be pairwise distinct \emph{as terms} (up to
  renaming of bound variables); in particular, two $\lambda$-abstractions are
  identified only when they are the same term, regardless of the function they
  denote.
\end{definition}

The rewrite system (Table~\ref{tab:TRS}) uses a deterministic relation
$\rightarrow$. We write $t[r/x]$ for the usual capture-avoiding substitution of
$r$ for the free occurrences of $x$ in $t$, following Barendregt's variable
convention. Since density matrices are themselves terms, several rewrite rules
relate the syntactic layer to the underlying mathematical model. To keep the
two apart, within the rewrite rules we write $\hotimes$ and $\hsum$ for the
tensor product and the linear combination \emph{of matrices}, the usual
operations of linear algebra, reserving $\otimes$ and $\sum$ for the
homonymous term constructors. The overline notation extends an operator to the
dimension of the state to which it is applied. For instance, when applying a
two-qubit unitary $U^2$ to a three-qubit state, $\ext{U^2}=U^2\hotimes I^1$.
Notably, measurement does not reduce by itself; it only reduces when it is the
argument of a $\mathsf{letcase}^\circ$, producing a weighted sum over all
possible outcomes.
We include a distributivity rule for
$\mathsf{letcase}^\circ$ over probabilistic mixtures.
This rule is needed to formulate Progress using the notion of value above.
We write $t \downarrow r$ when there exists a term $u$ such that
$t \rightarrow^* u$ and $r \rightarrow^* u$.

\begin{table}[t]
  \centering
  \[
    \begin{array}{r@{}l@{\qquad}l}
      (\lambda x.t)r &\rightarrow t[r/x]\\
      \qletcase x{\pi^m\rho^n}{t_0,\dots,t_{2^m-1}} &\rightarrow\sum\limits_i p_i t_i[\rho^n_i/x]
      & \textrm{with }\left\{ \begin{array}{l}
	  \rho_i^n=\frac{\ext{\pi_i}\rho^n\ext{\pi_i}^\dagger}{p_i}\\
	  p_i=\tr(\ext{\pi_i}^\dagger\ext{\pi_i}\rho^n)
      \end{array}\right.\\
      U^m\rho^n &\rightarrow {\rho'}^n &\textrm{with }\ext{U^m}\rho^n\ext{U^m}^\dagger={\rho'}^n\\
      \rho\otimes\rho'&\rightarrow \rho'' & \text{with }\rho''=\rho\hotimes\rho'\\
      \sum_i p_i\rho_i&\rightarrow \rho' & \text{with }\rho'=\hsum_ip_i\rho_i\\
      \sum_i p_i t &\rightarrow t\\
      (\sum_i p_i t_i)r &\rightarrow \sum_i p_i (t_ir)\\
      \qletcase{x}{\sum_i p_i \hat w_i}{t_0,\dots,t_{2^m-1}}
      &\multicolumn{2}{l}{\rightarrow \sum_i p_i\,\qletcase{x}{\hat w_i}{t_0,\dots,t_{2^m-1}}}
    \end{array}
  \]
  \[
    \infer{ts\rightarrow rs}{t\rightarrow r}\quad \infer{st\rightarrow sr}{t\rightarrow r}\quad
    \infer{U^nt\rightarrow U^nr}{t\rightarrow r}\quad \infer{\pi^n t\rightarrow \pi^n r}{t\rightarrow r}\quad
    \infer{t\otimes s\rightarrow r\otimes s}{t\rightarrow r}\quad \infer{s\otimes t\rightarrow s\otimes r}{t\rightarrow r}
  \]
  \[
    \infer[\scriptstyle(\forall i\neq j, t_i=r_i)]{\sum_{i=1}^n p_it_i\rightarrow
      \sum_{i=1}^n p_ir_i} {t_j\rightarrow r_j}
    \qquad
    \infer {\qletcase xt{\vec s}\rightarrow\qletcase xr{\vec s}} {t\rightarrow r}
  \]
  \caption{Rewrite system of $\qlambdens$. In the last rule $\vec s$ abbreviates $s_0,\dots,s_{2^m-1}$.}
  \label{tab:TRS}
\end{table}

The type system (Table~\ref{tab:TS}) is affine: variables can be used at most
once, preventing the duplication of quantum states (respecting the no-cloning
theorem). In rules that combine two contexts $\Gamma$ and $\Delta$ (namely
$\multimap_e$ and $\otimes$), the two contexts are required to be
\emph{disjoint}, which is what enforces linearity. The type $n$ denotes an
$n$-qubit density matrix, the type $(m,n)$ denotes the result of measuring $m$
out of $n$ qubits and $A \multimap B$ is the linear function type. While
composite systems can be introduced through the $\otimes$ rule, there is no
corresponding elimination rule.

\begin{table}[t]
  \centering
  \[
    A:= n\mid (m,n)\mid A\multimap A \qquad\text{where } m\leq n\in\mathbb N.
  \]
  \[
    \infer[\mathsf{ax}]{\Gamma,x:A\vdash x:A}{} \quad
    \infer[\multimap_i]{\Gamma\vdash\lambda x.t:A\multimap B}{\Gamma,x:A\vdash
      t:B} \quad \infer[\multimap_e]{\Gamma,\Delta\vdash tr:B}{\Gamma\vdash
      t:A\multimap B & \Delta\vdash r:A}
  \]
  \[
    \infer[\mathsf{ax}_\rho]{\Gamma\vdash\rho^n:n}{} \quad
    \infer[\mathsf{u}]{\Gamma\vdash U^mt:n}{\Gamma\vdash t:n} \quad
    \infer[\mathsf{m}]{\Gamma\vdash\pi^mt:(m,n)}{\Gamma\vdash t:n} \quad
    \infer[\otimes]{\Gamma,\Delta\vdash t\otimes r:n+m}{\Gamma\vdash t:n &
      \Delta\vdash r:m}
  \]
  \[
    \infer[\mathsf{lc}] {\Gamma\vdash\qletcase xr{t_0,\dots,t_{2^m-1}}:A}
    {x:n\vdash t_0:A & \dots & x:n\vdash t_{2^m-1}:A & \Gamma\vdash r:(m,n) }
  \]
  \[
    \infer[+]{\Gamma\vdash\textstyle\sum_{i=1}^n p_i t_i:A} { \Gamma\vdash t_1:A &\dots&
      \Gamma\vdash t_n:A & \sum_{i=1}^np_i=1 }
  \]
  \caption{Type system for $\qlambdens$.}
  \label{tab:TS}
\end{table}

\section{The Compositional Extension}
\label{sec:extension}

We first develop the Pauli spectral decomposition machinery needed for the new
construct, then present the extended calculus.

\subsection{Pauli and Spectral Decompositions}
\label{sec:pauli}

The extension relies on combining the Pauli decomposition of density matrices
with the spectral decomposition of the Pauli operators. We begin by recalling
both constructions.
The Pauli matrices, together with the
identity, form a basis for the space of $2\times 2$ Hermitian
matrices~\cite[\S2.1.3]{NielsenChuang10}:
\[
  I = \begin{pmatrix} 1 & 0 \\ 0 & 1 \end{pmatrix}, \;
  X = \begin{pmatrix} 0 & 1 \\ 1 & 0 \end{pmatrix}, \;
  Y = \begin{pmatrix} 0 & -i \\ i & 0 \end{pmatrix}, \;
  Z = \begin{pmatrix} 1 & 0 \\ 0 & -1 \end{pmatrix}.
\]

\begin{definition}[Pauli operator basis]
  \label{def:pauli-basis}
  Let $n \in \mathbb{N}^*$. The Pauli operator basis of size $n$ is the set
  $\mathcal{P}_n = \{ \bigotimes_{i=1}^n M_i \mid M_i \in \{I, X, Y, Z\}\}$.
\end{definition}

The set $\mathcal{P}_n$ contains $4^n$ elements and forms an orthogonal basis of
$\mathbb{C}^{2^n \times 2^n}$ with respect to the Hilbert--Schmidt inner product
$\langle A, B \rangle = \tr(A^\dagger B)$~\cite{KoskaBaboulinGazda24}. Hence any
density matrix $\rho$ can be written as
\begin{equation}\label{eq:pauli-decomp}
  \rho = \sum_{P_i \in \mathcal{P}_n} \alpha_{i} P_i, \qquad
  \alpha_{i} = \tfrac{1}{2^n} \tr\!\big( P_i \rho \big),
\end{equation}
with $\alpha_{i} \in \mathbb{R}$ (real because $\rho$ and each $P_i$ are
Hermitian).

\begin{example}[Pauli decomposition of the Bell state]
  \label{ex:bell-pauli}
  Since $n = 2$, the Pauli basis $\mathcal{P}_2$ has $16$ elements
  $M_1 \otimes M_2$. The only non-zero coefficients
  are
  \begin{equation}\label{eq:bell-alphas}
    \alpha_{II} = \tfrac{1}{4},\quad
    \alpha_{XX} = \tfrac{1}{4},\quad
    \alpha_{YY} = -\tfrac{1}{4},\quad
    \alpha_{ZZ} = \tfrac{1}{4}.
  \end{equation}
  The corresponding Pauli decomposition is therefore
  \begin{equation}\label{eq:bell-pauli}
    \beta_{00} = \tfrac{1}{4}(I \otimes I + X \otimes X - Y \otimes Y + Z \otimes Z).
  \end{equation}
\end{example}

To obtain a decomposition into tensor products of single-qubit density
matrices, we must further decompose the Pauli operators appearing in
Equation~\eqref{eq:pauli-decomp}. For this purpose we use the
spectral decomposition theorem.

\begin{theorem}[Spectral decomposition]
  \label{thm:spectral}
  Let $A$ be a normal matrix of dimension $d \times d$. Then $A = \sum_{i=1}^{d}
  \lambda_i \ketbra{v_i}{v_i}$, where $\lambda_1, \dots, \lambda_d$ are the
  eigenvalues of $A$ (with multiplicity) and $\ket{v_1}, \dots, \ket{v_d}$ form
  an orthonormal basis of eigenvectors.
\end{theorem}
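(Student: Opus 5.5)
The plan is the standard one: first unitary triangularisation (Schur), then showing that a normal triangular matrix is diagonal. Throughout, $A$ is a $d\times d$ complex normal matrix, i.e.\ $AA^\dagger = A^\dagger A$.

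\textbf{Step 1 (Schur decomposition).} We show by induction on $d$ that there is a unitary $Q$ with $A = QTQ^\dagger$ and $T$ upper triangular.
\begin{itemize}
\item The characteristic polynomial of $A$ has a root $\lambda_1$ over $\mathbb{C}$, by the fundamental theorem of algebra. Choose a unit eigenvector $\ket{v_1}$ for it.
\item Extend $\ket{v_1}$ to an orthonormal basis of $\mathbb{C}^d$ using Gram--Schmidt, and let $Q_1$ be the unitary whose columns are this basis.
\item Then the first column of $Q_1^\dagger A Q_1$ is $(\lambda_1,0,\dots,0)^T$, so
\[
Q_1^\dagger A Q_1 = \begin{pmatrix}\lambda_1 & w^\dagger\\ 0 & A'\end{pmatrix}
\]
with $A'$ of size $(d-1)\times(d-1)$.
\item Apply the induction hypothesis to $A'$ and conjugate by the block unitary $1\oplus Q'$.
\end{itemize}

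\textbf{Step 2 ($T$ is diagonal).} This is the step where normality is used, and I expect it to be the main point of the argument.
\begin{itemize}
\item Normality is preserved by unitary conjugation, since $TT^\dagger = Q^\dagger AA^\dagger Q = Q^\dagger A^\dagger A Q = T^\dagger T$.
\item Compare the $(1,1)$ entries of $TT^\dagger$ and $T^\dagger T$. On one side, $(TT^\dagger)_{11} = \sum_j |T_{1j}|^2$. On the other, $(T^\dagger T)_{11} = |T_{11}|^2$, because the first column of $T$ has only its top entry nonzero.
\item Equating them gives $T_{1j}=0$ for all $j>1$.
\item The trailing block is again upper triangular and normal, so the same argument applies to it (or induct on $k$, comparing $(k,k)$ entries once the earlier rows are known to vanish off the diagonal). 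Hence $T$ is diagonal.
\end{itemize}

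\textbf{Step 3 (reading off the statement).} Write $T=\mathrm{diag}(\lambda_1,\dots,\lambda_d)$ and let $\ket{v_i}$ be the columns of $Q$. These form an orthonormal basis because $Q$ is unitary, and
\[
A = QTQ^\dagger = \sum_i \lambda_i \ketbra{v_i}{v_i}.
\]
It remains to check the two claims in the statement.
\begin{itemize}
\item \emph{Eigenvectors:} $A\ket{v_j} = \lambda_j\ket{v_j}$, since the $\ket{v_i}$ are orthonormal.
\item \emph{Eigenvalues with multiplicity:} the $\lambda_i$ are exactly the eigenvalues of $A$ counted with multiplicity, because the characteristic polynomial of $A$ equals that of $T$, which is $\prod_i (x-\lambda_i)$.
\end{itemize}

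For the Pauli matrices used later, these are Hermitian and hence normal. The decomposition can also be checked directly. For example, $X = \ketbra{+}{+} - \ketbra{-}{-}$, and $Y$ and $Z$ decompose similarly with eigenvalues $\pm1$. The general theorem is needed only for the abstract statement.
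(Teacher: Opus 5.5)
Your proof is correct. It takes a genuinely different route from the paper, but only because the paper does not prove this theorem at all. It quotes it as a standard fact, and the usual textbook proof (Nielsen--Chuang) inducts on the dimension instead: it splits off an eigenspace projector $P$ and uses normality to get $A = PAP + (I-P)A(I-P)$, with the second summand again normal.

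The only related argument in the paper is Appendix~\ref{app:pauli-spectral}. There the four instances actually needed, $I, X, Y, Z$, are checked by hand: compute the characteristic polynomials, solve for eigenvectors, and multiply out $\sum_i \lambda_i \ketbra{v_i}{v_i}$. Your closing remark sketches essentially this.

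Your Schur-triangularisation route is a valid self-contained proof of the general statement:
\begin{itemize}
\item Step~1 is the standard induction.
\item The $(1,1)$-entry comparison in Step~2 is exactly where normality enters.
\item Step~3 reads off the algebraic multiplicities cleanly from $\det(xI-A)=\det(xI-T)$.
\end{itemize}
The one line worth writing out is why the trailing block stays normal. Once the first row vanishes off the diagonal, $T$ is block diagonal, so $TT^\dagger = T^\dagger T$ splits blockwise. Your alternative $(k,k)$-entry induction avoids needing this.

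The two routes buy different things. Yours establishes the abstract theorem for arbitrary normal $A$. The paper's explicit computation is what its construction actually depends on, because the $\mathsf{let}$ rule needs \emph{specific} eigenprojectors $\rho^{l}_{i_k}$, and the abstract theorem does not determine them. For $I$, any orthonormal basis gives a valid decomposition, so the choice $\{\ket{0},\ket{1}\}$ is a convention fixed by hand.
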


\subsection{Spectral Decomposition of the Pauli Matrices}
\label{sec:pauli-spectral}

By Theorem~\ref{thm:spectral}, each Pauli matrix can be written as a linear
combination of its eigenprojectors. For the Pauli matrices, these
decompositions are
\begin{alignat*}{2}
  I &= \ketbra{0}{0} + \ketbra{1}{1}, &\qquad
  Z &= \ketbra{0}{0} - \ketbra{1}{1}, \\
  X &= \ketbra{+}{+} - \ketbra{-}{-}, &\qquad
  Y &= \ketbra{i}{i} - \ketbra{-i}{-i},
\end{alignat*}
where
\[
  \ket{\pm} = \frac{1}{\sqrt{2}}(\ket{0} \pm \ket{1}),
  \qquad
  \ket{\pm i} = \frac{1}{\sqrt{2}}(\ket{0} \pm i\ket{1}).
\]

Observe that every projector appearing above is a valid single-qubit density
matrix. More generally, each Pauli matrix $M$ admits a decomposition
\[
  M
  =
  \lambda_{1} \rho_{1}
  +
  \lambda_{2} \rho_{2},
\]
where $\rho_1,\rho_2\in\D[1]$ are rank-one density matrices and
$\lambda_1,\lambda_2\in\{-1,+1\}$ are the corresponding eigenvalues.

\subsection{Combined Decomposition}
\label{sec:combined}
We now combine the Pauli decomposition with the spectral decomposition of the
Pauli matrices.

\begin{proposition}[Decomposition into single-qubit density matrices]
  \label{prop:combined}
  Every density matrix $\rho\in\D[n]$ admits a decomposition of the form
  \begin{equation}
    \rho
    =
    \sum_{i=1}^{4^n}
    \sum_{\vec l\in\{1,2\}^n}
    p_{i\vec l}
    \bigotimes_{k=1}^n \rho_{i_k}^{l_k},
    \qquad
    p_{i\vec l}
    =
    \alpha_i
    \prod_{k=1}^n \lambda_{i_k}^{l_k},
    \label{eq:full-decomp}
  \end{equation}
  where $\alpha_i$ is the coefficient of the $i$-th Pauli operator
  $P_i=M_{i_1}\otimes\cdots\otimes M_{i_n}$ in the Pauli
  decomposition~\eqref{eq:pauli-decomp} of $\rho$, and $\lambda_{i_k}^{l_k}$
  and $\rho_{i_k}^{l_k}$ are, respectively, the eigenvalues in $\{-1,+1\}$ and
  the corresponding rank-one eigenprojectors of the $k$-th factor $M_{i_k}$
  of $P_i$, as given in Section~\ref{sec:pauli-spectral}. In particular, each
  $\rho_{i_k}^{l_k}\in\D[1]$ is a single-qubit density matrix.
\end{proposition}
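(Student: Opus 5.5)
The plan is to substitute the spectral decompositions of Section~\ref{sec:pauli-spectral} into the Pauli decomposition~\eqref{eq:pauli-decomp}. Then we expand the resulting tensor products using multilinearity of $\otimes$. No analytic estimate is needed; the whole argument is bookkeeping of indices.

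\textbf{Step 1.} Fix $\rho\in\D[n]$. By Equation~\eqref{eq:pauli-decomp}, write $\rho=\sum_{i=1}^{4^n}\alpha_i P_i$ with $\alpha_i\in\mathbb R$. Here $P_i=M_{i_1}\otimes\cdots\otimes M_{i_n}$ ranges over $\mathcal P_n$, indexed by some fixed enumeration.

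\textbf{Step 2.} For each factor $M_{i_k}\in\{I,X,Y,Z\}$, take the decomposition $M_{i_k}=\lambda_{i_k}^{1}\rho_{i_k}^{1}+\lambda_{i_k}^{2}\rho_{i_k}^{2}$ given in Section~\ref{sec:pauli-spectral}, which is an instance of Theorem~\ref{thm:spectral}. Explicitly:
\begin{itemize}
\item for $I$, the eigenvalues are $\lambda^1=\lambda^2=1$ and the projectors are $\ketbra00,\ketbra11$;
\item for $Z$, the projectors are $\ketbra00,\ketbra11$ with eigenvalues $1,-1$;
\item for $X$, the projectors are $\ketbra{+}{+},\ketbra{-}{-}$ with eigenvalues $1,-1$;
\item for $Y$, the projectors are $\ketbra{i}{i},\ketbra{-i}{-i}$ with eigenvalues $1,-1$.
\end{itemize}
Each $\rho_{i_k}^{l}$ is $\ketbra{v}{v}$ for a unit vector $v$. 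It is therefore positive semidefinite with trace $1$, hence lies in $\D[1]$, which gives the last claim of the proposition.

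\textbf{Step 3.} Substitute into $P_i$ and expand by multilinearity of the Kronecker product:
\[
\bigotimes_{k=1}^n\Bigl(\sum_{l_k\in\{1,2\}}\lambda_{i_k}^{l_k}\rho_{i_k}^{l_k}\Bigr)=\sum_{\vec l\in\{1,2\}^n}\Bigl(\prod_{k=1}^n\lambda_{i_k}^{l_k}\Bigr)\bigotimes_{k=1}^n\rho_{i_k}^{l_k}.
\]
This can be proved by induction on $n$ using bilinearity of $A\otimes B$. Multiplying by $\alpha_i$ and summing over $i$ yields~\eqref{eq:full-decomp} with $p_{i\vec l}=\alpha_i\prod_k\lambda_{i_k}^{l_k}$.

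The only step requiring any care is Step 3, the expansion of the tensor product of sums into a sum over $\vec l\in\{1,2\}^n$. It is handled by a routine induction on $n$. We will also remark that the $p_{i\vec l}$ are real but possibly negative, since $\lambda\in\{-1,+1\}$ and $\alpha_i\in\mathbb R$. The statement claims only a real linear combination, so no positivity needs to be established.
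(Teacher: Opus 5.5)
Your proposal is correct and takes the same approach as the paper's proof: you substitute the two-term spectral decomposition of each Pauli factor into the Pauli decomposition~\eqref{eq:pauli-decomp}, then expand the tensor product by multilinearity. You also spell out details the paper leaves implicit, namely the induction on $n$ for the expansion and the check that each rank-one projector $\ketbra{v}{v}$ lies in $\D[1]$.
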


\begin{proof}
  Let
  \[
    P_i=M_{i_1}\otimes\cdots\otimes M_{i_n}
  \]
  be an element of the Pauli basis. By the spectral decomposition of each
  factor,
  \[
    M_{i_k}
    =
    \lambda_{i_k}^1\rho_{i_k}^1
    +
    \lambda_{i_k}^2\rho_{i_k}^2.
  \]
  Expanding the tensor product and substituting the resulting expression into
  Equation~\eqref{eq:pauli-decomp} yields the claimed decomposition.
  \qed
\end{proof}

The coefficients $p_{i\vec l}$ are not necessarily non-negative, so
Equation~\eqref{eq:full-decomp} is generally not a convex
combination.  Nevertheless, evaluating the trace of
Equation~\eqref{eq:full-decomp} yields
\[
  \sum_{i,\vec l} p_{i\vec l}
  =
  \tr(\rho)
  =
  1.
\]
Indeed, each tensor product $\bigotimes_k \rho_{i_k}^{l_k}$ has trace one.  The
individual terms $\bigotimes_k \rho_{i_k}^{l_k}$ are genuine density matrices;
the non-classical behaviour of entangled states is encoded in the coefficients,
which may be negative.

The decomposition into single-qubit density matrices is the key mathematical
tool underlying our $\mathsf{let}$ construct.  It provides a canonical way to
represent any multi-qubit state as a real linear combination of tensor products
of single-qubit density matrices, allowing the individual components to be
manipulated independently within a program.

\begin{example}[combined decomposition of the Bell state]
  \label{ex:bell-combined}
  Continuing from Example~\ref{ex:bell-pauli}, only the four pairs $II$, $XX$,
  $YY$, $ZZ$ carry non-zero $\alpha_i$ (Equation~\eqref{eq:bell-alphas}), so the sum has
  $4 \times 4 = 16$ non-zero terms. Expanding each, the full combined
  decomposition is
  \begin{align}
    &\beta_{00}\nonumber\\
    &=\tfrac{1}{4}\ketbra{0}{0} \otimes \ketbra{0}{0}
    + \tfrac{1}{4}\ketbra{0}{0} \otimes \ketbra{1}{1}
    + \tfrac{1}{4}\ketbra{1}{1} \otimes \ketbra{0}{0}
    + \tfrac{1}{4}\ketbra{1}{1} \otimes \ketbra{1}{1} \nonumber\\
    &+ \tfrac{1}{4}\ketbra{+}{+} \otimes \ketbra{+}{+}
    - \tfrac{1}{4}\ketbra{+}{+} \otimes \ketbra{-}{-}
    - \tfrac{1}{4}\ketbra{-}{-} \otimes \ketbra{+}{+}
    + \tfrac{1}{4}\ketbra{-}{-} \otimes \ketbra{-}{-} \nonumber\\
    &- \tfrac{1}{4}\ketbra{i}{i} \otimes \ketbra{i}{i}
    + \tfrac{1}{4}\ketbra{i}{i} \otimes \ketbra{-i}{-i}
    + \tfrac{1}{4}\ketbra{-i}{-i} \otimes \ketbra{i}{i}
    - \tfrac{1}{4}\ketbra{-i}{-i} \otimes \ketbra{-i}{-i} \nonumber\\
    &+ \tfrac{1}{4}\ketbra{0}{0} \otimes \ketbra{0}{0}
    - \tfrac{1}{4}\ketbra{0}{0} \otimes \ketbra{1}{1}
    - \tfrac{1}{4}\ketbra{1}{1} \otimes \ketbra{0}{0}
    + \tfrac{1}{4}\ketbra{1}{1} \otimes \ketbra{1}{1}.
    \label{eq:bell-full-decomp}
  \end{align}
  Some tensor products appear more than once in the expansion because they arise
  from different Pauli terms.
  We deliberately keep the decomposition in its
  expanded form to make the contribution of each Pauli component explicit.

  Observe that all terms are tensor products of single-qubit density matrices,
  while the correlations of $\beta_{00}$ are reflected in the pattern of
  positive and negative coefficients.
\end{example}

\begin{remark}[Size of the decomposition]\label{rmk:size}
  The sum in Equation~\eqref{eq:full-decomp} may, in the worst
  case, contain up to $8^n$ summands, although only those Pauli
  strings $P_i$ with non-zero coefficient $\alpha_i$ contribute --- for the Bell
  state, only $4$ of the $16$ Pauli pairs are non-zero, giving the $16$ terms of
  Example~\ref{ex:bell-combined}. This exponential growth is not introduced by the $\mathsf{let}$
  construct: an $n$-qubit density matrix is already represented by a $2^n\times
  2^n$ matrix, and the decomposition merely re-expresses this existing
  information as a sum over single-qubit components rather than as the entries
  of a single $2^n \times 2^n$ matrix.
  This behaviour is therefore inherent to the representation of quantum states
  rather than a consequence of the $\mathsf{let}$ construct itself.
  $\lambda^{\circ}_{\rho}$ is intended as a calculus for reasoning
  about and verifying quantum programs, where a term carries its computation
  state for the purposes of equational and denotational reasoning, rather than as
  an executable model for efficient classical simulation;
  a faithful classical representation of an entangled $n$-qubit state may incur
  an exponential cost, with or without the $\mathsf{let}$ construct.
\end{remark}

\subsection{Extended Grammar, Rewrite System, and Type System}
\label{sec:extended}

We now introduce the construct enabled by the decomposition developed in the
previous subsections. We extend the grammar of $\qlambdens$ with a new term
$\letxn{x}{\rho}{t}$, which binds $n$ variables $x_1, \dots, x_n$, each of type
$1$ (single-qubit density matrix), within the body $t$. The superscript
$\otimes n$ in $x^{\otimes n}$ indicates that $x$ is being decomposed into $n$
tensor components.

We also generalise the construct $\sum_i p_i t_i$ from probabilistic mixtures
to finite real linear combinations whose coefficients sum to one, but are not
required to be positive. This extension is necessary because the decomposition
of Proposition~\ref{prop:combined} may involve negative coefficients.

The new reduction rules for the $\mathsf{let}$ construct are given in
Table~\ref{tab:LetTRS}.

\begin{table}[t]
  \centering
  \[
    \letxn{x}{\rho^n}{s} \rightarrow \sum_{i=1}^{4^n} \sum_{\vec{l} \in \{1,2\}^n} p_{i\vec{l}} \, s[\rho_{i_{1}}^{l_1}/x_1, \cdots, \rho_{i_{n}}^{l_n}/x_n]
  \]
  \[
    \infer {\letxn{x}{t}{s} \rightarrow \letxn{x}{r}{s}} {t\rightarrow r}
  \]
  where $\rho = \hsum_{i=1}^{4^n} \hsum_{\vec{l} \in \{1,2\}^n} p_{i\vec{l}} \hbigotimes_{k=1}^n \rho_{i_k}^{l_k}$ as in Proposition~\ref{prop:combined}.
  \caption{Rewrite rules for the $\mathsf{let}$ construct.}
  \label{tab:LetTRS}
\end{table}

The reduction works as follows: when the source term is a density matrix $\rho^n$,
its decomposition given by Proposition~\ref{prop:combined} is computed.
Each tensor product $\bigotimes_k\rho_{i_k}^{l_k}$ induces a simultaneous
substitution of the components $\rho_{i_k}^{l_k}$ for the variables $x_k$ in
the body. The results are then recombined according to the coefficients
$p_{i\vec l}$.

The type system is extended with a single new rule:
\[
  \infer[\mathsf{let}]{\Gamma, \Delta \vdash \mathsf{let}\ x^{\otimes n} = t\ \mathsf{in}\ s : A}
  {\Gamma \vdash t : n \quad \Delta, x_1:1, \ldots, x_n:1 \vdash s : A}
\]
The rule requires that $t$ has type $n$, and types the body $s$ under
assumptions corresponding to the $n$ single-qubit components of the
decomposition. The contexts $\Gamma$ and $\Delta$ are split, maintaining the
affine discipline.

\section{Denotational Semantics}
\label{sec:semantics}

A \emph{completely positive map} (CPM) is a linear map between spaces of
operators that preserves positivity under extension by arbitrary ancillary
systems. CPMs are the standard model for physically realisable quantum
operations~\cite[Ch.~8]{NielsenChuang10}. We write
$\CPM(\D, \D[m])$ for the set of CPMs from $\D$ to
$\D[m]$. Types are interpreted as
\begin{align*}
  \tsem{n} &= \D[n] &
  \tsem{(m,n)} &= \D[n] &
  \tsem{A \multimap B} &= \CPM(\tsem{A}, \tsem{B}).
\end{align*}

The interpretation is defined relative to a typing judgement
$\Gamma \vdash t : A$ and a valuation $\theta \vDash \Gamma$, that is,
$\theta(x)\in\tsem{A}$ for every assumption $x:A\in\Gamma$.
Given such a valuation, the interpretation of $\qlambdens$ terms is
\begin{align*}
  \fsem{x} &= \theta(x)
  \\
  \fsem{\lambda x.t} &= \rho \mapsto \fsem[\theta,x\mapsto\rho]{t}
  \\
  \fsem{t\ r} &= \fsem{t}(\fsem{r})
  \\
  \fsem{\rho^n} &= \rho^n
  \\
  \fsem{U^m t} &= \ext{U^m}\,\fsem{t}\,\ext{U^m}^{\dagger}
  \\
  \fsem{\pi^m t} &= \sum_{i=0}^{2^m-1} \ext{\pi_i}\,\fsem{t}\,\ext{\pi_i}^{\dagger}
  \\
  \fsem{t\otimes r} &= \fsem{t}\otimes\fsem{r}
  \\
  \fsem{\sum_i p_i t_i} &= \sum_i p_i\,\fsem{t_i}
  \\
  \fsem{\qletcase{x}{r}{t_0,\ldots,t_{2^m-1}}} &= \sum_{i=0}^{2^m-1} \tr(\ext{\pi_i}^{\dagger}\ext{\pi_i}\,\fsem{r})\, \fsem[\theta,x\mapsto\frac{\ext{\pi_i}\,\fsem{r}\,\ext{\pi_i}^{\dagger}}{\tr(\ext{\pi_i}^{\dagger}\ext{\pi_i}\,\fsem{r})}]{t_i}
  \\
  \fsem{\letxn{x}{t}{s}}
  &=
  \sum_{i=1}^{4^n}
  \sum_{\vec l\in\{1,2\}^n}
  p_{i\vec l}\,
  \fsem[\theta,
        x_1\mapsto\rho_{i_1}^{l_1},
        \ldots,
        x_n\mapsto\rho_{i_n}^{l_n}]{s}.
\end{align*}
where, for the interpretation of the $\mathsf{let}$ construct,
$\fsem{t}$ is decomposed as in Proposition~\ref{prop:combined} as
\[
  \fsem{t}
  =
  \sum_{i=1}^{4^n}
  \sum_{\vec l\in\{1,2\}^n}
  p_{i\vec l}
  \bigotimes_{k=1}^n \rho_{i_k}^{l_k}
\]

This interpretation is well defined, since every density matrix admits such a
decomposition.

\section{Properties}
\label{sec:properties}

We show that discarding a qubit through the affine type system with the
$\mathsf{let}$ construct is equivalent to tracing it out, and that the extended
calculus preserves the fundamental properties of Subject Reduction, Progress,
Strong Normalisation, Soundness, and Adequacy.
Full proofs are given in the appendix.

\subsection{Discard as Partial Trace}
\label{sec:discard}

We first recall the partial trace, the physical operation of discarding a
subsystem. The \emph{partial trace over $A$} is the unique linear map $\tr_A$
defined on product operators by $\tr_A(\ketbra{a_1}{a_2} \otimes
\ketbra{b_1}{b_2}) = \braket{a_2}{a_1}\, \ketbra{b_1}{b_2}$; the partial trace
over subsystem $k$ acts as $\tr$ on the $k$-th tensor factor and as the identity
on all others~\cite[\S2.4.3]{NielsenChuang10}. On product states
$\tr_A(\rho^A \otimes \rho^B) = \rho^B$, but for entangled states it produces a
mixed state even when the global state is pure.

\begin{example}[partial trace of the Bell state]
  \label{ex:bell-tr}
  Tracing out the first qubit of the Bell state $\beta_{00}$ gives the
  maximally mixed state, reflecting that no information about the second qubit
  can be obtained without access to the first:
  \[
    \tr_1(\beta_{00})
    = \tfrac{1}{2}(\ketbra{0}{0} + 0 + 0 + \ketbra{1}{1})
    = \tfrac{1}{2}\ketbra{0}{0} + \tfrac{1}{2}\ketbra{1}{1}
    = \tfrac{I}{2}.
  \]
\end{example}

The relevance of the partial trace here is physical. Just as quantum
information cannot be cloned, it cannot be deleted~\cite{PatiBraunstein00}.

Any operation that appears to delete a qubit must, in reality, hide its
information somewhere else (e.g.\ in the environment); the physical operation
corresponding to \emph{ignoring} a subsystem is precisely the partial trace,
which marginalises over it. A language construct that ``discards'' a qubit
must therefore, for physical faithfulness, denote a partial trace rather than a
literal deletion. We now show that the $\mathsf{let}$ construct does exactly
this: \emph{ignoring} a component in the body (permitted by the affine type
system) is not merely a syntactic convenience, but is computationally
equivalent to first computing the reduced density matrix of the remaining qubits
via the partial trace, and then decomposing that reduced state.

\begin{proposition}[Discard as Partial Trace]
  \label{prop:discard}
  Let $\rho^n$ be an $n$-qubit density matrix with $n \geq 2$, and let $s$ be a
  term with $x_2:1,\ldots,x_n:1 \vdash s : A$ (so, by affinity, $x_1 \notin
  FV(s)$). Then
  \[
    \letxn{x}{\rho^n}{s}
    \;\mathrel{\downarrow}\;
    \letxk{y}{n-1}{\tr_1(\rho^n)}{s[y_1/x_2, \ldots, y_{n-1}/x_n]}.
  \]
  The same holds for discarding any other qubit position $k$ by relabeling.
\end{proposition}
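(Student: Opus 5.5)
The plan is to reduce both sides one step with the $\mathsf{let}$ rule of Table~\ref{tab:LetTRS} and then show that the two resulting linear combinations can be brought to a common term by regrouping summands. The right-hand side is typable: $\tr_1(\rho^n)\in\D[n-1]$ and $s[\vec y/\vec x]$ is well typed under $y_1,\dots,y_{n-1}:1$. Since $\tr_1(\rho^n)$ is a density matrix, the rule fires on both sides.

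\textbf{Step 1: reduce both sides.} The left-hand side becomes
\[
\sum_{i,\vec l} p_{i\vec l}\, s[\rho_{i_1}^{l_1}/x_1,\dots,\rho_{i_n}^{l_n}/x_n].
\]
Because $x_1\notin FV(s)$, each summand equals $s[\rho_{i_2}^{l_2}/x_2,\dots,\rho_{i_n}^{l_n}/x_n]$. This term depends only on $i'=(i_2,\dots,i_n)$ and $\vec l'=(l_2,\dots,l_n)$, and not on $i_1$ or $l_1$.

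\textbf{Step 2: collect the coefficient of each such term.} Write $\alpha_{M,P'}$ for the Pauli coefficient of $M\otimes P'$, where $M$ is a single-qubit Pauli matrix and $P'\in\mathcal P_{n-1}$. Summing over $M$ and $l_1$, the coefficient of the summand indexed by $(i',\vec l')$ is
\[
\sum_{M}\alpha_{M,P'}\Big(\sum_{l_1}\lambda_M^{l_1}\Big)\prod_{k\ge 2}\lambda_{i_k}^{l_k}.
\]
The inner sum is $\lambda_M^1+\lambda_M^2=\tr(M)$, read off from the spectral decompositions in Section~\ref{sec:pauli-spectral}. It equals $2$ for $M=I$ and $0$ for $M\in\{X,Y,Z\}$. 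So the collected coefficient is $2\alpha_{I,P'}\prod_{k\ge2}\lambda_{i_k}^{l_k}$.

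\textbf{Step 3: compare with the right-hand side.} By \eqref{eq:pauli-decomp}, the Pauli coefficient of $P'$ in $\tr_1(\rho^n)$ is
\[
\tfrac{1}{2^{n-1}}\tr\big(P'\,\tr_1(\rho^n)\big)=\tfrac{1}{2^{n-1}}\tr\big((I\otimes P')\rho^n\big)=2\alpha_{I,P'},
\]
using the defining property of the partial trace on product operators. Moreover, the decomposition of Proposition~\ref{prop:combined} uses the same fixed spectral decomposition for each factor. Hence the right-hand side reduces to exactly
\[
\sum_{i',\vec l'}2\alpha_{I,P'}\prod_{k\ge 2}\lambda_{i_k}^{l_k}\; s[\rho_{i_2}^{l_2}/y_1,\dots,\rho_{i_n}^{l_n}/y_{n-1}][\ldots],
\]
whose summands coincide syntactically with those obtained in Step~2, by the renaming $y_{k-1}=x_k$. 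The case of another position $k$ follows by relabeling the tensor factors.

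\textbf{Main obstacle.} The difficulty is purely syntactic. The left-hand reduct contains several copies of the same term $t$ whose coefficients must be added. Some groups, namely those with $M\neq I$, have coefficients summing to $0$ and must vanish entirely. The rewrite system only offers $\sum_i p_i t\to t$ (all summands equal) together with associativity and commutativity. Merging a sub-collection $p\,t+q\,t$ inside a larger sum, and deleting zero-weight groups, is not directly a rewrite step.

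I would first try to treat the real linear combinations of the extended grammar as formal linear combinations, in which equal summands are identified and their coefficients added. This is the reading needed anyway for $\sum$ with arbitrary real coefficients. Under it the two reducts are the same term, so they are joinable in zero further steps.

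If that reading is not available, the fallback is to use Strong Normalisation and Progress. Each summand reduces to a value, and when $A$ is a qubit type the outer sums collapse through the rule $\sum_i p_i\rho_i\to\hsum_i p_i\rho_i$. On matrices the regrouping of Step~2 is just an identity of linear algebra, so both sides reach the same density matrix. For function types this fallback is weaker. There I expect the genuine subtlety to lie, and I would fall back on the formal-sum convention.
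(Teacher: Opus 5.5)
Your proposal is correct and follows the paper's own argument. You fire the $\mathsf{let}$ rule on both sides, use $x_1\notin FV(s)$ to drop the first substitution, regroup the left-hand coefficients, and match them against the combined decomposition of $\tr_1(\rho^n)$. Your explicit checks that $\lambda_M^1+\lambda_M^2=\tr M$ (so only $M=I$ survives) and that the Pauli coefficient of $P'$ in $\tr_1(\rho^n)$ is $2\alpha_{I,P'}$ supply the justification the paper omits when it asserts that the regrouped sum is ``precisely the combined decomposition''. The paper also silently uses the same formal-sum regrouping of equal summands that you flag as the main obstacle.
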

\begin{proof}
  (\emph{Sketch.}) Let $\rho = \sum_{i,\vec{l}} p_{i\vec{l}}\,
  \bigotimes_{k=1}^{n} \rho_{i_k}^{l_k}$ be the
  decomposition into single-qubit density matrices given by Proposition~\ref{prop:combined}.
  Applying the main $\mathsf{let}$ rule and
  using $x_1 \notin FV(s)$, the left-hand side reduces to
  \[
    \sum_{i,\,l_2,\ldots,l_n}(\sum_{l_1} p_{i\vec{l}})\,
    s[\rho_{i_2}^{l_2}/x_2,\ldots,\rho_{i_n}^{l_n}/x_n].
  \]
  On the other hand, by
  linearity of $\tr_1$ and $\tr(\rho_{i_1}^{l_1}) = 1$,
  \[
    \tr_1(\rho^n) = \sum_{i,\,l_2,\ldots,l_n}
    \big(\textstyle\sum_{l_1} p_{i\vec{l}}\big)\;
    \bigotimes_{k=2}^{n} \rho_{i_k}^{l_k},
  \]
  which is precisely the combined decomposition of $\tr_1(\rho^n)$. Reducing the
  right-hand side with this decomposition yields the same sum. Both sides reduce
  to a common term. \qed
\end{proof}

\begin{remark}
  This connects affine variable discard to the no-deleting theorem: when a
  variable $x_k$ is not mentioned in the body of a $\mathsf{let}$, the
  reduction silently sums out its contribution exactly as the partial trace
  does in quantum mechanics. The affine type system thus enforces the
  quantum-mechanical principle that ignoring a subsystem is equivalent to
  computing its reduced state.

  In particular, affine discard is not merely admissible but has the exact
  physical interpretation prescribed by quantum mechanics.

  The property generalises to the simultaneous discard of any subset of
  qubits, since partial traces over disjoint subsystems commute.
\end{remark}

\begin{example}[discard computes the partial trace]
  \label{ex:bell-discard}
  Consider $\letxk{x}{2}{\beta_{00}}{x_2}$. Since $x_1 \notin FV(x_2)$, the
  first qubit variable is discarded by the affine type system, so we expect the
  result to be $\tr_1(\beta_{00})$. Applying the $\mathsf{let}$ rule with the
  combined decomposition~(\ref{eq:bell-full-decomp}) and keeping only the second
  tensor factor, the terms from $X \otimes X$, $Y \otimes Y$, and $Z \otimes Z$
  cancel pairwise; only $I \otimes I$ contributes, so
  \[
    \letxk{x}{2}{\beta_{00}}{x_2}
    \;\rightarrow\;
    \tfrac{1}{2}\ketbra{0}{0} + \tfrac{1}{2}\ketbra{1}{1}
    \;\rightarrow\; \tfrac{I}{2} = \tr_1(\beta_{00}),
  \]
  coinciding with Example~\ref{ex:bell-tr}, as guaranteed by
  Proposition~\ref{prop:discard}.
\end{example}

\needspace{8\baselineskip}
\subsection{Subject Reduction, Progress, and Strong Normalisation}

\begin{theorem}[Subject Reduction]
  \label{thm:SR}
  If $\Gamma\vdash t:A$ and $t\rightarrow r$, then $\Gamma\vdash r:A$.
\end{theorem}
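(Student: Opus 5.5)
The proof goes by induction on the derivation of $t\rightarrow r$, with a case analysis on the last rewrite rule used. It needs three auxiliary lemmas, each proved by induction on typing derivations.

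\emph{Weakening:} if $\Gamma\vdash t:A$ then $\Gamma,x:B\vdash t:A$. This is immediate, since the axioms carry an arbitrary context and $\mathsf{ax}_\rho$ types $\rho^n$ in any context.

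\emph{Substitution:} if $\Gamma,x:B\vdash t:A$ and $\Delta\vdash r:B$ with $\Gamma,\Delta$ disjoint, then $\Gamma,\Delta\vdash t[r/x]:A$. For a closed density matrix this gives: if $\Gamma,x_1:1,\ldots,x_n:1\vdash s:A$ then $\Gamma\vdash s[\rho_1/x_1,\ldots,\rho_n/x_n]:A$, by iterating with $\Delta=\emptyset$ and $\vdash\rho_k:1$ via $\mathsf{ax}_\rho$. The substitution lemma needs new cases for the $\mathsf{let}$ rule and for the generalised $\sum$. In the $\mathsf{let}$ case, $x$ lies either in the context of the decomposed term or in that of the body, never both, and Barendregt's convention keeps $x\notin\{x_1,\ldots,x_n\}$. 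In the $\sum$ case, $x$ appears in the shared context $\Gamma$ of every summand, so each $t_i$ is substituted separately and the same $\Delta$ is reused across branches. This is legitimate because the $+$ rule shares its context rather than splitting it.

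\emph{Generation/inversion:} for each term former, the shape of the premises of any derivation of it, modulo weakening.

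The old cases of $\qlambdens$ ($\beta$, $U^m\rho$, $\rho\otimes\rho'$, $\sum_i p_i\rho_i$, measurement inside $\mathsf{letcase}^\circ$, and the distributivity rules) are handled exactly as in the original proof. The reduct $\rho''$ of $U^m\rho^n$, $\rho\otimes\rho'$ or $\sum_ip_i\rho_i$ is again a density matrix of the right size: for the sum this uses $p_i\ge0$, since these are the rules carried over from the probabilistic setting. Such a reduct is typed by $\mathsf{ax}_\rho$. For $\beta$ and for $\mathsf{letcase}^\circ$ one applies inversion and then substitution: the post-measurement states $\rho_i^n$ are closed terms of type $n$, and the branches are then combined with the $+$ rule, whose coefficients $p_i$ sum to one. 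The two distributivity rules invert $+$, rebuild each summand with $\multimap_e$ or $\mathsf{lc}$, and reassemble with $+$.

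For the new main $\mathsf{let}$ rule, inversion of $\Gamma\vdash\letxn{x}{\rho^n}{s}:A$ gives $\Delta,x_1:1,\ldots,x_n:1\vdash s:A$ with $\Delta\subseteq\Gamma$. Each $\rho_{i_k}^{l_k}$ is a single-qubit density matrix by Proposition~\ref{prop:combined}, so it is typable as $1$ in the empty context. Hence each summand $s[\rho_{i_1}^{l_1}/x_1,\ldots]$ has type $A$ under $\Delta$, and by weakening under $\Gamma$. The coefficients satisfy $\sum_{i,\vec l}p_{i\vec l}=\tr(\rho)=1$, as observed after Proposition~\ref{prop:combined}, so the generalised $+$ rule applies.

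The contextual rules, including the new congruence for $\mathsf{let}$, follow directly from the induction hypothesis, with the contexts unchanged. In the sum congruence only one summand changes, so the shared context is preserved.

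The main obstacle is making the substitution lemma robust under the affine discipline combined with the shared-context $+$ rule, which the $\mathsf{let}$ reduct now produces with up to $8^n$ summands. For the generalised sums, checking that the coefficients sum to one (possibly with negative entries) is what lets the $+$ rule apply.
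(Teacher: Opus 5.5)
The structure of your proof matches the paper's: induction on the rewrite derivation, a substitution lemma extended with the $\mathsf{let}$ case and the shared-context $+$ case, and the same argument for the main $\mathsf{let}$ rule. That argument types the closed single-qubit constants by $\mathsf{ax}_\rho$, applies iterated substitution, weakens from $\Delta$ to $\Gamma$, and uses $\sum_{i,\vec l}p_{i\vec l}=\tr(\rho)=1$ to apply $+$. Those parts are fine.

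\textbf{The gap is the collapse rule.} For $\sum_i p_i\rho_i\rightarrow\rho'$ you justify that $\rho'=\hsum_i p_i\rho_i$ is a density matrix by $p_i\ge 0$. In the extended calculus this is false.
\begin{itemize}
\item The $\sum$ constructor and the $+$ rule now accept arbitrary real coefficients summing to one, and the collapse rule is not restricted to probabilistic mixtures.
\item This rule is exactly what fires on the reducts of the $\mathsf{let}$ rule, whose coefficients are negative (see Example~\ref{ex:bell-discard}). You acknowledge this yourself in your $\mathsf{let}$ case, so your argument is internally inconsistent.
\item Without positivity the case genuinely fails. The term $2\,\ketbra{0}{0}+(-1)\,\ketbra{1}{1}$ is typable at type $1$ by $\mathsf{ax}_\rho$ and $+$. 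Yet it reduces to $\mathrm{diag}(2,-1)$, which is not positive semidefinite, so it is not a density matrix and cannot be typed by $\mathsf{ax}_\rho$.
\end{itemize}

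\textbf{The paper shares this hole.} Its proof handles this case with a bare appeal to $\mathsf{ax}_\rho$ and never checks positivity. No argument about coefficients can close it as the system stands; the calculus itself has to change. One option is to let matrix constants, and the interpretation of type $n$, range over Hermitian trace-one matrices, which are closed under real affine combinations, unitary conjugation and tensor products. Another is to restrict which sums are typable.
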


The proof is by induction on the derivation of $t \rightarrow r$; the cases for
the original constructs follow \cite{DiazCaro17}. For the main $\mathsf{let}$
reduction, inversion gives $\Gamma \vdash \rho^n : n$ and $\Delta, x_1:1,
\ldots, x_n:1 \vdash s : A$; each $\rho_{i_{k}}^{l_k}$ is a single-qubit
density matrix, so repeated use of the substitution lemma gives $\Delta \vdash
s[\rho_{i_1}^{l_1}/x_1, \cdots] : A$,
and Proposition~\ref{prop:combined} ensures that
$\sum_{i,\vec l} p_{i\vec l}=1$, so the rule $+$ yields the result.

The $\mathsf{let}$ construct is not a value. Indeed, if its source term is a
density matrix the main reduction rule applies; otherwise, any reduction of the
source term propagates through the contextual rule.

\begin{theorem}[Progress]
  \label{thm:Progress}
  If $\vdash t:A$ then either $t$ is a value or there exists $r$ such that
  $t \rightarrow r$.
\end{theorem}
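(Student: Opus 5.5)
We prove Progress by induction on the typing derivation of $\vdash t:A$; equivalently, by structural induction on $t$ together with inversion of the typing rules. The key auxiliary fact is a \emph{canonical forms} lemma for closed values. If $\vdash v:n$ then $v=\rho^n$: a value of type $n$ cannot be a $\lambda$-abstraction, a measurement $\pi^m\rho^n$ (which has type $(m,n)$), or a sum of $\hat w$'s, since those would have arrow or measurement type. If $\vdash v:(m,n)$ then $v$ is either $\pi^m\rho^n$ or a sum $\sum_i p_i\hat w_i$ of such. If $\vdash v:A\multimap B$ then $v$ is $\lambda x.t$ or a sum of abstractions. We prove this lemma first, by inspection of the value grammar in Definition~\ref{def:values} against the typing rules. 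One subtlety is that the rule $+$ now allows real (possibly negative) coefficients, but this does not affect typing of sums.

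Next we go through the cases. Variables cannot occur, since the context is empty. $\lambda x.t$ and $\rho^n$ are values.

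For an application $tr$: if $t$ reduces, use the contextual rule. Otherwise, by the induction hypothesis $t$ is a value of arrow type. It is either an abstraction, so $\beta$ fires, or a sum $\sum_i p_i\hat w_i$, so the distribution rule $(\sum_i p_i t_i)r\to\sum_i p_i(t_ir)$ fires.

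For $U^m t$, $\pi^m t$ and $t\otimes r$: if a subterm reduces, use the congruence. Otherwise, by canonical forms at type $n$, each subterm is a density matrix. Then $U^m\rho^n$ and $\rho\otimes\rho'$ reduce, and $\pi^m\rho^n$ is a value.

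For $\sum_i p_i t_i$: if some $t_j$ reduces, so does the sum. Otherwise all $t_i$ are values of the same type $A$.
\begin{itemize}
\item If two summands coincide syntactically, the rule $\sum_i p_i t\to t$ applies. Strictly speaking, that rule only merges a sum whose summands are all equal. Merging only a repeated subset relies on $\sum$ being taken modulo associativity and commutativity, so the repeated summands can be regrouped as a nested sum.
\item If the $t_i$ are all density matrices, the matrix-sum rule applies.
\item Otherwise, by canonical forms at arrow or measurement type, all summands are $\hat w$'s and pairwise distinct, so the term is a value. Nested sums are flattened by associativity.
\end{itemize}

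For $\mathsf{letcase}^\circ$: either the scrutinee reduces, or, by canonical forms at type $(m,n)$, it is $\pi^m\rho^n$ (measurement rule) or a sum of $\hat w$'s (distribution rule).

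The only genuinely new case is $\letxn{x}{t}{s}$. By inversion of the $\mathsf{let}$ rule, $\vdash t:n$ with $\Gamma$ empty. By the induction hypothesis, either $t\to r$, and the contextual $\mathsf{let}$ rule applies, or $t$ is a value of type $n$, hence $t=\rho^n$ by canonical forms. In the latter case the main $\mathsf{let}$ rule fires, since its decomposition always exists by Proposition~\ref{prop:combined}. So a $\mathsf{let}$ is never a value but always reduces.

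The main obstacle I expect is the mixture case. There we must reconcile the syntactic distinctness condition on values with the available rewrite rules. In particular, we must check that a sum of values that is not a value always has a redex, which requires the modulo-AC reading and the fact that typing forces all summands into the same canonical-form class. The $\mathsf{let}$ case itself is routine once canonical forms are established.
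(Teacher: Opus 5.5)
Your argument is correct and uses the paper's case analysis, but it organises the induction differently. The paper strengthens the claim to open terms: if $\Gamma\vdash t:A$, then $t$ is a value, reduces, or contains a free variable and does not rewrite. You stay with closed terms and first isolate a canonical-forms lemma.

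This is legitimate because nothing reduces under a binder. By inversion, every subterm in evaluation position is typed in the empty context: function and argument, the operands of $U^m$, $\pi^m$ and $\otimes$, the summands, and the scrutinees of $\mathsf{letcase}^\circ$ and $\mathsf{let}$. The bodies bound by $\lambda$, $\mathsf{letcase}^\circ$ and $\mathsf{let}$ never need the induction hypothesis.

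Your route buys exact canonical forms. It avoids the paper's loose bookkeeping of ``values containing free variables''; the paper even lists $x$ among the values of arrow type, which Definition~\ref{def:values} does not allow. It also makes your mixture case more complete than the paper's. A sum of pairwise distinct density matrices is not a value, has no repeated summand and no non-value summand, so it escapes both of the paper's sub-cases. Your explicit appeal to the matrix-sum rule handles it, together with your observation that typing forces all summands into one canonical class. The paper's strengthening gives extra information about stuck open terms, but nothing downstream (e.g.\ Lemma~\ref{lem:strnorm}, used for Adequacy) needs it.

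One step needs more care, and the paper shares the problem. You regroup repeated summands $p_jt+p_kt$ as $(p_j+p_k)\bigl(\tfrac{p_j}{p_j+p_k}t+\tfrac{p_k}{p_j+p_k}t\bigr)$. This requires $p_j+p_k\neq 0$, which can now fail. In $\tfrac12\hat w-\tfrac12\hat w+1\cdot\hat w'$ with $\hat w\neq\hat w'$, no well-formed regrouping (each nested sum with weights summing to one) isolates the duplicates, so the rule $\sum_i p_i t\rightarrow t$ is never reached. You mentioned negative coefficients only in connection with typing, but here they matter.

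The paper's own step to $(\sum_{i\neq j,k}p_it_i)+(p_j+p_k)t_j$ is not an instance of any stated rule either. So this is a looseness in how the extended calculus treats sums modulo associativity and commutativity, not a flaw specific to your strategy. To close it, you need either an explicit rule collecting like terms and dropping zero coefficients, or a convention under which such a term counts as a value.
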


One proves a stronger statement for open terms: if $\Gamma\vdash t:A$, then $t$
is a value, reduces, or contains a free variable and does not rewrite. For the
new case $\Gamma, \Delta \vdash \letxn{x}{t}{s} : A$: if $t$ is a density
matrix $\rho^n$ the main rule fires; if $t$ reduces, the contextual rule
applies; otherwise $t$ has a free variable and so does the whole term.

\begin{theorem}[Strong Normalisation]
  \label{thm:strnorm}
  Every closed, well-typed term $\vdash t : A$ is strongly normalising.
\end{theorem}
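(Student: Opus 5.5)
We prove Theorem~\ref{thm:strnorm} with a reducibility (Tait--Girard) argument, extending the one for $\qlambdens$ in~\cite{DiazCaro17}. Let $\SN$ denote the set of strongly normalising terms. For each type $A$ we define a set $\tsem{A}_{\SN}$ of terms. For a base type ($n$ or $(m,n)$) it is simply $\SN$. For $A\multimap B$ it is the set of terms $t$ such that $tr\in\tsem{B}_{\SN}$ for every $r\in\tsem{A}_{\SN}$.

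We first establish the usual candidate properties by induction on $A$:
\begin{itemize}
\item \textbf{(CR1)} every reducible term is in $\SN$;
\item \textbf{(CR2)} reducibility is closed under reduction;
\item \textbf{(CR3)} a neutral term all of whose one-step reducts are reducible is itself reducible.
\end{itemize}
Here ``neutral'' excludes abstractions and sums. Sums need a separate closure lemma: if each $t_i\in\tsem{A}_{\SN}$, then $\sum_i p_i t_i\in\tsem{A}_{\SN}$. At arrow types this goes through the rule $(\sum_i p_i t_i)r\to\sum_i p_i(t_ir)$. We prove it by induction on the sum of the reduction lengths of the $t_i$ together with the argument, which are bounded by (CR1). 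The extension of coefficients to negative reals changes nothing here, since reduction never inspects the signs of the weights.

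The main lemma is adequacy of reducibility. If $\Gamma\vdash t:A$ and $\sigma$ is a substitution sending each $x:B\in\Gamma$ to a term of $\tsem{B}_{\SN}$, then $t\sigma\in\tsem{A}_{\SN}$. The proof is by induction on the typing derivation. All cases other than $\mathsf{let}$ are those of~\cite{DiazCaro17}. For $\mathsf{let}$, write $t'=t\sigma$ and $s'=s\sigma$; by the induction hypothesis on the first premise, $t'\in\SN$. We show that $\letxn{x}{t'}{s'}$ is in $\tsem{A}_{\SN}$ by induction on the maximal reduction length of $t'$, which is finite because $t'\in\SN$, using (CR3) since the term is neutral. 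Its one-step reducts are of two kinds.
\begin{itemize}
\item Reducts $\letxn{x}{t''}{s'}$ with $t'\to t''$ are handled by the inner induction hypothesis.
\item When $t'=\rho^n$, the reduct is $\sum_{i,\vec l}p_{i\vec l}\,s'[\rho_{i_1}^{l_1}/x_1,\dots,\rho_{i_n}^{l_n}/x_n]$. Each density matrix $\rho_{i_k}^{l_k}$ is a normal form of type $1$, so it lies in $\tsem{1}_{\SN}=\SN$. The outer induction hypothesis on the premise $\Delta,x_1{:}1,\dots,x_n{:}1\vdash s:A$, applied to the extended substitution, then makes each summand reducible. The sum-closure lemma finishes the case.
\end{itemize}
Reductions inside $s'$ do not occur, because there is no contextual rule under the binder.

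The theorem follows by taking $\Gamma$ empty and applying (CR1). The step we expect to need the most care is the sum-closure lemma and its interaction with neutrality: terms of the form $\qletcase{x}{\sum_i p_i\hat w_i}{\cdots}$ distribute over sums, and the $\mathsf{let}$ rule produces a sum with up to $8^n$ summands. The $\mathsf{let}$ case itself is no harder than the $\mathsf{letcase}^\circ$ case, since both substitute closed values into a body typed in an extended context and then form a weighted sum.
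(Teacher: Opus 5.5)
Your proof is correct in outline. It takes a different route from the paper in how arrow types are interpreted.

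The paper follows Díaz-Caro and Dowek. There, $\SN(A\multimap B)$ consists of the strongly normalising $t$ such that $t\rightarrow^*\lambda x.u$ implies $u[v/x]\in\SN(B)$ for all $v\in\SN(A)$. With that definition the distributivity rule $(\sum_ip_it_i)r\rightarrow\sum_ip_i(t_ir)$ is awkward, so the paper adds projection rules $\sum_ip_it_i\rightarrow t_j$ to the reduction relation. It uses them to get $\ell(t_i)<\ell(\sum_ip_it_i)$ in the application lemma, and to argue that a sum reaching an abstraction does so through a summand.

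Your applicative (Tait) interpretation makes $t_ir\in\tsem{B}_{\SN}$ immediate from $t_i\in\tsem{A\multimap B}_{\SN}$. The distributivity case then closes by sum-closure at the smaller type $B$, and no auxiliary rules are needed. The price is the usual one. (CR1) must be derived through a term lying in every $\tsem{A}_{\SN}$, such as a variable or a density matrix, which is reducible by vacuous (CR3). The abstraction case also needs its own (CR3) argument by induction on $\ell(r)$. In the paper, by contrast, $\SN(A\multimap B)\subseteq\SN$ holds by definition, and the abstraction lemma is immediate because nothing reduces under $\lambda$. Your $\mathsf{let}$ case coincides with the paper's Lemma~\ref{lem:sn-let}.

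Three details need adjusting once the projection rules are gone.

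First, merging equal summands (as in the paper's Progress proof), or collapsing part of a sum, need not decrease $\sum_i\ell(t_i)+\ell(r)$. Add the number of summands or the term size as a secondary measure, as the paper does.

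Second, in the $\mathsf{letcase}^\circ$ distributivity case you no longer have $\ell(\hat w_j)<\ell(\sum_jq_j\hat w_j)$: a normal sum of distinct $\pi^m\rho_j^n$ has length $0$. Induct lexicographically on $(\ell(r),|r|)$ instead, using that each $\hat w_j$ is a proper subterm with $\ell(\hat w_j)\le\ell(r)$.

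Third, the paper proves a closure lemma for every construct rather than deferring the non-$\mathsf{let}$ cases to \cite{DiazCaro17}. In your framework these cases are routine, but they have to be written out, including the $\mathsf{letcase}^\circ$ distributivity case you flagged.
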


The proof uses Girard's reducibility candidates~\cite{GirardLafontTaylor89}.
Following Díaz-Caro and Dowek~\cite{DiazCaroDowek24}, we extend the reduction
with projection rules $\sum_i p_i\, t_i \rightarrow t_j$ to handle the
distributivity rule without duplication, and define typed reducibility sets
$\SN(A)$ with $\SN(n) = \SN((m,n)) = \SN$ and $\SN(A \multimap B) = \{ t \in \SN
\mid t \rightarrow^{*} \lambda x.\,u \Rightarrow \forall v \in \SN(A),\,
u[v/x] \in \SN(B) \}$. The key new per-construct lemma states that if $t \in
\SN(n)$ and $s[\rho_1/x_1,\ldots,\rho_n/x_n] \in \SN(A)$ for all single-qubit
$\rho_k \in \SN(1)$, then $\letxn{x}{t}{s} \in \SN(A)$. A fundamental
substitution lemma concludes strong normalisation for every closed well-typed
term.

\begin{remark}
  One may wonder whether Theorem~\ref{thm:strnorm} could be obtained by a
  simpler argument, showing that some measure on terms decreases along
  reduction: after all, the calculus is affine, so no variable is ever
  duplicated by substitution. Such an argument may well be possible, but it is
  not immediately obvious, because the terms of a linear combination can be
  repeated. The main $\mathsf{let}$ rule replaces its body $s$ by one copy of
  $s$ per summand of the decomposition, and both the $\mathsf{letcase}^\circ$
  rule and the distributivity rule
  $(\sum_i p_i t_i)r \rightarrow \sum_i p_i (t_i r)$ replicate whole subterms
  across the summands. Along these reductions a term therefore typically grows
  rather than shrinks, and no obvious measure decreases.
\end{remark}

\subsection{Soundness and Adequacy}

\begin{theorem}[Soundness]
  \label{thm:Soundness}
  If $\Gamma \vdash t : A$, $\theta \vDash \Gamma$, and $t \rightarrow r$, then
  $\fsem{t} = \fsem{r}$.
\end{theorem}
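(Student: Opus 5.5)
The proof goes by induction on the derivation of $t\rightarrow r$, following the case structure of Theorem~\ref{thm:SR}. Before starting, I will establish a \emph{substitution lemma} for the semantics: if $\Gamma, x:B\vdash t:A$, $\Delta\vdash u:B$ and $\theta\vDash\Gamma,\Delta$, then $\fsem{t[u/x]}=\fsem[\theta,x\mapsto\fsem{u}]{t}$. This lemma is proved by induction on $t$. Every clause of the interpretation is compositional, including the new $\mathsf{let}$ clause, which only evaluates its source term and extends $\theta$ on bound variables. By Barendregt's convention the bound $x_k$ are distinct from $x$, so the $\mathsf{let}$ case of the lemma is routine. I also need that $\theta$ may be weakened on variables not free in a term, since the type system is affine.

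For the rules inherited from $\qlambdens$ I reuse the argument of \cite{DiazCaro17}. It must be rechecked only for the generalised sums with real, possibly negative coefficients. The interpretation $\fsem{\sum_i p_it_i}=\sum_i p_i\fsem{t_i}$ is linear in the coefficients, so the following rules still hold by linearity of the model:
\begin{itemize}
\item $\sum_i p_i t\rightarrow t$, using $\sum_i p_i=1$;
\item the collapse rule $\sum_ip_i\rho_i\rightarrow\hsum_ip_i\rho_i$;
\item the distributivity of application over sums.
\end{itemize}
Distributivity needs each $\fsem{t_i}$ to be a linear map, which holds because function types are interpreted as CPMs. The $\mathsf{letcase}^\circ$ distributivity case likewise uses linearity of $r\mapsto \tr(\ext{\pi_i}^\dagger\ext{\pi_i}r)\cdot\ldots$ after the normalisations cancel. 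The $\beta$, unitary, measurement and tensor rules are direct, and they use the substitution lemma where needed.

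For the main $\mathsf{let}$ rule, take $t=\letxn{x}{\rho^n}{s}$. Here $\fsem{\rho^n}=\rho^n$, so the denotation uses exactly the decomposition of Proposition~\ref{prop:combined} that the rewrite rule uses. It is essential that the decomposition is a \emph{fixed function} of the matrix, namely the canonical Pauli coefficients $\alpha_i=\frac{1}{2^n}\tr(P_i\rho)$ together with fixed eigenprojectors. Given this, the left side equals $\sum_{i,\vec l}p_{i\vec l}\fsem[\theta,x_k\mapsto\rho_{i_k}^{l_k}]{s}$. The right side equals $\sum_{i,\vec l}p_{i\vec l}\fsem{s[\rho_{i_1}^{l_1}/x_1,\ldots]}$, and the substitution lemma applied $n$ times identifies the two sides termwise.

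The contextual cases follow from the induction hypothesis, because each clause depends on subterms only through their denotations. The context rule for $\mathsf{let}$ is the delicate one: from $t\rightarrow t'$ we get $\fsem{t}=\fsem{t'}$, and the canonical decomposition of equal matrices is equal. The same canonicity covers the rule under $\sum$, where only one summand changes.

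I expect the main obstacle to be well-definedness rather than any single computation. The model must accommodate affine-typed sums with negative coefficients, whose denotations can leave $\D$: for example, a body that is not linear in a discarded-style way could produce non-positive results. I would first try to show that every typed term denotes a genuine element of $\tsem{A}$. For the $\mathsf{let}$ clause this follows from multilinearity of $\fsem{s}$ in the variables $x_k$ (by affinity each variable is used at most once), so the sum reassembles into a map applied to $\rho$. Failing that, I would interpret types in the enclosing real vector spaces of Hermitian operators and linear maps, where soundness is a purely linear statement.
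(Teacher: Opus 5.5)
Your proposal follows the paper's proof essentially step for step. It proceeds by induction on the rewrite derivation, proves a semantic substitution lemma by induction on $t$, uses linearity of the interpretation for the generalised sums and for $\mathsf{letcase}^\circ$-distribution, and relies on the fact that the $\mathsf{let}$ rule and its interpretation are built from the same canonical decomposition, which also settles the contextual $\mathsf{let}$ case.

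Your well-definedness worry is legitimate, and the paper does not address it, but only your fallback can work. A closed typable term such as $\tfrac32\ketbra{0}{0}-\tfrac12\ketbra{1}{1}$ already denotes a non-positive matrix, so not every typed term denotes an element of $\tsem{A}$. In the ambient-space reading, the $\mathsf{letcase}^\circ$ clause must then be taken homogeneously, since $\tr(\ext{\pi_i}^\dagger\ext{\pi_i}\sigma)$ can vanish while $\ext{\pi_i}\sigma\ext{\pi_i}^\dagger\neq 0$.
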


The proof is by induction on the derivation of $t\rightarrow r$.
The key insight behind the $\mathsf{let}$ case is that the operational semantics
(the reduction rule for $\mathsf{let}$) and the denotational semantics (the
interpretation of $\mathsf{let}$) are defined from the same decomposition given
by Proposition~\ref{prop:combined}.
Soundness then reduces to an application of the semantic substitution lemma
($\fsem{t[s/x]} = \fsem[\theta, x \mapsto \fsem{s}]{t}$) together with
linearity of the interpretation.

Soundness tells us the interpretation is invariant under reduction, but
invariance alone does not imply it carries useful information. Full completeness
(denotationally equal closed terms reduce to a common value) is too strong for
function types: the terms $r = \lambda f.\, f$ and $t = \lambda f.\, \lambda x.\,
f(x)$ are distinct normal forms with the same denotation (the identity CPM on
$A \multimap A$) and no common reduct. We therefore characterise semantic
equality by \emph{observational equivalence}.

\begin{definition}[Context, observational equivalence]
  \label{def:obs-equiv}
  A \emph{context} $C$ is a term with a single hole $[\cdot]$ such that
  $[\cdot]:A \vdash C : n$ for some type $A$ and $n \in \mathbb{N}$. Two closed
  terms $\vdash t : A$ and $\vdash r : A$ are \emph{observationally equivalent},
  written $t \equiv r$, if for every context $C$ with hole type $A$ there exists
  a density matrix $\rho$ such that $C[t] \rightarrow^{*} \rho$ and
  $C[r] \rightarrow^{*} \rho$.
\end{definition}

\begin{theorem}[Adequacy]
  \label{thm:Adequacy}
  If $\vdash t : A$, $\vdash r : A$, and $\fsem{t} = \fsem{r}$, then
  $t \equiv r$.
\end{theorem}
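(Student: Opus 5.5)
The plan is to reduce Adequacy to a statement about closed terms of ground type, and then use Soundness, Strong Normalisation and Progress to show that such terms reduce to a density matrix equal to their denotation. Fix a context $C$ with $[\cdot]:A\vdash C:n$. Then $\vdash C[t]:n$ and $\vdash C[r]:n$ by a plugging lemma, proved by induction on the typing derivation of $C$ using affinity; since $t,r$ are closed, no context splitting is problematic. Compositionality of the denotation gives $\fsem[]{C[t]}=\fsem[{[\cdot]\mapsto\fsem[]{t}}]{C}$, and similarly for $r$. This is the semantic substitution lemma already used for Soundness, with the hole treated as a variable. Hence $\fsem[]{C[t]}=\fsem[]{C[r]}$.

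The key step is the following \emph{ground lemma}: if $\vdash u:n$, then $u\rightarrow^*\rho$ for some density matrix $\rho$, and $\rho=\fsem[]{u}$. By Theorem~\ref{thm:strnorm}, $u$ has a normal form $u'$. By Theorem~\ref{thm:SR}, $\vdash u':n$, and by Theorem~\ref{thm:Progress}, $u'$ is a value. It remains to show that a closed value of type $n$ is some $\rho^n$. I would prove this by inspecting the grammar of values.
\begin{itemize}
\item $\lambda x.t$ has an arrow type, so it cannot have type $n$.
\item $\pi^m\rho^n$ has type $(m,n)$, not $n$.
\item For $\sum_i p_i\hat w_i$, the rule $+$ forces each $\hat w_i$ to have type $n$, which is impossible by induction.
\end{itemize}
So $u'=\rho$. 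Theorem~\ref{thm:Soundness}, iterated along the reduction, gives $\fsem[]{u}=\fsem[]{\rho}=\rho$.

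Applying the ground lemma to $C[t]$ and $C[r]$ gives $C[t]\rightarrow^*\rho_1=\fsem[]{C[t]}$ and $C[r]\rightarrow^*\rho_2=\fsem[]{C[r]}$. The equality established in the first paragraph then yields $\rho_1=\rho_2$, which proves $t\equiv r$.

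The main obstacle is that Theorems~\ref{thm:strnorm} and~\ref{thm:Progress} must be applicable in exactly the form needed. The extended sums carry possibly negative real coefficients, so the typing rule $+$ and the reduction $\sum_i p_i\rho_i\rightarrow\hsum_i p_i\rho_i$ must stay well-defined. In particular, the matrix $\hsum_i p_i\rho_i$ must again be a density matrix when it arises from a closed well-typed term. I would handle this semantically: by Soundness it equals the denotation of the original term. The denotation of a ground-typed closed term lies in $\D$, because the $\mathsf{let}$ case reassembles, via Proposition~\ref{prop:combined} and linearity, into an application of the CPM denoted by the body to $\fsem[]{t}$. Establishing this positivity invariant, namely that $\fsem{\letxn{x}{t}{s}}$ equals the linear extension of $\fsem[\theta,\vec x\mapsto -]{s}$ applied to $\fsem{t}$, is the step I expect to require the most care. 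I would prove it by induction on $s$, using multilinearity of the interpretation in the affine variables.
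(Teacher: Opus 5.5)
Your main line is the paper's proof. Compositionality gives $\fsem{C[t]}=\fsem{C[r]}$: you obtain it from the semantic substitution lemma with the hole as a fresh variable, which is legitimate because $t$ and $r$ are closed, whereas the paper does a direct induction on $C$. Strong Normalisation, Subject Reduction and Progress then send each closed term of type $n$ to a density matrix. Iterated Soundness identifies that matrix with the denotation, and the two matrices coincide. Taking those theorems as given, exactly as the paper does, this is correct.

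Your last paragraph, however, would not go through as written.

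\emph{The invariant is false.} The claim that every closed term of type $n$ denotes an element of $\D$ fails for the calculus as defined. Rule $+$ accepts arbitrary real weights summing to one, so $\tfrac32\,\ketbra{0}{0}+(-\tfrac12)\,\ketbra{1}{1}$ is a closed term of type $1$. Its denotation has eigenvalue $-\tfrac12$, and the term is either stuck or reduces to a matrix that is not a density matrix. No induction on the body of a $\mathsf{let}$ can fix this. You would first have to keep user-written sums convex and let negative weights arise only from the $\mathsf{let}$ rule.

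\emph{Linearity does not give positivity.} Even with that restriction, the identity $\fsem{\letxn{x}{t}{s}}=\Phi_s(\fsem{t})$, with $\Phi_s$ the linear extension of the body's denotation, follows from affinity and linearity alone. It says nothing about positivity. Since $\fsem{t}$ may be entangled, you need $\Phi_s$ to be completely positive jointly in $x_1,\dots,x_n$, not merely positive on product inputs. For instance, the map $\rho_1\otimes\rho_2\mapsto\rho_1^{T}\otimes\rho_2$ is positive on products but sends $\beta_{00}$ to a matrix with eigenvalue $-\tfrac12$.

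So the issue you raise is genuine, but it concerns Progress and the interpretation $\tsem{n}=\D$, which the paper's Adequacy proof simply invokes. Without that paragraph, your proof coincides with the paper's.
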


\begin{proof}
  (\emph{Sketch.}) Let $C$ be a context with hole type $A$ and output type $n$.
  By compositionality of the interpretation, $\fsem{t} = \fsem{r}$ implies
  $\fsem{C[t]} = \fsem{C[r]}$. By Theorems~\ref{thm:SR}, \ref{thm:Progress}, and~\ref{thm:strnorm},
  the closed ground-type terms $C[t]$ and $C[r]$ reduce to density
  matrices $\rho_1$ and $\rho_2$. By Soundness, $\rho_1 = \fsem{C[t]} =
  \fsem{C[r]} = \rho_2$. Hence both reduce to a common $\rho$, so
  $t \equiv r$. \qed
\end{proof}

Adequacy shows that semantic equality implies observational equivalence:
no ground-type experiment can distinguish two denotationally equal programs.

\section{Examples}
\label{sec:examples}

\subsection{Teleportation}
\label{sec:telep}

In $\qlambdens$, quantum teleportation can be written as the single
term~\cite{DiazCaro17}
\[
  T = \lambda x.\, \qletcase y {\pi^2(\had^1(\cnot^2 (x\otimes\beta_{00})))} {
    y,\, \Z_3 y,\, \X_3 y,\, \Z_3\X_3 y },
\]
where $\Z_3 = I \otimes I \otimes \Z$ and $\X_3 = I \otimes I \otimes \X$ operate
on the global 3-qubit state. This formulation conflates Alice's and Bob's roles:
Alice prepares and measures, while Bob corrects, yet the term treats the entire
protocol as a monolithic operation. Moreover, the type of $T$ is $1 \multimap 3$,
since the branches return the full 3-qubit post-measurement state, even though
the teleported qubit is only the third one.
This reflects an artefact of the encoding rather than the intended behaviour of
the protocol.

Using the $\mathsf{let}$ construct, we can cleanly separate the two roles. The
key observation is that since the type system is affine, Bob can bind all three
qubit variables via $\mathsf{let}$ but only use $q_3$, his qubit, discarding
Alice's qubits $q_1$ and $q_2$.
Alice takes the input qubit, composes it with her
half of the Bell pair, applies the CNOT and Hadamard gates, and measures her two
qubits. Bob then analyses the resulting measurement term and, in each branch,
decomposes the corresponding 3-qubit post-measurement state, applies the
appropriate Pauli correction to his qubit, and returns only that qubit:
\begin{align*}
  \mathit{alice} &= \lambda x.\, \pi^2(\had^1(\cnot^2(x \otimes \beta_{00}))), \\
  \mathit{bob} &= \lambda m.\, \mathsf{letcase}^\circ\ y = m\ \mathsf{in}\ \{
    \letxn{q}{y}{q_3},\
    \letxn{q}{y}{\Z^1 q_3}, \\
  &\hphantom{{}= \lambda m.\, \mathsf{letcase}^\circ\ y = m\ \mathsf{in}\ \{}
    \letxn{q}{y}{\X^1 q_3},\
    \letxn{q}{y}{\Z^1(\X^1 q_3)}
  \}, \\
  \mathit{teleport} &= \lambda x.\, \mathit{bob}(\mathit{alice}\ x).
\end{align*}
The correction $\Z^1 q_3$ applies $\Z$ to a \emph{single-qubit} variable, rather
than the global operator $\Z_3 = I \otimes I \otimes \Z$.
The type of $\mathit{teleport}$ is $1 \multimap 1$: it takes a single-qubit
density matrix and returns a single-qubit density matrix. This type accurately
reflects the physical protocol, where Alice's qubits are discarded after the
correction step and only Bob's corrected qubit remains.

To illustrate the role of the new typing rule, the derivation of a typical Bob
branch is
\[
  \infer[\mathsf{let}]{y\!:\!3 \vdash \letxn{q}{y}{\Z^1 q_3} : 1}{
    \infer[\mathsf{ax}]{y\!:\!3 \vdash y : 3}{} &
    \infer[\mathsf{u}]{q_1\!:\!1,q_2\!:\!1,q_3\!:\!1 \vdash \Z^1 q_3 : 1}{
      \infer[\mathsf{ax}]{q_1\!:\!1,q_2\!:\!1,q_3\!:\!1 \vdash q_3 : 1}{}
    }
  }
\]
where $q_1$ and $q_2$ appear in the context but are unused: this is precisely how
Alice's qubits are discarded, and by Proposition~\ref{prop:discard} this discard
computes the partial trace over them.

\subsection{Bit-Flip Error Correction}
\label{sec:bitflip}

The three-qubit bit-flip code~\cite[\S10.1.1]{NielsenChuang10} protects a single
logical qubit against a single bit-flip ($X$) error by encoding it into three
physical qubits, using two ancilla qubits during syndrome extraction.
This example illustrates how the $\mathsf{let}$ construct enables clean ancilla
management, qubit extraction, and a more faithful typing of the protocol.

In the original calculus, each branch of the correction must apply the
correction as a \emph{global} five-qubit operator and the ancilla qubits cannot
be discarded, so the correction has type $3 \multimap 5$ and, after decoding,
the full protocol has type $\mathit{bitflip}_0 : 1 \multimap 5$: there is no
mechanism to discard the ancilla qubits or extract the logical qubit, and the
ancillas are carried as irremovable overhead. With the $\mathsf{let}$ construct,
each branch decomposes the five-qubit post-measurement state, applies the
correction to only the affected code qubit, and recombines only the three code
qubits, discarding the ancillas; decoding then extracts the first qubit.
Writing $\beta = \ketbra{0}{0}$ for the ancilla state and $\mathrm{Enc}^3$,
$\mathrm{Synd}^5$, $\mathrm{Dec}^3 = (\mathrm{Enc}^3)^\dagger$ for the encoding,
syndrome-extraction and decoding circuits, after syndrome extraction the first
two qubits are ancillas carrying the syndrome information, while the last three
qubits contain the encoded state. Accordingly, the correction step discards the
ancillas and acts only on the three code qubits:
\begin{align*}
  \mathit{encode} &= \lambda x.\, \mathrm{Enc}^3(x \otimes \beta \otimes \beta) \;:\; 1 \multimap 3, \\
  \mathit{correct} &= \lambda y.\,
  \mathsf{letcase}^\circ\ z = \pi^2(\mathrm{Synd}^5((\beta \otimes \beta) \otimes y))\ \mathsf{in}\ \{b_0, b_1, b_2, b_3\}, \\
  b_1 &= \letxk{q}{5}{z}{(X^1\, q_3) \otimes q_4 \otimes q_5}, \quad\ldots\quad : 3 \text{ (with $z:5$)}, \\
  \mathit{decode} &= \lambda w.\, \letxk{q}{3}{\mathrm{Dec}^3\, w}{q_1} \;:\; 3 \multimap 1, \\
  \mathit{bitflip} &= \lambda x.\, \mathit{decode}(\mathit{correct}(\mathit{encode}(x))) \;:\; 1 \multimap 1.
\end{align*}
The corrections are manifestly local: $b_1$ applies $X^1$ to a single-qubit
variable $q_3$, rather than the global operator $X_3^5 = I \otimes I \otimes X
\otimes I \otimes I$. The type $1 \multimap 1$ accurately reflects the physical
reality of the protocol: a single qubit is encoded, protected, and recovered.
Contrast this with $\mathit{bitflip}_0 : 1 \multimap 5$ in the original calculus.

\section{Conclusion}
\label{sec:conclusion}

We have shown that a multi-qubit state embedded inside a term can, after all,
be split into individual qubits: the Pauli decomposition together with the
spectral decomposition of the Pauli matrices expresses any $n$-qubit density
matrix as a real linear combination of tensor products of single-qubit states.
Exploiting this, we extended $\qlambdens$ with a compositional $\mathsf{let}$
construct, gave it a rewrite system, an affine type system, and a denotational
semantics based on completely positive maps, and proved Subject Reduction,
Progress, Strong Normalisation, Soundness, and Adequacy. The construct is
physically principled: discarding an unused qubit computes exactly the partial
trace, in accordance with the no-deleting theorem. The teleportation and
bit-flip examples show that programs over states-as-constants can be written
compositionally, with types that reflect the underlying physical protocol. This
supports the broader thesis that keeping the quantum state as a constant of the
language---rather than in an external register---is a viable design, well
suited to reasoning about programs.

\begin{credits}
\subsubsection{\ackname}
This work was partially funded by the European Union through the Horizon-Europe MSCA-SE project QCOMICAL (101182520),
the PEPR integrated project EPiQ (ANR-22-PETQ-0007),
the ANII Fondo Clemente Estable project FCE-1-2025-1-186751, and the STIC-AmSud project QUASAR.

\subsubsection{\discintname}
The authors have no competing interests to declare that are relevant to the
content of this article.
\end{credits}

\bibliographystyle{splncs04}
\bibliography{biblio}

\begin{thebibliography}{10}
\providecommand{\url}[1]{\texttt{#1}}
\providecommand{\urlprefix}{URL }
\providecommand{\doi}[1]{https://doi.org/#1}

\bibitem{AvanziniBartheDalLagoICFP21}
Avanzini, M., Barthe, G., Dal~Lago, U.: On continuation-passing transformations
  and expected cost analysis. In: Proc. ACM Program. Lang. vol.~5, pp. 1--30
  (2021)

\bibitem{AvanziniDiazcaroHainryPechoux25}
Avanzini, M., D\'{\i}az-Caro, A., Hainry, E., P\'echoux, R.: Expectation-based
  analysis of higher-order quantum programs. Draft at {\tt arXiv:2504.18441}
  (2025)

\bibitem{Borgna19}
Borgna, A.: Simulaci{\'o}n del lambda c{\'a}lculo de matrices de densidad en el
  lambda c{\'a}lculo cu{\'a}ntico de {Selinger} y {Valiron}. Master's thesis,
  Universidad de Buenos Aires (2019)

\bibitem{Delbecque08}
Delbecque, Y.: Game semantics for quantum data. In: Quantum Physics and Logic /
  Developments in Computational Models (QPL/DCM@ICALP 2008). pp. 41--57 (2008)

\bibitem{DiazCaro17}
D{\'\i}az-Caro, A.: A lambda calculus for density matrices with classical and
  probabilistic controls. In: Programming Languages and Systems (APLAS 2017).
  LNCS, vol. 10695, pp. 448--467. Springer (2017), arXiv:1705.00097

\bibitem{DiazCaroDowek24}
D{\'\i}az-Caro, A., Dowek, G.: A linear linear lambda-calculus. Mathematical
  Structures in Computer Science  \textbf{34}(10),  1103--1137 (2024)

\bibitem{GirardLafontTaylor89}
Girard, J.Y., Lafont, Y., Taylor, P.: Proofs and Types. No.~7 in Cambridge
  Tracts in Theoretical Computer Science, Cambridge University Press (1989)

\bibitem{HasuoHoshino17}
Hasuo, I., Hoshino, N.: Semantics of higher-order quantum computation via
  geometry of interaction. Annals of Pure and Applied Logic  \textbf{168}(2),
  404--469 (2017)

\bibitem{KoskaBaboulinGazda24}
Koska, O., Baboulin, M., Gazda, A.: A tree-approach {Pauli} decomposition
  algorithm with application to quantum computing. In: IEEE International
  Conference on Quantum Computing and Engineering (QCE) (2024),
  arXiv:2403.11644

\bibitem{Miguez26}
Miguez, T.: A compositional extension of $\lambda_\rho^\circ$ via Pauli
  decomposition. Master's thesis, Universidad de Buenos Aires (2026)

\bibitem{NielsenChuang10}
Nielsen, M.A., Chuang, I.L.: Quantum Computation and Quantum Information.
  Cambridge University Press, 10th anniversary edition edn. (2010)

\bibitem{PatiBraunstein00}
Pati, A.K., Braunstein, S.L.: Impossibility of deleting an unknown quantum
  state. Nature  \textbf{404}(6774),  164--165 (2000)

\bibitem{SelingerValironMSCS06}
Selinger, P., Valiron, B.: A lambda calculus for quantum computation with
  classical control. Mathematical Structures in Computer Science
  \textbf{16}(3),  527--552 (2006)

\bibitem{WoottersZurekN82}
Wootters, W., Zurek, W.: A single quantum cannot be cloned. Nature
  \textbf{299}(5886),  802--803 (1982)

\end{thebibliography}

\newpage

\appendix

\section{Spectral Decomposition of the Pauli Matrices}
\label{app:pauli-spectral}

We compute the spectral decomposition of each of the four Pauli matrices
explicitly. In each case, we find the eigenvalues and an orthonormal basis of
eigenvectors, and verify that $M = \sum_i \lambda_i \ketbra{v_i}{v_i}$.

\paragraph{Identity $I$.}
The identity matrix $I = \begin{pmatrix} 1 & 0 \\ 0 & 1 \end{pmatrix}$ has
eigenvalue $\lambda = 1$ with multiplicity $2$. The standard basis vectors
$\ket{0}$ and $\ket{1}$ are eigenvectors, so:
\[
  I = 1 \cdot \ketbra{0}{0} + 1 \cdot \ketbra{1}{1}.
\]

\paragraph{Pauli $Z$.}
The matrix $Z = \begin{pmatrix} 1 & 0 \\ 0 & -1 \end{pmatrix}$ is already
diagonal, with eigenvalue $\lambda_1 = +1$ for eigenvector $\ket{0}$ and
$\lambda_2 = -1$ for eigenvector $\ket{1}$. The characteristic polynomial is
$\det(Z - \lambda I) = (1-\lambda)(-1-\lambda) = 0$, giving $\lambda = \pm 1$.
The spectral decomposition is:
\[
  Z = (+1)\ketbra{0}{0} + (-1)\ketbra{1}{1}
  = \begin{pmatrix} 1 & 0 \\ 0 & 0 \end{pmatrix} - \begin{pmatrix} 0 & 0 \\ 0 & 1 \end{pmatrix}
  = \begin{pmatrix} 1 & 0 \\ 0 & -1 \end{pmatrix} = Z. \quad\checkmark
\]

\paragraph{Pauli $X$.}
The matrix $X = \begin{pmatrix} 0 & 1 \\ 1 & 0 \end{pmatrix}$ has characteristic
polynomial $\det(X - \lambda I) = \lambda^2 - 1 = 0$, giving eigenvalues
$\lambda = \pm 1$. For $\lambda_1 = +1$, solving $(X - I)v = 0$ gives $v_1 =
v_2$, so the normalized eigenvector is $\ket{+} =
\frac{1}{\sqrt{2}}\begin{pmatrix} 1 \\ 1 \end{pmatrix}$. For $\lambda_2 = -1$,
solving $(X + I)v = 0$ gives $v_1 = -v_2$, so the normalized eigenvector is
$\ket{-} = \frac{1}{\sqrt{2}}\begin{pmatrix} 1 \\ -1 \end{pmatrix}$.
\begin{align*}
  (+1)\ketbra{+}{+} + (-1)\ketbra{-}{-}
  &= \frac{1}{2}\begin{pmatrix} 1 & 1 \\ 1 & 1 \end{pmatrix}
    - \frac{1}{2}\begin{pmatrix} 1 & -1 \\ -1 & 1 \end{pmatrix}
  = \frac{1}{2}\begin{pmatrix} 0 & 2 \\ 2 & 0 \end{pmatrix}
  = \begin{pmatrix} 0 & 1 \\ 1 & 0 \end{pmatrix} = X. \quad\checkmark
\end{align*}

\paragraph{Pauli $Y$.}
The matrix $Y = \begin{pmatrix} 0 & -i \\ i & 0 \end{pmatrix}$ has characteristic
polynomial $\det(Y - \lambda I) = \lambda^2 - 1 = 0$, giving eigenvalues
$\lambda = \pm 1$. For $\lambda_1 = +1$, solving $(Y - I)v = 0$ gives $v_1 =
-iv_2$; absorbing the global phase, the normalized eigenvector is conventionally
written $\ket{i} = \frac{1}{\sqrt{2}}(\ket{0} + i\ket{1})$, so $\ketbra{i}{i} =
\frac{1}{2}\begin{pmatrix} 1 & -i \\ i & 1 \end{pmatrix}$. For $\lambda_2 = -1$,
solving $(Y + I)v = 0$ gives $v_1 = iv_2$, and with $\ket{-i} =
\frac{1}{\sqrt{2}}(\ket{0} - i\ket{1})$ we have $\ketbra{-i}{-i} =
\frac{1}{2}\begin{pmatrix} 1 & i \\ -i & 1 \end{pmatrix}$.
\begin{align*}
  (+1)\ketbra{i}{i} + (-1)\ketbra{-i}{-i}
  &= \frac{1}{2}\begin{pmatrix} 1 & -i \\ i & 1 \end{pmatrix}
    - \frac{1}{2}\begin{pmatrix} 1 & i \\ -i & 1 \end{pmatrix} \\
  &= \frac{1}{2}\begin{pmatrix} 0 & -2i \\ 2i & 0 \end{pmatrix}
  = \begin{pmatrix} 0 & -i \\ i & 0 \end{pmatrix} = Y. \quad\checkmark
\end{align*}

\section{Bell State Computations}
\label{app:bell-computations}

This appendix contains the detailed computations for the Bell state examples in
Examples~\ref{ex:bell-pauli},~\ref{ex:bell-combined}, and~\ref{ex:bell-discard}.

\subsection{Pauli Decomposition Coefficients (Example~\ref{ex:bell-pauli})}
\label{app:bell-pauli-coeffs}

We compute $\bra{\beta_{00}}(M_1 \otimes M_2)\ket{\beta_{00}}$ for all
sixteen pairs $M_1, M_2 \in \{I, X, Y, Z\}$, using the Pauli actions:
\[
  \begin{array}{llll}
    I\ket{0} = \ket{0},  & X\ket{0} = \ket{1},
    & Y\ket{0} = i\ket{1},  & Z\ket{0} = \ket{0}, \\[3pt]
    I\ket{1} = \ket{1},  & X\ket{1} = \ket{0},
    & Y\ket{1} = -i\ket{0}, & Z\ket{1} = -\ket{1}.
  \end{array}
\]
In each case we first compute
$(M_1 \otimes M_2)\ket{\beta_{00}} =
\frac{1}{\sqrt{2}}(M_1\ket{0} \otimes M_2\ket{0}
+ M_1\ket{1} \otimes M_2\ket{1})$
and then take the inner product with
$\bra{\beta_{00}} = \frac{1}{\sqrt{2}}(\bra{00} + \bra{11})$.

\paragraph{Pairs with $M_1 = I$.}
\begin{align*}
  (I \otimes I)\ket{\beta_{00}}
  &= \tfrac{1}{\sqrt{2}}(\ket{00} + \ket{11}),
  &
  \bra{\beta_{00}}(I \otimes I)\ket{\beta_{00}}
  &= \tfrac{1}{2}(1 + 1) = 1.
  \\
  (I \otimes X)\ket{\beta_{00}}
  &= \tfrac{1}{\sqrt{2}}(\ket{01} + \ket{10}),
  &
  \bra{\beta_{00}}(I \otimes X)\ket{\beta_{00}}
  &= \tfrac{1}{2}(0 + 0) = 0.
  \\
  (I \otimes Y)\ket{\beta_{00}}
  &= \tfrac{1}{\sqrt{2}}(i\ket{01} - i\ket{10}),
  &
  \bra{\beta_{00}}(I \otimes Y)\ket{\beta_{00}}
  &= \tfrac{1}{2}(0 + 0) = 0.
  \\
  (I \otimes Z)\ket{\beta_{00}}
  &= \tfrac{1}{\sqrt{2}}(\ket{00} - \ket{11}),
  &
  \bra{\beta_{00}}(I \otimes Z)\ket{\beta_{00}}
  &= \tfrac{1}{2}(1 - 1) = 0.
\end{align*}

\paragraph{Pairs with $M_1 = X$.}
\begin{align*}
  (X \otimes I)\ket{\beta_{00}}
  &= \tfrac{1}{\sqrt{2}}(\ket{10} + \ket{01}),
  &
  \bra{\beta_{00}}(X \otimes I)\ket{\beta_{00}}
  &= \tfrac{1}{2}(0 + 0) = 0.
  \\
  (X \otimes X)\ket{\beta_{00}}
  &= \tfrac{1}{\sqrt{2}}(\ket{00} + \ket{11}),
  &
  \bra{\beta_{00}}(X \otimes X)\ket{\beta_{00}}
  &= \tfrac{1}{2}(1 + 1) = 1.
  \\
  (X \otimes Y)\ket{\beta_{00}}
  &= \tfrac{1}{\sqrt{2}}(i\ket{11} - i\ket{00}),
  &
  \bra{\beta_{00}}(X \otimes Y)\ket{\beta_{00}}
  &= \tfrac{1}{2}(-i + i) = 0.
  \\
  (X \otimes Z)\ket{\beta_{00}}
  &= \tfrac{1}{\sqrt{2}}(\ket{10} - \ket{01}),
  &
  \bra{\beta_{00}}(X \otimes Z)\ket{\beta_{00}}
  &= \tfrac{1}{2}(0 + 0) = 0.
\end{align*}

\paragraph{Pairs with $M_1 = Y$.}
\begin{align*}
  (Y \otimes I)\ket{\beta_{00}}
  &= \tfrac{1}{\sqrt{2}}(i\ket{10} - i\ket{01}),
  &
  \bra{\beta_{00}}(Y \otimes I)\ket{\beta_{00}}
  &= \tfrac{1}{2}(0 + 0) = 0.
  \\
  (Y \otimes X)\ket{\beta_{00}}
  &= \tfrac{1}{\sqrt{2}}(i\ket{11} - i\ket{00}),
  &
  \bra{\beta_{00}}(Y \otimes X)\ket{\beta_{00}}
  &= \tfrac{1}{2}(-i + i) = 0.
  \\
  (Y \otimes Y)\ket{\beta_{00}}
  &= \tfrac{1}{\sqrt{2}}(-\ket{11} - \ket{00}),
  &
  \bra{\beta_{00}}(Y \otimes Y)\ket{\beta_{00}}
  &= \tfrac{1}{2}(-1 - 1) = -1.
  \\
  (Y \otimes Z)\ket{\beta_{00}}
  &= \tfrac{1}{\sqrt{2}}(i\ket{10} + i\ket{01}),
  &
  \bra{\beta_{00}}(Y \otimes Z)\ket{\beta_{00}}
  &= \tfrac{1}{2}(0 + 0) = 0.
\end{align*}

\paragraph{Pairs with $M_1 = Z$.}
\begin{align*}
  (Z \otimes I)\ket{\beta_{00}}
  &= \tfrac{1}{\sqrt{2}}(\ket{00} - \ket{11}),
  &
  \bra{\beta_{00}}(Z \otimes I)\ket{\beta_{00}}
  &= \tfrac{1}{2}(1 - 1) = 0.
  \\
  (Z \otimes X)\ket{\beta_{00}}
  &= \tfrac{1}{\sqrt{2}}(\ket{01} - \ket{10}),
  &
  \bra{\beta_{00}}(Z \otimes X)\ket{\beta_{00}}
  &= \tfrac{1}{2}(0 + 0) = 0.
  \\
  (Z \otimes Y)\ket{\beta_{00}}
  &= \tfrac{1}{\sqrt{2}}(i\ket{01} + i\ket{10}),
  &
  \bra{\beta_{00}}(Z \otimes Y)\ket{\beta_{00}}
  &= \tfrac{1}{2}(0 + 0) = 0.
  \\
  (Z \otimes Z)\ket{\beta_{00}}
  &= \tfrac{1}{\sqrt{2}}(\ket{00} + \ket{11}),
  &
  \bra{\beta_{00}}(Z \otimes Z)\ket{\beta_{00}}
  &= \tfrac{1}{2}(1 + 1) = 1.
\end{align*}
This yields $\alpha_{II} = \alpha_{XX} = \alpha_{ZZ} = \frac14$,
$\alpha_{YY} = -\frac14$, and all others zero.

\subsection{Combined Decomposition Terms (Example~\ref{ex:bell-combined})}
\label{app:bell-combined-terms}

We expand each non-zero Pauli pair from equation~(\ref{eq:bell-pauli}) via the
spectral decomposition. For each pair, $p_{i\vec{l}} = \alpha_i \cdot
\lambda_{i_1}^{l_1} \cdot \lambda_{i_2}^{l_2}$.

\paragraph{From $I \otimes I$ ($\alpha_{II} = \frac{1}{4}$).}
Since all eigenvalues of $I$ are $+1$, every $p_{i\vec{l}} = \frac{1}{4}$:
\begin{align*}
  \vec{l} = (1,1) &\colon\; \tfrac{1}{4}\;\ketbra{0}{0} \otimes \ketbra{0}{0}
  & \vec{l} = (1,2) &\colon\; \tfrac{1}{4}\;\ketbra{0}{0} \otimes \ketbra{1}{1} \\
  \vec{l} = (2,1) &\colon\; \tfrac{1}{4}\;\ketbra{1}{1} \otimes \ketbra{0}{0}
  & \vec{l} = (2,2) &\colon\; \tfrac{1}{4}\;\ketbra{1}{1} \otimes \ketbra{1}{1}
\end{align*}

\paragraph{From $X \otimes X$ ($\alpha_{XX} = \frac{1}{4}$).}
The eigenvalues of $X$ are $+1$ and $-1$:
\begin{align*}
  \vec{l} = (1,1) &\colon\; \tfrac{1}{4}(+1)(+1)\;\ketbra{+}{+} \otimes \ketbra{+}{+}
    = \tfrac{1}{4}\;\ketbra{+}{+} \otimes \ketbra{+}{+} \\
  \vec{l} = (1,2) &\colon\; \tfrac{1}{4}(+1)(-1)\;\ketbra{+}{+} \otimes \ketbra{-}{-}
    = -\tfrac{1}{4}\;\ketbra{+}{+} \otimes \ketbra{-}{-} \\
  \vec{l} = (2,1) &\colon\; \tfrac{1}{4}(-1)(+1)\;\ketbra{-}{-} \otimes \ketbra{+}{+}
    = -\tfrac{1}{4}\;\ketbra{-}{-} \otimes \ketbra{+}{+} \\
  \vec{l} = (2,2) &\colon\; \tfrac{1}{4}(-1)(-1)\;\ketbra{-}{-} \otimes \ketbra{-}{-}
    = \tfrac{1}{4}\;\ketbra{-}{-} \otimes \ketbra{-}{-}
\end{align*}

\paragraph{From $Y \otimes Y$ ($\alpha_{YY} = -\frac{1}{4}$).}
The eigenvalues of $Y$ are $+1$ and $-1$:
\begin{align*}
  \vec{l} = (1,1) &\colon\; (-\tfrac{1}{4})(+1)(+1)\;\ketbra{i}{i} \otimes \ketbra{i}{i}
    = -\tfrac{1}{4}\;\ketbra{i}{i} \otimes \ketbra{i}{i} \\
  \vec{l} = (1,2) &\colon\; (-\tfrac{1}{4})(+1)(-1)\;\ketbra{i}{i} \otimes \ketbra{-i}{-i}
    = \tfrac{1}{4}\;\ketbra{i}{i} \otimes \ketbra{-i}{-i} \\
  \vec{l} = (2,1) &\colon\; (-\tfrac{1}{4})(-1)(+1)\;\ketbra{-i}{-i} \otimes \ketbra{i}{i}
    = \tfrac{1}{4}\;\ketbra{-i}{-i} \otimes \ketbra{i}{i} \\
  \vec{l} = (2,2) &\colon\; (-\tfrac{1}{4})(-1)(-1)\;\ketbra{-i}{-i} \otimes \ketbra{-i}{-i}
    = -\tfrac{1}{4}\;\ketbra{-i}{-i} \otimes \ketbra{-i}{-i}
\end{align*}

\paragraph{From $Z \otimes Z$ ($\alpha_{ZZ} = \frac{1}{4}$).}
The eigenvalues of $Z$ are $+1$ and $-1$:
\begin{align*}
  \vec{l} = (1,1) &\colon\; \tfrac{1}{4}(+1)(+1)\;\ketbra{0}{0} \otimes \ketbra{0}{0}
    = \tfrac{1}{4}\;\ketbra{0}{0} \otimes \ketbra{0}{0} \\
  \vec{l} = (1,2) &\colon\; \tfrac{1}{4}(+1)(-1)\;\ketbra{0}{0} \otimes \ketbra{1}{1}
    = -\tfrac{1}{4}\;\ketbra{0}{0} \otimes \ketbra{1}{1} \\
  \vec{l} = (2,1) &\colon\; \tfrac{1}{4}(-1)(+1)\;\ketbra{1}{1} \otimes \ketbra{0}{0}
    = -\tfrac{1}{4}\;\ketbra{1}{1} \otimes \ketbra{0}{0} \\
  \vec{l} = (2,2) &\colon\; \tfrac{1}{4}(-1)(-1)\;\ketbra{1}{1} \otimes \ketbra{1}{1}
    = \tfrac{1}{4}\;\ketbra{1}{1} \otimes \ketbra{1}{1}
\end{align*}

\subsection{Reduction of the \texorpdfstring{$\mathsf{let}$}{let} Term (Example~\ref{ex:bell-discard})}
\label{app:bell-let-terms}

After applying the $\mathsf{let}$ reduction rule to
$\letxk{x}{2}{\beta_{00}}{x_2}$ and dropping the vacuous substitution
on $x_1$, the 16 terms (grouped by Pauli pair) are:
\begin{align}
  &
  \phantom{+\;}
    \tfrac{1}{4}\ketbra{0}{0}
  + \tfrac{1}{4}\ketbra{1}{1}
  + \tfrac{1}{4}\ketbra{0}{0}
  + \tfrac{1}{4}\ketbra{1}{1}
  \nonumber\\
  &+ \tfrac{1}{4}\ketbra{+}{+}
  - \tfrac{1}{4}\ketbra{-}{-}
  - \tfrac{1}{4}\ketbra{+}{+}
  + \tfrac{1}{4}\ketbra{-}{-}
  \nonumber\\
  &- \tfrac{1}{4}\ketbra{i}{i}
  + \tfrac{1}{4}\ketbra{-i}{-i}
  + \tfrac{1}{4}\ketbra{i}{i}
  - \tfrac{1}{4}\ketbra{-i}{-i}
  \nonumber\\
  &+ \tfrac{1}{4}\ketbra{0}{0}
  - \tfrac{1}{4}\ketbra{1}{1}
  - \tfrac{1}{4}\ketbra{0}{0}
  + \tfrac{1}{4}\ketbra{1}{1}
  \label{eq:bell-let-16terms}
\end{align}
Collecting coefficients of equal density matrices:
\begin{alignat*}{3}
  \ketbra{0}{0} &\colon\quad
  &&\tfrac{1}{4} + \tfrac{1}{4} + \tfrac{1}{4} - \tfrac{1}{4}
  &&= \tfrac{1}{2} \\
  \ketbra{1}{1} &\colon\quad
  &&\tfrac{1}{4} + \tfrac{1}{4} - \tfrac{1}{4} + \tfrac{1}{4}
  &&= \tfrac{1}{2} \\
  \ketbra{+}{+} &\colon\quad
  &&\tfrac{1}{4} - \tfrac{1}{4}
  &&= 0 \\
  \ketbra{-}{-} &\colon\quad
  &&\tfrac{1}{4} - \tfrac{1}{4}
  &&= 0 \\
  \ketbra{i}{i} &\colon\quad
  &&\tfrac{1}{4} - \tfrac{1}{4}
  &&= 0 \\
  \ketbra{-i}{-i} &\colon\quad
  &&\tfrac{1}{4} - \tfrac{1}{4}
  &&= 0
\end{alignat*}
so the result is $\tfrac{1}{2}\ketbra{0}{0} + \tfrac{1}{2}\ketbra{1}{1} =
\tfrac{I}{2} = \tr_1(\beta_{00})$.

\section{Detailed Proofs}
\label{app:proofs}

Throughout, we assume standard weakening and strengthening properties for the
affine type system.

\subsection{Substitution Lemma}

\begin{lemma}[Substitution]
  \label{lem:substitution}
  If $\Gamma,x:A\vdash t:B$ and $\Delta\vdash r:A$, then
  $\Gamma,\Delta\vdash t[r/x]:B$.
\end{lemma}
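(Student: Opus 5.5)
We argue by induction on the derivation of $\Gamma,x:A\vdash t:B$, distinguishing cases on the last typing rule. The affine discipline means $x$ occurs at most once in $t$. Disjointness of contexts in the binary rules determines which premise receives $x$.

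\emph{Base cases.}
\begin{itemize}
\item \textsf{ax}, with $t=x$: then $B=A$ and $t[r/x]=r$. We obtain $\Gamma,\Delta\vdash r:A$ from $\Delta\vdash r:A$ by weakening.
\item \textsf{ax}, with $t=y\neq x$: the result follows directly from the axiom with the larger context.
\item $\mathsf{ax}_\rho$: likewise immediate from the axiom with the larger context.
\end{itemize}

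\emph{Unary rules.} For \textsf{u} and \textsf{m}, apply the induction hypothesis to the premise and reapply the rule. For $\multimap_i$, assume by Barendregt's convention that the bound variable $y$ is fresh, i.e.\ $y\neq x$ and $y\notin FV(r)$. Apply the induction hypothesis to $\Gamma,y:C,x:A\vdash t':B'$, permuting the context, and conclude with $\multimap_i$.

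\emph{Binary rules.} For $\multimap_e$ and $\otimes$, the context $\Gamma,x:A$ is split as $\Gamma_1,\Gamma_2$. Exactly one part contains $x:A$, say $\Gamma_1=\Gamma_1',x:A$.
\begin{itemize}
\item The induction hypothesis on that premise gives $\Gamma_1',\Delta\vdash t_1[r/x]:C$.
\item In the other premise $x$ is not in the context, hence not free, so $t_2[r/x]=t_2$.
\item Reapplying the rule yields context $\Gamma_1',\Delta,\Gamma_2=\Gamma,\Delta$. This is disjoint, after $\alpha$-renaming $\Delta$'s variables away from $\Gamma$, as assumed implicitly in the statement.
\end{itemize}

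\emph{Rule $+$.} All summands share the context $\Gamma,x:A$, so we apply the induction hypothesis to each premise and reapply $+$. Substitution distributes over $\sum$ syntactically.

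\emph{Rule \textsf{lc}.} The branches are typed in the closed context $y:n$ and contain no free $x$, so they are unchanged. The induction hypothesis applies to the scrutinee.

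\emph{Rule \textsf{let}.} The context $\Gamma,x:A$ is split into $\Gamma_1$ (for the source $t_0:n$) and $\Gamma_2,x_1:1,\dots,x_n:1$ (for the body $s$). By renaming, the bound variables $x_1,\dots,x_n$ are distinct from $x$ and not free in $r$.
\begin{itemize}
\item If $x\in\Gamma_1$, use the induction hypothesis on the source.
\item Otherwise, use it on the body, with the context permuted to $\Gamma_2',x_1:1,\dots,x_n:1,x:A$.
\item Substitution commutes with the binder, so $(\letxn{y}{t_0}{s})[r/x]=\letxn{y}{t_0[r/x]}{s[r/x]}$. Reapplying \textsf{let} concludes.
\end{itemize}

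The main obstacle is the bookkeeping for the context-splitting rules, especially the new \textsf{let} rule with its $n$ bound variables. There we must check that $x$ lands in exactly one premise and that freshness guarantees the substitution does not capture the $x_k$. The rest is routine.

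Two auxiliary facts are needed:
\begin{itemize}
\item a weakening lemma, which the appendix assumes;
\item the observation that if $x\notin\mathrm{dom}$ of a premise's context then $x\notin FV$ of that premise. This is a standard free-variable lemma, provable by a quick induction.
\end{itemize}
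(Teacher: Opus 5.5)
Your proof is correct and is essentially the paper's argument. The paper inducts on $t$ and inverts the syntax-directed typing rules, which amounts to your induction on the last rule of the derivation. The cases match: weakening when $t=x$; locating $x:A$ in exactly one half of the split context for $\multimap_e$, $\otimes$ and $\mathsf{let}$, with the other subterm untouched because $x$ is not free there; the induction hypothesis on every premise of $+$; and only the scrutinee affected in $\mathsf{lc}$.

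One wording fix: free variables of $\Delta$ cannot be $\alpha$-renamed. The disjointness of $\Gamma$ and $\Delta$ is simply a presupposition of the statement, as in the paper. Likewise, bound names must be fresh with respect to $\mathrm{dom}(\Delta)$, not only $FV(r)$.
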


\begin{proof}
By induction on $t$.
\begin{itemize}
\item Let $t=x$. Then $B=A$. By weakening, $\Gamma,\Delta\vdash r:A$.
  Notice that $t[r/x]=r$.
\item Let $t=y$. Then, by weakening and strengthening,
  $\Gamma,\Delta\vdash y:B$. Notice that $t[r/x]=y$.
\item Let $t=\lambda y.s$. Then $B=C\multimap D$ and, by inversion,
  $\Gamma,x:A,y:C\vdash s:D$. Then, by the induction hypothesis,
  $\Gamma,y:C,\Delta\vdash s[r/x]:D$, so, by rule $\multimap_i$,
  $\Gamma,\Delta\vdash\lambda y.(s[r/x]):C\multimap D$. Notice that $\lambda
  y.(s[r/x])=(\lambda y.s)[r/x]$.
\item Let $t=t_1t_2$. Then $\Gamma,x:A=\Gamma_1,\Gamma_2$, with $\Gamma_1\vdash
  t_1:C\multimap B$ and $\Gamma_2\vdash t_2:C$.
  \begin{itemize}
  \item If $x:A\in\Gamma_1$, then, by the induction hypothesis
    $\Gamma_1\setminus\{x:A\},\Delta\vdash t_1[r/x]:C\multimap B$, so by rule
    $\multimap_e$, $\Gamma_1\setminus\{x:A\},\Gamma_2,\Delta\vdash
    t_1[r/x]t_2:B$. Notice that $\Gamma_1\setminus\{x:A\},\Gamma_2=\Gamma$ and
    $t_1[r/x]t_2=(t_1t_2)[r/x]$.
  \item If $x:A\in\Gamma_2$, then, by the induction hypothesis
    $\Gamma_2\setminus\{x:A\},\Delta\vdash t_2[r/x]:C$, so by rule
    $\multimap_e$, $\Gamma_1,\Gamma_2\setminus\{x:A\},\Delta\vdash
    t_1(t_2[r/x]):B$. Notice that $\Gamma_1,\Gamma_2\setminus\{x:A\}=\Gamma$ and
    $t_1(t_2[r/x])=(t_1t_2)[r/x]$.
  \end{itemize}
\item Let $t=\rho^n$. Then $B=n$. By weakening and strengthening,
  $\Gamma,\Delta\vdash\rho^n:n$. Notice that $t[r/x]=\rho^n$.
\item Let $t=U^ms$. Then $B=n$ and $\Gamma,x:A\vdash s:n$. Then, by the
  induction hypothesis, $\Gamma,\Delta\vdash s[r/x]:n$. So, by rule $u$,
  $\Gamma,\Delta\vdash U^m(s[r/x]):n$. Notice that $U^m(s[r/x])=(U^ms)[r/x]$.
\item Let $t=\pi^ms$. Then $B=(m,n)$ and $\Gamma,x:A\vdash s:n$. Then, by the
  induction hypothesis, $\Gamma,\Delta\vdash s[r/x]:n$. So, by rule $m$,
  $\Gamma,\Delta\vdash \pi^m(s[r/x]):(m,n)$. Notice that
  $\pi^m(s[r/x])=(\pi^ms)[r/x]$.
\item Let $t=t_1\otimes t_2$. Then $B=n_1+n_2$, $\Gamma,x:A=\Gamma_1,\Gamma_2$
  with $\Gamma_i\vdash t_i:n_i$ for $i=1,2$. Let $x:A\in\Gamma_i$ for some
  $i=1,2$. Then, by the induction hypothesis,
  $\Gamma_i\setminus\{x:A\},\Delta\vdash t_i[r/x]$, so by rule $\otimes$, either
  $\Gamma,\Delta\vdash t_1[r/x]\otimes t_2:n_1+n_2$, or $\Gamma,\Delta\vdash
  t_1\otimes t_2[r/x]:n_1+n_2$. In the first case, notice that $t_1[r/x]\otimes
  t_2=(t_1\otimes t_2)[r/x]$, and in the second, $t_1\otimes
  t_2[r/x]=(t_1\otimes t_2)[r/x]$.
\item Let $t=\sum_ip_it_i$. Then $\Gamma,x:A\vdash t_i:B$ and so, by the
  induction hypothesis, $\Gamma,\Delta\vdash t_i[r/x]:B$. Therefore, by rule
  $+$, $\Gamma,\Delta\vdash\sum_ip_it_i[r/x]:B$. Notice that $\sum_ip_it_i[r/x]
  = (\sum_ip_it_i)[r/x]$.
\item Let $t=\qletcase{y}{s}{t_0,\dots,t_{2^m-1}}$. $y:n\vdash t_i:B$, for
  $i=0,\dots,2^m-1$, and, $\Gamma\vdash s:(m,n)$. By the induction hypothesis,
  $\Gamma,\Delta\vdash s[r/x]:(m,n)$. So, by rule $\mathsf{lc}$,
  $\Gamma,\Delta\vdash\qletcase{y}{s[r/x]}{t_0,\dots,t_{2^m-1}}\!\!:B$. Notice
  that $(\qletcase{y}{s}{t_0,\dots,t_{2^m-1}})[r/x] = \qletcase{y}{s[r/x]}{t_0,\dots,t_{2^m-1}}$.
\item Let $t = \letxn{y}{s}{u}$. Then $\Gamma, x:A = \Gamma_1, \Gamma_2$ with
  $\Gamma_1 \vdash s : n$ and $\Gamma_2, y_1:1, \ldots, y_n:1 \vdash u : B$.
  \begin{itemize}
  \item If $x:A \in \Gamma_1$, then by the induction hypothesis,
    $\Gamma_1 \setminus \{x:A\}, \Delta \vdash s[r/x] : n$.
    Since $x \notin FV(u)$, we have $u[r/x] = u$.
    By rule $\mathsf{let}$,
    $\Gamma_1 \setminus \{x:A\}, \Gamma_2, \Delta \vdash \letxn{y}{s[r/x]}{u} : B$.
  \item If $x:A \in \Gamma_2$, then by the induction hypothesis,
    $\Gamma_2 \setminus \{x:A\}, \Delta, y_1:1, \ldots, y_n:1 \vdash u[r/x] : B$.
    Since $x \notin FV(s)$, we have $s[r/x] = s$.
    By rule $\mathsf{let}$,
    $\Gamma_1, \Gamma_2 \setminus \{x:A\}, \Delta \vdash \letxn{y}{s}{u[r/x]} : B$.
    \qed
  \end{itemize}
\end{itemize}
\end{proof}

\subsection{Subject Reduction (Theorem~\ref{thm:SR})}

\begin{proof}~
\begin{itemize}
\item Let $t=(\lambda x.t')s$, $r=t'[s/x]$. Then
  $\Gamma\vdash(\lambda x.t')s:A$, so, $\Gamma_1\vdash\lambda x.t':B\multimap
  A$ and $\Gamma_2\vdash s:B$, with $\Gamma=\Gamma_1,\Gamma_2$. Hence,
  $\Gamma_1,x:B\vdash t':A$, and so, by Lemma~\ref{lem:substitution},
  $\Gamma\vdash t'[s/x]:A$.
\item Let $t=U^m\rho^n$, $r={\rho'}^n$, with
  ${\rho'}^n=\ext{U^m}\rho^n\ext{U^m}^\dagger$. Then $A=n$. By rule
  $\mathsf{ax}_\rho$, $\Gamma\vdash{\rho'}^n:n$.
\item Let $t=\rho_1^n\otimes\rho_2^m$ and $r=\rho$, with
  $\rho=\rho_1^n\otimes\rho_2^m$. Then, $A=n+m$, with $\vdash\rho_1^n:n$ and
  $\vdash\rho_2^m:m$. Since $\rho$ is a density matrix of $(n+m)$-qubits,
  $\vdash\rho:n+m$.
\item Let $t=\qletcase{x}{\pi^m\rho^n}{t_0,\dots,t_{2^m-1}}$ and $r=\sum_i p_i
  t_i[\rho^n_i/x]$, with
  $\rho_i^n=\frac{\ext{\pi_i}\rho^n\ext{\pi_i}^\dagger}{p_i}$ and
  $p_i=\tr(\ext{\pi_i}^\dagger\ext{\pi_i}\rho^n)$. Then
  $\Gamma\vdash\pi^m\rho^n:(m,n)$ and $x:n\vdash t_i:A$. By
  Lemma~\ref{lem:substitution}, $\Gamma\vdash t_i[\rho_i^n/x]:A$, then, by
  rule $+$, $\Gamma\vdash\sum_ip_it_i[\rho_i^n/x]:A$.
\item Let $t=\qletcase{x}{\sum_j q_j \hat w_j}{s_0,\dots,s_{2^m-1}}$ and
  $r=\sum_j q_j\,\qletcase{x}{\hat w_j}{s_0,\dots,s_{2^m-1}}$.
  By inversion of rule $\mathsf{lc}$, $\Gamma\vdash\sum_j q_j \hat w_j:(m,n)$
  and $x:n\vdash s_i:A$ for each $i$.  By rule $+$, each $\hat w_j:(m,n)$.
  Applying rule $\mathsf{lc}$ to each $\hat w_j$ gives
  $\Gamma\vdash\qletcase{x}{\hat w_j}{s_0,\dots,s_{2^m-1}}:A$, and then rule $+$
  gives $\Gamma\vdash r:A$.
\item Let $t=\sum_i p_i\rho_i$ and $r=\rho'$, with $\rho'=\sum_ip_i\rho_i$.
  Then, $\Gamma\vdash\sum_ip_i\rho_i:n$, and by rule $\mathsf{ax}_\rho$,
  $\Gamma\vdash\rho':n$.
\item Let $t=\sum_i p_i r$. Then, $\Gamma\vdash r:A$.
\item Let $t=(\sum_i p_i t_i)r$ and $\sum_i p_i (t_ir)$. Then,
  $\Gamma=\Gamma_1,\Gamma_2$, $\Gamma_1\vdash t_i:B\multimap A$ and
  $\Gamma_2\vdash r:B$. Therefore, by rule $\multimap_e$,
  $\Gamma_1,\Gamma_2\vdash t_ir:A$, and by rule $+$,
  $\Gamma_1,\Gamma_2\vdash\sum_ip_i(t_ir):A$.
\item Let $t = \letxn{x}{\rho^n}{s}$ and $r = \sum_{i=1}^{4^n} \sum_{\vec{l}
  \in \{1,2\}^n} p_{i\vec{l}}\, s[\rho_{i_{1}}^{l_1}/x_1, \cdots,
  \rho_{i_{n}}^{l_n}/x_n]$, where $p_{i\vec{l}}$ are the combined
  decomposition coefficients. We have $\Gamma, \Delta \vdash \letxn{x}{\rho^n}{s} : A$. By
  inversion, $\Gamma \vdash \rho^n : n$ and $\Delta, x_1:1, \ldots, x_n:1
  \vdash s : A$.
  As each $\rho_{i_{k}}^{l_k}$ is a 1-qubit density matrix by construction,
  by rule $\mathsf{ax_\rho}$ we have $\vdash \rho_{i_{k}}^{l_k} : 1$.
  Then, by repeated application of Lemma~\ref{lem:substitution}, we have that
  $\Delta \vdash s[\rho_{i_{1}}^{l_1}/x_1, \cdots, \rho_{i_{n}}^{l_n}/x_n] : A$.
  Since the Pauli decomposition preserves trace, we have that
  $\sum_{i=1}^{4^n} \sum_{\vec{l} \in \{1,2\}^n} p_{i\vec{l}} = 1$. Then, by
  rule $+$, we have $\Gamma, \Delta \vdash \sum_{i=1}^{4^n} \sum_{\vec{l}
  \in \{1,2\}^n} p_{i\vec{l}}\, s[\rho_{i_{1}}^{l_1}/x_1, \cdots,
  \rho_{i_{n}}^{l_n}/x_n] : A$.
\item Contextual cases: Let $s \rightarrow s'$, then
  \begin{itemize}
  \item Consider $t=t's$ and $r=t's'$. Then $\Gamma=\Gamma_1,\Gamma_2$, with
    $\Gamma_1\vdash t':B\multimap A$ and $\Gamma_2\vdash s:B$. By the
    induction hypothesis, $\Gamma_2\vdash s':B$, so by rule $\multimap_e$,
    $\Gamma\vdash t's':A$.
  \item Consider $t=st'$ and $r=s't'$. Then $\Gamma=\Gamma_1,\Gamma_2$, with
    $\Gamma_1\vdash s:B\multimap A$ and $\Gamma_2\vdash t':B$. By the
    induction hypothesis, $\Gamma_1\vdash s':B\multimap A$, so by rule
    $\multimap_e$, $\Gamma\vdash s't':A$.
  \item Consider $t=U^ms$ and $r=U^ms'$. Then $A=n$ and $\Gamma\vdash s:n$. By
    the induction hypothesis $\Gamma\vdash s':n$, so by rule $\mathsf{u}$,
    $\Gamma\vdash U^ms':n$.
  \item Consider $t=\pi^ms$ and $r=\pi^ms'$. Then $A=(m, n)$ and $\Gamma\vdash
    s:n$. By the induction hypothesis $\Gamma\vdash s':n$, so by rule
    $\mathsf{m}$, $\Gamma\vdash \pi^ms':(m,n)$.
  \item Consider $t=t'\otimes s$ and $r=t'\otimes s'$. Then $A=n+m$ and
    $\Gamma=\Gamma_1,\Gamma_2$, with $\Gamma_1\vdash t':n$ and $\Gamma_2\vdash
    s:m$. By the induction hypothesis $\Gamma_2\vdash s':m$, so by rule
    $\otimes$, $\Gamma\vdash t'\otimes s':n+m$.
  \item Consider $t=s\otimes t'$ and $r=s'\otimes t'$. Then $A=n+m$ and
    $\Gamma=\Gamma_1,\Gamma_2$, with $\Gamma_1\vdash s:n$ and $\Gamma_2\vdash
    t':m$. By the induction hypothesis $\Gamma_1\vdash s':n$, so by rule
    $\otimes$, $\Gamma\vdash s'\otimes t':n+m$.
  \item Consider $t=\qletcase{x}{s}{t_0,\dots,t_{2^m-1}}$ and $r=\qletcase{x}{s'}{t_0,\dots,t_{2^m-1}}$. Then $x:n\vdash t_i:A$ for $i=0,\dots,2^m-1$,
    and $\Gamma\vdash s:(m,n)$. By the induction hypothesis, $\Gamma\vdash
    s':(m,n)$ and by rule $\mathsf{lc}$, $\Gamma\vdash\qletcase{x}{s'}{t_0,\dots,t_{2^m-1}}:A$.
  \item Consider $t=\sum_ip_it_i$ and $r=\sum_ip_ir_i$, with $t_j\rightarrow r_j$, and
    $\forall i\neq j$, $t_i=r_i$. By inversion, $\Gamma\vdash t_i:A$. By the induction
    hypothesis, $\forall i$, $\Gamma\vdash r_i:A$. Then, by rule $+$,
    $\Gamma\vdash\sum_ip_ir_i:A$.
  \item Consider $t=\letxn{y}{s}{u}$ and $r=\letxn{y}{s'}{u}$. Then
    $\Gamma=\Gamma_1,\Gamma_2$, with $\Gamma_1\vdash s:n$ and
    $\Gamma_2,y_1:1,\ldots,y_n:1\vdash u:A$. By the induction hypothesis
    $\Gamma_1\vdash s':n$, so by rule $\mathsf{let}$,
    $\Gamma\vdash \letxn{y}{s'}{u}:A$.
    \qed
  \end{itemize}
\end{itemize}
\end{proof}

\subsection{Progress (Theorem~\ref{thm:Progress})}

\begin{proof}
  We relax the hypotheses and prove the theorem for open terms as well. That is:
  if $\Gamma\vdash t:A$, then either $t$ is a value, there exists $r$
  such that $t\rightarrow r$, or $t$ contains a free variable, and $t$ does not
  rewrite.
  \begin{itemize}
  \item Let $\Gamma,x:A\vdash x:A$ as a consequence of rule $\mathsf{ax}$. Then,
    we are done since $x$ is a free variable and does not rewrite.
  \item Let $\Gamma\vdash\lambda x.t:A\multimap B$ as a consequence of
    $\Gamma,x:A\vdash t:B$ and rule $\multimap_i$. Since reduction under
    $\lambda$ is not part of the rewrite system, $\lambda x.t$ is always a
    value by Definition~\ref{def:values}.
  \item Let $\Gamma,\Delta\vdash tr:B$ as a consequence of $\Gamma\vdash
    t:A\multimap B$, $\Delta\vdash r:A$ and rule $\multimap_e$. Then, by the
    induction hypothesis, one of the following cases happens:
    \begin{itemize}
    \item There exists $t'$ such that $t\rightarrow t'$, in which case $tr\rightarrow t'r$.
    \item There exists $r'$ such that $r\rightarrow r'$, in which case $tr\rightarrow tr'$.
    \item $t$ is a value and $r$ does not rewrite. The only values which
      can be typed by $A\multimap B$ are:
      \begin{itemize}
      \item $t=x$, in which case $xr$ contains a free variable and does not
        rewrite.
      \item $t=\lambda x.v$, in which case $(\lambda x.v)r\rightarrow v[r/x]$.
      \item $t=\sum_i p_i t_i$, where $\Gamma\vdash t_i:A\multimap B$. Then,
        $tr\rightarrow\sum_i p_i (t_ir)$.
      \end{itemize}
    \item $t$ is not a value, contains a free variable, and does not rewrite,
      and $r$ does not rewrite, in which case, if $t$ is not a sum, $tr$
      contains a free variable and does not rewrite. If $t=\sum_ip_it_i$ is a
      sum, $tr\rightarrow\sum_ip_i(t_ir)$.
    \end{itemize}
  \item Let $\Gamma\vdash\rho^n:n$ as a consequence of rule $\mathsf{ax_\rho}$.
    Then, we are done since $\rho^n$ is a value.
  \item Let $\Gamma\vdash U^mt:n$ as a consequence of $\Gamma\vdash t:n$ and
    rule $\mathsf{u}$. Then, by the induction hypothesis, one of the following
    cases happens:
    \begin{itemize}
    \item $t$ is a value. Since $\lambda x.v$ cannot be typed by $n$, the only
      values that can be typed by $n$ are either $\rho^n$, or they contain
      free variables:
      \begin{itemize}
      \item Let $t=\rho^n$, then $U^m\rho^n\rightarrow\rho'$, with
        $\rho'=\ext{U^m}\rho^n\ext{U^m}^\dagger$.
      \item Let $t$ contain a free variable. Notice that it can only be either a
        free variable by itself, a tensor of values containing free
        variables, or a linear combination of different values containing
        free variables. In any case, $t$ contains a free variable and does not
        rewrite. Hence, $U^mt$ contains a free variable and does not rewrite.
      \end{itemize}
    \item There exists $r$ such that $t\rightarrow r$, in which case $U^m t \rightarrow U^m r$;
    \item $t$ contains a free variable and does not rewrite, in which case the
      same is true for $U^mt$.
    \end{itemize}
  \item Let $\Gamma\vdash \pi^mt:(m,n)$ as a consequence of $\Gamma\vdash t:n$
    and rule $\mathsf{m}$. Then, by the induction hypothesis, one of the
    following cases happens:
    \begin{itemize}
    \item $t$ is a value. Then $\pi^m t$ is also a value and does not rewrite.
    \item $t$ contains a free variable and does not rewrite, in which case the
      same is true for $\pi^mt$.
    \end{itemize}
  \item Let $\Gamma,\Delta,\vdash t\otimes r:n+m$ as a consequence of
    $\Gamma\vdash t:n$, $\Delta\vdash r:m$ and rule $\otimes$. Then, by the
    induction hypothesis, one of the following happens:
    \begin{itemize}
    \item There exists $t'$ such that $t\rightarrow t'$, in which case $t\otimes r\rightarrow
      t'\otimes r$.
    \item There exists $r'$ such that $r\rightarrow r'$, in which case $t\otimes r\rightarrow
      t\otimes r'$.
    \item $t$ is a value and $r$ does not rewrite. The only values that
      can be typed by $n$ are either $\rho^n$, or they contain free variables.
      \begin{itemize}
      \item Let $t=\rho^n$, then:
        \begin{itemize}
        \item If $r=\rho^m$, $t\otimes r\rightarrow \rho'$, with
          $\rho'=\rho^n\otimes\rho^m$.
        \item If $r$ contains a free variable and does not rewrite, then the
          same is true for $t\otimes r$.
        \end{itemize}
      \item Let $t$ contain a free variable. Then $t\otimes r$ contains a free
        variable and does not rewrite.
      \end{itemize}
    \item $t$ contains a free variable and does not rewrite, and $r$ does not
      rewrite, in which case $t\otimes r$ contains a free variable and does not
      rewrite.
    \end{itemize}
  \item Let $\Gamma\vdash\qletcase{x}{r}{t_0,\dots,t_{2^m-1}}:A$ as a consequence
    of $x:n\vdash t_i:A$ for $i=0,\dots,2^m-1$, $\Gamma\vdash r:(m,n)$, and rule
    $\mathsf{lc}$. By the induction hypothesis, the possible cases for $r$ are:
    \begin{itemize}
    \item $r=\pi^m\rho^n$ (a measurement value on a concrete density matrix).
      Then the main $\mathsf{letcase}^\circ$ rule fires:
      \[\qletcase{x}{\pi^m\rho^n}{t_0,\dots,t_{2^m-1}}\rightarrow \sum_ip_it_i[\rho^n_i/x].\]
    \item $r=\sum_j q_j w_j$ is a closed non-trivial linear combination of
      values $w_j:(m,n)$ (necessarily each $w_j=\pi^m\rho^n_j$).  Then the
      distributing rule fires:
      \[\begin{aligned}
        &\qletcase{x}{\sum_j q_j w_j}{t_0,\dots,t_{2^m-1}}\\
        &\qquad\rightarrow \sum_j q_j\,\qletcase{x}{w_j}{t_0,\dots,t_{2^m-1}}.
      \end{aligned}\]
    \item $r$ contains a free variable and does not rewrite, in which case the
      same is true for $\qletcase{x}{r}{t_0,\dots,t_{2^m-1}}$.
    \item There exists $r'$ such that $r\rightarrow r'$, in which case, by contextual
      closure,
      \[\qletcase{x}{r}{t_0,\dots,t_{2^m-1}}\rightarrow\qletcase{x}{r'}{t_0,\dots,t_{2^m-1}}.\]
    \end{itemize}
  \item Let $\Gamma\vdash\sum_ip_it_i:A$ as a consequence of $\Gamma\vdash
    t_i:A$, $\sum_ip_i=1$, and rule $+$. If $\sum_ip_it_i$ is a value, then we
    are done. If it is not a value, then one of the following cases is true:
    \begin{itemize}
    \item $t_j=t_k$ for some $j\neq k$, in which case $\sum_ip_it_i\rightarrow
      (\sum_{i\neq j,k}p_it_i)+(p_j+p_k)t_j$.
    \item At least one $t_i$ is not a value. By the induction hypothesis, if
      $t_i$ is not a value, either it rewrites, or it contains a free variable
      and does not rewrite. If at least one $t_i$ rewrites, then $\sum_ip_it_i$
      rewrites. If none of these rewrites and at least one contains a free
      variable, then $\sum_ip_it_i$ does not rewrite and contain a free
      variable.
    \end{itemize}
  \item Let $\Gamma, \Delta \vdash \letxn{x}{t}{s} : A$
    as a consequence of $\Gamma \vdash t : n$, $\Delta, x_1:1, \ldots, x_n:1 \vdash s : A$
    and rule $\mathsf{let}$. By induction, the possible cases for $t$ are:
    \begin{itemize}
    \item $t$ is a value. Since $\lambda x.v$ cannot be typed by $n$, the only
      values that can be typed by $n$ are either $\rho^n$, or they contain
      free variables:
      \begin{itemize}
      \item Let $t=\rho^n$, then
        \[\letxn{x}{\rho^n}{s} \rightarrow \sum_{i=1}^{4^n} \sum_{l_1, \cdots, l_n \in \{1,2\}} p_{i\vec{l}}\, s[\rho_{i_{1}}^{l_1}/x_1, \cdots, \rho_{i_{n}}^{l_n}/x_n].\]
      \item Let $t$ contain a free variable. Notice that it can only be either a
        free variable by itself, a tensor of values containing free
        variables, or a linear combination of different values containing
        free variables. In any case, $t$ contains a free variable and does not
        rewrite. Hence, $\letxn{x}{t}{s}$ contains a free variable and does not rewrite.
      \end{itemize}
    \item There exists $t'$ such that $t \rightarrow t'$, in which case $\letxn{x}{t}{s} \rightarrow \letxn{x}{t'}{s}$.
      \item $t$ contains a free variable and does not rewrite, in which case the
        same is true for $\letxn{x}{t}{s}$.
	\qed
    \end{itemize}
  \end{itemize}
\end{proof}

\subsection{Strong Normalisation (Theorem~\ref{thm:strnorm})}

The proof uses Girard's reducibility candidates method~\cite{GirardLafontTaylor89}:
strong normalisation is established by defining typed reducibility sets $\SN(A)$,
proving a forward-closure lemma (Lemma~\ref{lem:cr3}), establishing
per-construct closure lemmas, and concluding by a fundamental substitution
lemma.

\paragraph{Extended reduction relation.}
Following D\'iaz-Caro and Dowek~\cite{DiazCaroDowek24}, we augment the
reduction relation with the following \emph{projection rules}:
\[
  \textstyle\sum_i p_i\, t_i \;\rightarrow\; t_j
  \qquad \text{for any index } j.
\]
These rules are not part of the operational semantics of the calculus; they are
added solely for the purpose of the strong normalisation argument. Since the
original reduction relation is a subset of the extended one, strong normalisation
for the extended relation implies strong normalisation for the original.

\paragraph{The sets $\SN$ and $\SN(A)$.}
Let $\SN$ denote the set of all terms that are strongly normalising under the
extended relation. The \emph{length} $\ell(t)$ of a term $t \in \SN$ is the
maximal length of any reduction sequence from $t$.
We define, by induction on the type $A$, a set $\SN(A)$ of \emph{reducible}
terms of type $A$:
\begin{align*}
  \SN(n) &= \SN, \qquad \SN((m,n)) = \SN, \\
  \SN(A \multimap B) &= \bigl\{\, t \in \SN \;\big|\;
    t \rightarrow^{*} \lambda x.\,u \;\Rightarrow \\
  &\hphantom{{}={}} \forall v \in \SN(A),\;
    u[v/x] \in \SN(B) \,\bigr\}.
\end{align*}

\begin{lemma}[$\SN$ is closed under sums]
  \label{lem:sn-sum}
  If $t_1, \ldots, t_k \in \SN$ then $\sum_i p_i\, t_i \in \SN$.
\end{lemma}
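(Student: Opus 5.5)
We prove the lemma by induction on the measure $\sum_i \ell(t_i)$ (well defined because each $t_i\in\SN$), with a secondary induction on the number $k$ of summands. A term is in $\SN$ exactly when all of its one-step reducts are in $\SN$. It therefore suffices to take an arbitrary one-step reduct $r$ of $\sum_i p_i t_i$ under the extended relation and show $r\in\SN$. Since $\sum$ is taken modulo associativity and commutativity, we view a sum as a multiset of weighted summands. Reducts are classified by the head rule that fires at the root, or by the contextual rule for sums.

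\textbf{Contextual reduction.} Suppose $t_j\rightarrow t_j'$ while the other summands are unchanged. Then $t_j'\in\SN$ with $\ell(t_j')<\ell(t_j)$, so the measure strictly decreases and the induction hypothesis gives $\sum_i p_i t_i'\in\SN$.

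\textbf{Projection rule.} Here $\sum_i p_i t_i\rightarrow t_j$, and $t_j\in\SN$ by hypothesis.

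\textbf{Rule $\sum_i p_i t\rightarrow t$.} This applies when all summands are syntactically equal. The reduct is one of the $t_i$, which lies in $\SN$.

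\textbf{Rule $\sum_i p_i\rho_i\rightarrow\rho'$.} The reduct is a density matrix, which is a normal form and hence in $\SN$.

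\textbf{Reductions arising from AC.} When AC is used at a proper sub-sum, for example when identical summands are merged inside a larger sum, the result is a sum with fewer summands. Its summands are among the original $t_i$, or are obtained by merging equal ones, so the measure does not increase and $k$ strictly decreases. The secondary induction on $k$ then handles this case.

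Some reductions of a sum are not at the root but occur when the sum sits in a context, such as $(\sum p_i t_i)r$ or $\mathsf{letcase}^\circ$ distributivity. These are not reductions of the sum itself, so they do not arise here. They are handled later in the per-construct lemmas.

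\textbf{Main obstacle.} The difficulty is bookkeeping for rewriting modulo AC. Nested sums $\sum_i p_i(\sum_j q_j s_j)$ may be flattened, so a one-step reduct can regroup summands, and we must ensure the measure still decreases. I would first fix a canonical flattened representation of sums. I would then define the measure on the multiset of non-sum leaves, using the multiset extension of $\ell$, rather than on the top-level summands. The contextual and merging cases then decrease in the multiset order. Flattening itself is invisible because the representation is canonical.

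If the leaf-multiset measure does not interact cleanly with the projection rules, the fallback is to note that every reduct of the sum is either a sum of reducts of the leaves or a single leaf. An inner induction on the total $\sum\ell$ over leaves then concludes.
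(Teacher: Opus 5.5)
Your proof is correct and follows the paper's own argument: induction on $\sum_i \ell(t_i)$ with a secondary measure, checking that every one-step reduct (contextual, collapse, or projection) lies in $\SN$. The paper's secondary measure is the size of the term where yours is the number of summands $k$; your multiset-of-leaves measure is a sound way to make precise the AC bookkeeping the paper leaves implicit, but drop the fallback, because summing $\ell$ over flattened leaves can increase when a leaf reduces to a sum (e.g.\ via $(\sum_i p_i t_i)r \rightarrow \sum_i p_i(t_ir)$), which is exactly why the multiset extension is needed.
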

\begin{proof}
  By induction on $\sum_i \ell(t_i)$ and then on the size of the term.
  Any one-step reduct of $\sum_i p_i\, t_i$ either reduces one summand
  (covered by the induction hypothesis on $\ell$), applies a sum-collapse rule
  to produce a density matrix or a single term $t$ (both in $\SN$), or fires a
  projection rule to produce some $t_j$ (in $\SN$ by assumption). In each case
  the reduct is in $\SN$, so $\sum_i p_i\, t_i \in \SN$.
  \qed
\end{proof}

\begin{definition}[Neutral terms]
  A term $t$ is \emph{neutral} if it is not a $\lambda$-abstraction and not a
  probabilistic sum.
\end{definition}

\begin{lemma}[CR3: forward closure for neutral terms]
  \label{lem:cr3}
  Let $t$ be neutral of type $A$. If every one-step reduct
  of $t$ belongs to $\SN(A)$, then $t \in \SN(A)$.
\end{lemma}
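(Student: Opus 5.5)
The argument is a case analysis on the type $A$, carried out in the standard Girard style. We need two things for $t$: that $t\in\SN$, and, when $A=B\multimap C$, the additional functional condition. For membership in $\SN$: every one-step reduct of $t$ lies in $\SN(A)\subseteq\SN$, since each $\SN(A)$ is by definition a subset of $\SN$. A term all of whose one-step reducts are strongly normalising is itself strongly normalising, because any infinite reduction sequence from $t$ would pass through some reduct $t'$ and continue infinitely from there. This settles the base cases $A=n$ and $A=(m,n)$ at once, since there $\SN(A)=\SN$. Here ``reduct'' means reduct under the extended relation, projections included. Reduction is finitely branching, so $\ell(t)=1+\max_{t\to t'}\ell(t')$ is well defined, although we do not need this.

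For $A=B\multimap C$, suppose $t\rightarrow^{*}\lambda x.u$. Since $t$ is neutral, $t$ itself is not an abstraction, so the sequence has length at least one. It can be written $t\rightarrow t'\rightarrow^{*}\lambda x.u$ for some one-step reduct $t'$. By hypothesis $t'\in\SN(B\multimap C)$, and the defining clause of that set, applied to the sequence $t'\rightarrow^{*}\lambda x.u$, gives $u[v/x]\in\SN(C)$ for every $v\in\SN(B)$. This is exactly the condition required for $t\in\SN(B\multimap C)$.

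The only delicate point is where neutrality enters. With the reducibility sets defined as ``strongly normalising, plus a condition on the abstractions reached by reduction'', the single place where something could fail is the zero-length case $t=\lambda x.u$, which neutrality excludes. Sums are excluded as well. This is not needed for the argument above, but it matches the convention of D\'iaz-Caro and Dowek, where sums are handled separately through Lemma~\ref{lem:sn-sum} and the projection rules. I therefore expect no real obstacle. The only care needed is to state explicitly that the hypothesis quantifies over reducts in the extended relation and that the inclusion $\SN(A)\subseteq\SN$ holds by construction for every type.
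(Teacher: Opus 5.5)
Your proof is correct and follows the paper's argument essentially step for step. You first obtain $t\in\SN$ because every one-step reduct lies in $\SN(A)\subseteq\SN$, which settles the types $n$ and $(m,n)$; for $B\multimap C$ you use neutrality to factor any sequence $t\rightarrow^{*}\lambda x.u$ through a one-step reduct $t'\in\SN(B\multimap C)$. Your side remarks are accurate and do not affect the argument: only the fact that $t$ is not an abstraction is actually used, and reducts are taken in the extended relation.
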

\begin{proof}
  Since every one-step reduct of $t$ is in $\SN(A) \subseteq \SN$, every
  reduction sequence from $t$ is finite, so $t \in \SN$.
  For ground and measurement types $A \in \{n, (m,n)\}$, membership in $\SN(A)$
  is just membership in $\SN$, which we just established.
  For $A = B \multimap C$: suppose $t \rightarrow^{*} \lambda x.\,u$. Since $t$
  is neutral, the first step of this sequence goes $t \rightarrow t'$ for some
  reduct $t'$, and $t' \rightarrow^{*} \lambda x.\,u$. By hypothesis,
  $t' \in \SN(B \multimap C)$, so for any $v \in \SN(B)$ we have $u[v/x] \in \SN(C)$.
  Hence $t \in \SN(B \multimap C)$.
  \qed
\end{proof}

\paragraph{Per-construct closure lemmas.}

\begin{lemma}[Closed under sums in $\SN(A)$]
  \label{lem:sn-sum-typed}
  If $t_1, \ldots, t_k \in \SN(A)$ then $\sum_i p_i\, t_i \in \SN(A)$.
\end{lemma}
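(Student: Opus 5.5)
We need to show that if $t_1,\ldots,t_k\in\SN(A)$ then $\sum_i p_i t_i\in\SN(A)$. The proof goes by induction on the type $A$.

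\textbf{Ground and measurement types.} For $A\in\{n,(m,n)\}$ we have $\SN(A)=\SN$, so the claim is exactly Lemma~\ref{lem:sn-sum}.

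\textbf{Function types.} Let $A=B\multimap C$. First, each $t_i\in\SN(A)\subseteq\SN$, so Lemma~\ref{lem:sn-sum} gives $\sum_i p_i t_i\in\SN$. It remains to show: if $\sum_i p_i t_i\rightarrow^{*}\lambda x.u$, then $u[v/x]\in\SN(C)$ for every $v\in\SN(B)$.

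To do this I would prove a strengthened statement: every reduct of $\sum_i p_i t_i$ is either a sum $\sum_i p_i t'_i$ with $t_i\rightarrow^{*}t'_i$ for each $i$, or a reduct of some single $t_j$. The proof is by induction on the length of the reduction sequence, with a case analysis on the rules that can fire at the root of a sum. There are four cases.
\begin{itemize}
\item Contextual reduction of one summand keeps us in the first form.
\item The projection rule $\sum_i p_i t_i\rightarrow t_j$ moves us to the second form.
\item The collapse rule $\sum_i p_i t\rightarrow t$ also yields the second form, since all summands equal $t$, which is a reduct of every $t_j$.
\item The rule $\sum_i p_i\rho_i\rightarrow\rho'$ cannot fire, because the type is not ground.
\end{itemize}
This relies on Subject Reduction (Theorem~\ref{thm:SR}) to keep the type $A$ along the sequence.

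With this invariant, a sum is never syntactically a $\lambda$-abstraction. So the sequence ending in $\lambda x.u$ must pass through the second form: we get $t_j\rightarrow^{*}\lambda x.u$ for some $j$. Since $t_j\in\SN(B\multimap C)$, its definition gives $u[v/x]\in\SN(C)$ for all $v\in\SN(B)$, as required.

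\textbf{Main obstacle.} The delicate point is the invariant when sums are nested, or when the associativity/commutativity convention lets a summand itself be a sum. A reduct of some $t_j$ may then be a projection of an inner sum. This is harmless, since it is still a reduct of $t_j$, but the case analysis must treat $\sum$ modulo AC carefully. I would phrase the invariant as ``$\lambda x.u$ is a reduct of some $t_j$'' and verify it by induction on the reduction sequence, flattening nested sums via AC.

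The typed induction on $A$ is otherwise only needed to dispatch the ground cases. The hypothesis $u[v/x]\in\SN(C)$ comes directly from $t_j$, so no inductive call at type $C$ is required.
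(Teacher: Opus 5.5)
Your proposal is correct and follows the same route as the paper: Lemma~\ref{lem:sn-sum} gives membership in $\SN$, then a sum can only reach $\lambda x.u$ through one of its summands, so some $t_j\rightarrow^{*}\lambda x.u$, and $t_j\in\SN(B\multimap C)$ finishes; your reduct invariant just makes explicit what the paper asserts parenthetically. One small correction: the appeal to Subject Reduction is not available, because the sets $\SN(A)$ are untyped (a density matrix lies vacuously in $\SN(B\multimap C)$, so $\sum_i p_i\rho_i\rightarrow\rho'$ can in fact fire here), but it is also unnecessary, since $\rho'$ is a normal form that is not a $\lambda$-abstraction and so no reduction sequence through it can end in $\lambda x.u$.
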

\begin{proof}
  By Lemma~\ref{lem:sn-sum}, $\sum_i p_i\, t_i \in \SN$. It remains to check
  the additional condition for function types. Suppose $A = B \multimap C$ and
  $\sum_i p_i\, t_i \rightarrow^{*} \lambda x.\,u$. By the projection rules, some $t_j$
  satisfies $t_j \rightarrow^{*} \lambda x.\,u$ (since the only reduction that produces
  a $\lambda$ from a sum is via a summand). Since $t_j \in \SN(B \multimap C)$,
  for any $v \in \SN(B)$ we get $u[v/x] \in \SN(C)$. Hence
  $\sum_i p_i\, t_i \in \SN(B \multimap C)$.
  \qed
\end{proof}

\begin{lemma}[Closed under $\lambda$-abstraction]
  \label{lem:sn-lam}
  If for all $v \in \SN(A)$, $t[v/x] \in \SN(B)$, then $\lambda x.\,t \in \SN(A \multimap B)$.
\end{lemma}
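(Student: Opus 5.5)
We need to prove that if $t[v/x]\in\SN(B)$ for every $v\in\SN(A)$, then $\lambda x.\,t\in\SN(A\multimap B)$. By the definition of $\SN(A\multimap B)$, there are two things to show. First, $\lambda x.\,t\in\SN$. Second, whenever $\lambda x.\,t\rightarrow^{*}\lambda y.\,u$, we have $u[v/y]\in\SN(B)$ for every $v\in\SN(A)$. The key observation is that the rewrite system (Table~\ref{tab:TRS}, Table~\ref{tab:LetTRS}, and the added projection rules) has no contextual rule under $\lambda$. No root rule has a bare abstraction as its left-hand side either: the $\beta$-rule needs an application, and the projection and sum-collapse rules need a sum. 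Hence $\lambda x.\,t$ is a normal form. This is also why the Progress proof treats abstractions as values.

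\textbf{Strong normalisation.} Since $\lambda x.\,t$ has no one-step reducts, every reduction sequence from it has length $0$. So $\lambda x.\,t\in\SN$ with $\ell(\lambda x.\,t)=0$. Notice that we never need $t$ itself to be strongly normalising here.

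\textbf{Function clause.} Suppose $\lambda x.\,t\rightarrow^{*}\lambda y.\,u$. Since the source is in normal form, the sequence has length zero. So $\lambda y.\,u$ and $\lambda x.\,t$ are the same term up to $\alpha$-renaming: $y$ is a renaming of $x$ and $u=t[y/x]$ with $y$ fresh. Let $v\in\SN(A)$. By Barendregt's convention, $u[v/y]=t[y/x][v/y]=t[v/x]$. The hypothesis then gives $t[v/x]\in\SN(B)$, which closes the clause.

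\textbf{Main obstacle.} The only delicate step is rigorously justifying that no reduction fires at or beneath a $\lambda$. I would do this by inspecting each rule, checking both the head symbol of every left-hand side and the list of contextual closure rules, including the new one for $\mathsf{let}$. A secondary subtlety is $\alpha$-equivalence: I would take terms modulo renaming of bound variables and note that the membership condition in $\SN(A\multimap B)$ is invariant under renaming the bound variable.
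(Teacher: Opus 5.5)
Your proposal is correct and follows essentially the same route as the paper's proof. Both check the two clauses of $\SN(A \multimap B)$ by observing that no rule fires at or under a $\lambda$, so $\lambda x.\,t$ is a normal form and any sequence $\lambda x.\,t \rightarrow^{*} \lambda y.\,u$ is trivial, which reduces the function clause to the hypothesis. Your explicit handling of $\alpha$-renaming and your check that the projection rules also cannot fire are only added care within that same argument.
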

\begin{proof}
  We check the two conditions in the definition of $\SN(A \multimap B)$.
  \emph{$\lambda x.\,t \in \SN$:} since the calculus has no body reduction rule,
  $\lambda x.\,t$ is a normal form and hence trivially in $\SN$.
  \emph{Body condition:} since there is no body reduction rule,
  $\lambda x.\,t \rightarrow^{*} \lambda x.\,u$ implies $u = t$. So the condition
  reduces to: for all $v \in \SN(A)$, $t[v/x] \in \SN(B)$, which is exactly the
  hypothesis.
  \qed
\end{proof}

\begin{lemma}[Closed under application]
  \label{lem:sn-app}
  If $t \in \SN(A \multimap B)$ and $s \in \SN(A)$, then $t\, s \in \SN(B)$.
\end{lemma}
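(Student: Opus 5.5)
We argue by well-founded induction on the pair $(\ell(t)+\ell(s), |t|)$ ordered lexicographically, where $|t|$ is the size of $t$. Both $\ell(t)$ and $\ell(s)$ are defined because $\SN(A\multimap B)\subseteq\SN$ and $\SN(A)\subseteq\SN$. The term $t\,s$ is neutral: it is neither a $\lambda$-abstraction nor a sum. By Lemma~\ref{lem:cr3} it is therefore enough to show that every one-step reduct of $t\,s$ lies in $\SN(B)$. We go through the possible reducts according to the rewrite rules that can fire at the root or in a subterm.

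\emph{Contextual steps.} Suppose the step is $t\,s\rightarrow t'\,s$ with $t\rightarrow t'$. For $\SN(A\multimap B)$ we need closure under reduction (CR2): membership in $\SN$ is preserved, and the $\lambda$-condition is inherited because $t'\rightarrow^*\lambda x.u$ implies $t\rightarrow^*\lambda x.u$. So $t'\in\SN(A\multimap B)$ and $\ell(t')<\ell(t)$, and the induction hypothesis gives $t'\,s\in\SN(B)$. The case $t\,s\rightarrow t\,s'$ is symmetric; it uses CR2 for $\SN(A)$, which we prove by the same easy induction on types.

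\emph{$\beta$-step.} If $t=\lambda x.u$ and the step is $t\,s\rightarrow u[s/x]$, then $t\rightarrow^*\lambda x.u$ in zero steps. The definition of $\SN(A\multimap B)$, applied with $v=s\in\SN(A)$, gives $u[s/x]\in\SN(B)$ directly.

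\emph{Distributivity step.} If $t=\sum_i p_i t_i$ and the step is $t\,s\rightarrow\sum_i p_i(t_i\,s)$, we first show each $t_i\in\SN(A\multimap B)$. This holds because the projection rule gives $t\rightarrow t_i$, and CR2 applies. The projection step also gives $\ell(t_i)<\ell(t)$, and this strict decrease is exactly why the projection rules were added. The induction hypothesis then yields $t_i\,s\in\SN(B)$, and Lemma~\ref{lem:sn-sum-typed} gives $\sum_i p_i(t_i\,s)\in\SN(B)$.

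No other root rule applies to an application. The remaining rules concern density matrices, measurements, $\mathsf{letcase}^\circ$ and $\mathsf{let}$, and cannot fire at an application node whose head has a function type.

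The main obstacle is the distributivity case. There $t\,s$ is replaced by a sum in which $s$ occurs in every summand, so $\ell(t)+\ell(s)$ need not decrease for the whole term. The argument works only because the induction is applied summand by summand, using $\ell(t_i)<\ell(t)$ from the projection rule. A secondary point to check is the CR2 property for $\SN(A\multimap B)$. It is used implicitly in the contextual case, and the paper has not stated it. If needed, we prove it as a small auxiliary lemma by induction on types before this proof.
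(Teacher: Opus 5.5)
Your proof is correct and follows the paper's argument essentially step for step. Both use induction on $\ell(t)+\ell(s)$, neutrality of $t\,s$ plus Lemma~\ref{lem:cr3}, and the same three cases: contextual steps via closure under reduction (what the paper calls downward closure), the $\beta$-step via the body condition, and distributivity via the projection rule giving $\ell(t_i)<\ell(t)$ together with Lemma~\ref{lem:sn-sum-typed}. The size component $|t|$ in your lexicographic measure is never used, since $\ell(t)+\ell(s)$ strictly decreases every time you invoke the induction hypothesis.
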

\begin{proof}
  By induction on $\ell(t) + \ell(s)$ and Lemma~\ref{lem:cr3}. The term
  $t\,s$ is neutral. We show every one-step reduct of $t\,s$ is in $\SN(B)$,
  then conclude by CR3.
  \emph{Reduction inside $t$ or $s$:} if $t \rightarrow t'$ then $t' \in \SN(A \multimap B)$
  (since $\SN(A \multimap B)$ is downward closed) and $\ell(t') < \ell(t)$,
  so the induction hypothesis gives $t'\,s \in \SN(B)$; symmetrically if $s \rightarrow s'$.
  \emph{Both values, $t = \lambda x.\,u$:} the rule $(\lambda x.\,u)\,s \rightarrow u[s/x]$
  fires; since $t \rightarrow^{*} \lambda x.\,u$ (zero steps) and $s \in \SN(A)$,
  the body condition of $t \in \SN(A \multimap B)$ gives $u[s/x] \in \SN(B)$.
  \emph{Both values, $t = \sum_i p_i\,t_i$:} the distribution rule
  $(\sum_i p_i\,t_i)\,s \rightarrow \sum_i p_i\,(t_i\,s)$ fires; by the projection rules,
  $t \rightarrow t_i$ in the extended relation, so $\ell(t_i) < \ell(t)$ and
  $t_i \in \SN(A \multimap B)$ (by downward closure). The induction hypothesis
  gives each $t_i\,s \in \SN(B)$, and Lemma~\ref{lem:sn-sum-typed} then gives
  $\sum_i p_i\,(t_i\,s) \in \SN(B)$.
  In every case the one-step reduct is in $\SN(B)$, so CR3 gives $t\,s \in \SN(B)$.
  \qed
\end{proof}

\begin{lemma}[Closed under unitary operation and measurement]
  \label{lem:sn-unitary-meas}
  If $t \in \SN(n)$ then $U^m t \in \SN(n)$ and $\pi^m t \in \SN((m,n))$.
\end{lemma}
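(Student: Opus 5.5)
We prove both claims at once by well-founded induction on $\ell(t)$, where $\ell$ is the maximal reduction length from Lemma~\ref{lem:sn-sum}, and then close with Lemma~\ref{lem:cr3}. Both $U^m t$ and $\pi^m t$ are neutral, since neither is a $\lambda$-abstraction nor a probabilistic sum. Their target types $n$ and $(m,n)$ are ground, so $\SN(n)=\SN((m,n))=\SN$. It is therefore enough to show that every one-step reduct of $U^m t$ (respectively $\pi^m t$) lies in $\SN$. Lemma~\ref{lem:cr3} then gives membership in $\SN(n)$ (respectively $\SN((m,n))$). A direct argument would also work here, since the type is ground: a term all of whose one-step reducts are strongly normalising is itself strongly normalising.

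For $U^m t$ we enumerate the one-step reducts in the extended relation. There are two cases.
\begin{itemize}
\item The contextual rule gives $U^m t\rightarrow U^m t'$ with $t\rightarrow t'$. Then $t'\in\SN$ and $\ell(t')<\ell(t)$, so the induction hypothesis yields $U^m t'\in\SN$.
\item The head rule gives $U^m\rho^n\rightarrow{\rho'}^n$ when $t=\rho^n$. The reduct is a density matrix constant, which is a normal form and hence in $\SN$.
\end{itemize}
No other rule has $U^m(\cdot)$ at its head. In particular, there is no rule distributing $U^m$ over a sum, and the projection rules only fire at a sum. This case analysis exhausts the possibilities.

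For $\pi^m t$ the analysis is simpler, because measurement never reduces on its own: it is consumed only by $\mathsf{letcase}^\circ$, whose rules are attached to that construct. So the only reduct is $\pi^m t\rightarrow\pi^m t'$ via the contextual rule, and the induction hypothesis on $\ell(t')<\ell(t)$ covers it. When $t$ is normal, $\pi^m t$ is normal and trivially in $\SN$.

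The main point requiring care is that the case enumeration is exhaustive for the extended relation, since the added projection rules could in principle create new reducts. They apply only to terms whose head is $\sum$, and neither $U^m t$ nor $\pi^m t$ has that form, so they contribute nothing at the head. Any projection step taken inside $t$ is just an instance of the contextual case, already handled by the induction on $\ell(t)$. With this observation, the lemma follows by the induction above.
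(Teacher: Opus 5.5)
Your proof is correct and follows essentially the same argument as the paper. The shared steps are: induction on $\ell(t)$; neutrality of $U^m t$ and $\pi^m t$; the observation that the only head reduction is $U^m\rho^n\rightarrow{\rho'}^n$, which yields a normal form, while $\pi^m t$ reduces only inside $t$; and closure via Lemma~\ref{lem:cr3}. Your explicit check that the added projection rules contribute no head reducts is a useful detail that the paper leaves implicit.
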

\begin{proof}
  Both $U^m t$ and $\pi^m t$ are neutral. Any reduction either applies inside
  $t$ (giving a reduct in $\SN$ by the induction hypothesis on $\ell(t)$) or
  fires the outermost rule: $U^m \rho^n \rightarrow {\rho'}^n \in \SN(n)$; and
  $\pi^m t$ has no outermost rule at all (it only participates in a reduction
  as the argument of $\mathsf{letcase}^\circ$, not by itself), so the first
  case always applies. By Lemma~\ref{lem:cr3}, both terms are in the
  appropriate $\SN$ set.
  \qed
\end{proof}

\begin{lemma}[Closed under tensor]
  \label{lem:sn-tensor}
  If $t \in \SN(m)$ and $s \in \SN(n)$, then $t \otimes s \in \SN(m+n)$.
\end{lemma}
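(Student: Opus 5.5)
We prove the lemma by well-founded induction on $\ell(t)+\ell(s)$, which is defined because $t,s\in\SN$. The conclusion then follows from Lemma~\ref{lem:cr3}. The term $t\otimes s$ is neither a $\lambda$-abstraction nor a probabilistic sum, so it is neutral. Its type is the ground type $m+n$, so membership in $\SN(m+n)$ just means membership in $\SN$. It therefore suffices to show that every one-step reduct of $t\otimes s$, in the extended relation with projection rules, lies in $\SN$.

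\emph{Contextual reductions.} Suppose $t\rightarrow t'$. Then $t'\in\SN=\SN(m)$ and $\ell(t')<\ell(t)$, so the induction hypothesis gives $t'\otimes s\in\SN(m+n)$. The case $s\rightarrow s'$ is symmetric, using the other contextual rule for $\otimes$.

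\emph{Head reductions.} Which rules can fire at the root of a term of the form $u\otimes u'$? Inspecting Tables~\ref{tab:TRS} and~\ref{tab:LetTRS}, only $\rho\otimes\rho'\rightarrow\rho''$ applies, with $\rho''=\rho\hotimes\rho'$. No projection rule applies at the root, since $t\otimes s$ is not a sum. No distributivity rule exists for $\otimes$ over sums. The reduct $\rho''$ is a density-matrix constant, hence a normal form, so it is in $\SN$.

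The only point needing care is this exhaustive check of root redexes. One must make sure the extended relation adds nothing at the root of a tensor. A sum inside a tensor is handled only through the contextual closure, since projection reductions of a summand of $t$ count as reductions of $t$. Once that is confirmed, every reduct is in $\SN(m+n)$, and Lemma~\ref{lem:cr3} yields $t\otimes s\in\SN(m+n)$.
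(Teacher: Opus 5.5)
Your proof is correct and follows essentially the same route as the paper: $t\otimes s$ is neutral, its contextual reducts are handled by induction, and the only root redex is the collapse $\rho\otimes\rho'\rightarrow\rho''$ to a density-matrix normal form, after which Lemma~\ref{lem:cr3} concludes. You additionally make the induction measure $\ell(t)+\ell(s)$ explicit and check that the projection rules add no root redex, both of which the paper leaves implicit.
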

\begin{proof}
  The term $t \otimes s$ is neutral. Reductions apply inside $t$ or $s$ (both
  stay in $\SN$ by induction), or fire the collapse rule
  $\rho_1^m \otimes \rho_2^n \rightarrow \rho^{m+n}$ (a density matrix, hence in
  $\SN(m+n)$). Lemma~\ref{lem:cr3} gives the result.
  \qed
\end{proof}

\begin{lemma}[Closed under $\mathsf{letcase}^\circ$]
  \label{lem:sn-letcase}
  If $r \in \SN((m,n))$ and, for all density matrices $\rho^n \in \SN(n)$,
  $s_i[\rho^n/x] \in \SN(A)$ for each $i$, then
  $\qletcase{x}{r}{s_0,\ldots,s_{2^m-1}} \in \SN(A)$.
\end{lemma}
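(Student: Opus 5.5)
We argue by induction on $\ell(r)$, the maximal length of a reduction sequence from $r$ in the extended relation, and conclude each step with Lemma~\ref{lem:cr3}. Since $\qletcase{x}{r}{s_0,\ldots,s_{2^m-1}}$ is neither an abstraction nor a sum, it is neutral. It therefore suffices to show that every one-step reduct belongs to $\SN(A)$. The hypothesis on the branches is used only through substitution of concrete density matrices, so it stays fixed throughout the induction.

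There are three kinds of one-step reducts.

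\emph{Contextual step.} If $r\rightarrow r'$, then $r'\in\SN=\SN((m,n))$ and $\ell(r')<\ell(r)$. The induction hypothesis gives $\qletcase{x}{r'}{\vec s}\in\SN(A)$.

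\emph{Main rule.} Here $r=\pi^m\rho^n$, and the reduct is $\sum_i p_i\, s_i[\rho_i^n/x]$. Each $\rho_i^n$ is a density matrix, hence a normal form, so $\rho_i^n\in\SN(n)$. By hypothesis each $s_i[\rho_i^n/x]\in\SN(A)$. Lemma~\ref{lem:sn-sum-typed} then places the sum in $\SN(A)$.

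\emph{Distributivity rule.} Here $r=\sum_j q_j\hat w_j$, and the reduct is $\sum_j q_j\,\qletcase{x}{\hat w_j}{\vec s}$. By the projection rules $r\rightarrow\hat w_j$ in the extended relation, so $\ell(\hat w_j)<\ell(r)$, and $\hat w_j\in\SN=\SN((m,n))$. The induction hypothesis gives $\qletcase{x}{\hat w_j}{\vec s}\in\SN(A)$ for each $j$, and Lemma~\ref{lem:sn-sum-typed} closes the case. This is exactly why the projection rules were added: without them the summands $\hat w_j$ would not have a strictly smaller measure.

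The main obstacle is checking that no other reduct arises. In particular, there is no reduction inside the branches $s_i$, because the only contextual rule for $\mathsf{letcase}^\circ$ acts on the scrutinee. One should also confirm that measures are well defined: $\ell(r)$ is finite because $r\in\SN$, and the distributivity case may apply only when the scrutinee is a sum of values, which the case analysis above already covers. No induction on $A$ is needed, because Lemma~\ref{lem:cr3} already handles function types for neutral terms.
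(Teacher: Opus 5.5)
Your proof is correct and follows the paper's argument. Both use induction on $\ell(r)$, the neutrality of the $\mathsf{letcase}^\circ$ term together with Lemma~\ref{lem:cr3}, and the same three reduct cases, with the distributivity case handled through the projection rules and Lemma~\ref{lem:sn-sum-typed}; you apply the induction hypothesis directly to each $\hat w_j$, whereas the paper appeals to the $\pi^m\rho^n$ sub-case, which if anything makes yours slightly cleaner since it also covers summands that are themselves sums.
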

\begin{proof}
  By induction on $\ell(r)$ and Lemma~\ref{lem:cr3}. The term is neutral. If
  a reduction applies inside $r$, the induction hypothesis applies.
  If $r$ is a closed value of type $(m,n)$, there are two sub-cases
  ($\lambda$-abstractions cannot have measurement types):
  \begin{itemize}
  \item $r = \pi^m \rho^n$: the main $\mathsf{letcase}^\circ$ rule fires,
    giving $\sum_i p_i\, s_i[\rho_i^n / x]$.  Each $\rho_i^n \in \SN(n)$; by
    hypothesis each $s_i[\rho_i^n/x] \in \SN(A)$; by
    Lemma~\ref{lem:sn-sum-typed} the sum is in $\SN(A)$.
  \item $r = \sum_j q_j (\pi^m \rho^n_j)$: the distributing rule fires, giving
    $\sum_j q_j\,\qletcase{x}{\pi^m\rho^n_j}{s_0,\ldots,s_{2^m-1}}$.
    Each $\pi^m\rho^n_j$ has smaller $\ell(r)$, so by the previous sub-case
    each summand is in $\SN(A)$; by
    Lemma~\ref{lem:sn-sum-typed} the sum is in $\SN(A)$.
  \end{itemize}
  Lemma~\ref{lem:cr3} concludes.
  \qed
\end{proof}

\begin{lemma}[Closed under $\mathsf{let}$]
  \label{lem:sn-let}
  If $t \in \SN(n)$ and, for all density matrices
  $\rho_k \in \SN(1)$ ($k = 1,\ldots,n$),
  $s[\rho_1/x_1,\ldots,\rho_n/x_n] \in \SN(A)$,
  then $\letxn{x}{t}{s} \in \SN(A)$.
\end{lemma}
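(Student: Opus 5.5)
We argue by induction on $\ell(t)$, following the pattern of Lemma~\ref{lem:sn-letcase}, and conclude with Lemma~\ref{lem:cr3}. The term $\letxn{x}{t}{s}$ is neither a $\lambda$-abstraction nor a sum, so it is neutral. By Lemma~\ref{lem:cr3} it therefore suffices to show that every one-step reduct belongs to $\SN(A)$. The rules of Table~\ref{tab:LetTRS}, together with the extended relation (whose projection rules only fire at sums, not at a $\mathsf{let}$), leave exactly two possibilities for a one-step reduct.

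\emph{Contextual case.} Suppose $t\rightarrow t'$, so the reduct is $\letxn{x}{t'}{s}$. Then $t'\in\SN=\SN(n)$ and $\ell(t')<\ell(t)$. The hypothesis on $s$ does not depend on $t$, so the induction hypothesis applies directly and gives $\letxn{x}{t'}{s}\in\SN(A)$.

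\emph{Main rule.} Suppose $t=\rho^n$ and the reduct is
\[
\sum_{i=1}^{4^n}\sum_{\vec l\in\{1,2\}^n} p_{i\vec l}\, s[\rho_{i_1}^{l_1}/x_1,\ldots,\rho_{i_n}^{l_n}/x_n].
\]
By Proposition~\ref{prop:combined}, each $\rho_{i_k}^{l_k}$ is a single-qubit density matrix. Density matrices are normal forms, so $\rho_{i_k}^{l_k}\in\SN=\SN(1)$. The hypothesis of the lemma then gives that each summand $s[\rho_{i_1}^{l_1}/x_1,\ldots,\rho_{i_n}^{l_n}/x_n]$ lies in $\SN(A)$, and Lemma~\ref{lem:sn-sum-typed} places the whole combination in $\SN(A)$.

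The main obstacle is a technical one: the summation constructor must accept the real, possibly negative, coefficients $p_{i\vec l}$. This needs two checks.
\begin{itemize}
\item Lemma~\ref{lem:sn-sum-typed} (and Lemma~\ref{lem:sn-sum} underneath it) never uses positivity of the weights. It only uses the shape of the reductions (componentwise steps, collapse rules, projections), so it applies unchanged to the generalised sums introduced in Section~\ref{sec:extended}.
\item Some coefficients may vanish, and repeated tensor products may coincide. This is harmless: the reduct is still a finite formal sum, and the collapse rules produce either a density matrix or one of the summands, both of which are already in $\SN(A)$.
\end{itemize}
With both cases covered, Lemma~\ref{lem:cr3} yields $\letxn{x}{t}{s}\in\SN(A)$. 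The base of the induction needs no separate treatment, since when $t$ is normal only the main rule can fire.
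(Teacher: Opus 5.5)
Your proof is correct and follows essentially the same route as the paper. Both argue by induction on $\ell(t)$ and use neutrality plus Lemma~\ref{lem:cr3}: the contextual reduct is handled by the induction hypothesis, and the main-rule reduct by the hypothesis on $s$ (each $\rho_{i_k}^{l_k}$ being a normal form, hence in $\SN(1)$) together with Lemma~\ref{lem:sn-sum-typed}. Your extra check, that Lemma~\ref{lem:sn-sum-typed} never uses positivity of the weights, is a sanity check the paper leaves implicit.
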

\begin{proof}
  By induction on $\ell(t)$ and Lemma~\ref{lem:cr3}. The term is neutral. If a
  reduction applies inside $t$, the induction hypothesis applies (the reduct
  $t'$ still satisfies $t' \in \SN(n)$ as a reduct of a strongly normalising
  term). If $t$ is a value of ground type $n$, it must be a density matrix
  $\rho^n$ (no $\lambda$- or $\pi$-value has type $n$). The main $\mathsf{let}$
  rule fires:
  \[
    \letxn{x}{\rho^n}{s}
    \;\rightarrow\;
    \sum_{i,\vec{l}} p_{i\vec{l}}\,
    s[\rho_{i_1}^{l_1}/x_1, \ldots, \rho_{i_n}^{l_n}/x_n].
  \]
  Each $\rho_{i_k}^{l_k}$ is a density matrix, hence in $\SN(1)$. By the
  hypothesis, each substituted body $s[\rho_{i_1}^{l_1}/x_1, \ldots,
  \rho_{i_n}^{l_n}/x_n]$ belongs to $\SN(A)$. By
  Lemma~\ref{lem:sn-sum-typed} the entire sum belongs to $\SN(A)$.
  Lemma~\ref{lem:cr3} concludes.
  \qed
\end{proof}

\paragraph{Fundamental lemma.}
A \emph{substitution} $\theta$ is \emph{valid for} $\Gamma$, written
$\theta \vDash \Gamma$, if $\theta(x) \in \SN(A)$ for every $x : A \in \Gamma$.

\begin{lemma}[Fundamental lemma]
  \label{lem:fund-lemma}
  If $\Gamma \vdash t : A$ and $\theta \vDash \Gamma$, then $\theta(t) \in \SN(A)$.
\end{lemma}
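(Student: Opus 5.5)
The proof goes by induction on the typing derivation of $\Gamma \vdash t : A$, for an arbitrary $\theta \vDash \Gamma$. In each case we invoke the per-construct closure lemma for the last rule applied. Here $\theta(t)$ is the simultaneous substitution, and by Barendregt's convention it commutes with the binders $\lambda x$, $\mathsf{letcase}^\circ$ and $\mathsf{let}$.

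The cases go as follows.

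For $\mathsf{ax}$, $\theta(x)\in\SN(A)$ holds by validity of $\theta$. For $\mathsf{ax}_\rho$, $\rho^n$ is a normal form, hence in $\SN=\SN(n)$.

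For $\multimap_i$, take $v\in\SN(A)$. Then $\theta,x\mapsto v$ is valid for $\Gamma,x:A$, so the induction hypothesis gives $\theta(t)[v/x]\in\SN(B)$, and Lemma~\ref{lem:sn-lam} concludes.

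For $\multimap_e$, $\mathsf u$, $\mathsf m$ and $\otimes$, the restrictions of $\theta$ to the subcontexts are valid. The induction hypotheses then combine via Lemmas~\ref{lem:sn-app}, \ref{lem:sn-unitary-meas} and~\ref{lem:sn-tensor}. For $+$, Lemma~\ref{lem:sn-sum-typed} applies.

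For $\mathsf{lc}$, the branches $x:n\vdash t_i:A$ are typed in a context containing only $x$, so $\theta(t_i)=t_i$. For any density matrix $\rho^n$, the substitution $x\mapsto\rho^n$ is valid, so the induction hypothesis on the branches gives $t_i[\rho^n/x]\in\SN(A)$. The induction hypothesis on $r$ gives $\theta(r)\in\SN((m,n))$, and Lemma~\ref{lem:sn-letcase} finishes.

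For the new rule $\mathsf{let}$, with $\Gamma\vdash t:n$ and $\Delta,x_1:1,\ldots,x_n:1\vdash s:A$:
\begin{itemize}
\item The induction hypothesis gives $\theta(t)\in\SN(n)$.
\item For any single-qubit density matrices $\rho_1,\ldots,\rho_n$ (each normal, hence in $\SN(1)$), the extension $\theta|_\Delta, x_k\mapsto\rho_k$ is valid for $\Delta,\vec x:\vec 1$.
\item So $\theta(s)[\rho_1/x_1,\ldots,\rho_n/x_n]\in\SN(A)$, since the $x_k$ are fresh for $\theta$.
\item Lemma~\ref{lem:sn-let} then yields $\theta(\letxn{x}{t}{s})\in\SN(A)$.
\end{itemize}

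The main obstacle lies not in this induction itself but in the closure lemmas it relies on. In particular, Lemma~\ref{lem:sn-app} uses downward closure of $\SN(A\multimap B)$. This property should be checked once: if $t\rightarrow t'$ and $t'\rightarrow^*\lambda x.u$, then $t\rightarrow^*\lambda x.u$, so the body condition transfers.

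A further subtlety is that in the $\mathsf{lc}$ and $\mathsf{let}$ cases the closure lemmas quantify only over density matrices. This is consistent with the reduction rules, which substitute only density matrices. With these points in place, Theorem~\ref{thm:strnorm} follows by taking $\Gamma$ empty and $\theta$ the identity.
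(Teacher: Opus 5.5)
Your proposal is correct and follows the paper's proof. Both argue by induction on the typing derivation, discharge each rule with the corresponding closure lemma, and use the same extended valuations in the $\multimap_i$, $\mathsf{lc}$ and $\mathsf{let}$ cases (for $\mathsf{let}$, exactly $\theta|_\Delta, x_k\mapsto\rho_k$). The only difference is cosmetic: in the $\mathsf{lc}$ case you note $\theta(t_i)=t_i$ and use just $x\mapsto\rho^n$ for the branch context $x:n$, whereas the paper uses $(\theta,x\mapsto\rho^n)$, and both are valid.
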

\begin{proof}
  By induction on the typing derivation of $\Gamma \vdash t : A$.
  \begin{itemize}
    \item \textbf{Variable} $x : A \in \Gamma$. Then $\theta(x) \in \SN(A)$ by
      hypothesis.
    \item \textbf{Density matrix} $\rho^n$. Then $\theta(\rho^n) = \rho^n$,
      which is a normal form, hence in $\SN = \SN(n)$.
    \item \textbf{$\lambda$-abstraction} $\lambda x.\,u$ of type $A \multimap B$.
      Let $v \in \SN(A)$. Define $\theta' = (\theta, x \mapsto v)$; then
      $\theta' \vDash \Gamma, x : A$. By the induction hypothesis on $u$,
      $\theta'(u) = (\theta(u))[v/x] \in \SN(B)$. By
      Lemma~\ref{lem:sn-lam}, $\theta(\lambda x.\,u) = \lambda x.\,\theta(u)
      \in \SN(A \multimap B)$.
    \item \textbf{Application} $t_1\,t_2$. By the induction hypotheses,
      $\theta(t_1) \in \SN(A \multimap B)$ and $\theta(t_2) \in \SN(A)$.
      Lemma~\ref{lem:sn-app} gives $\theta(t_1\,t_2) \in \SN(B)$.
    \item \textbf{Unitary} $U^m t$. By induction, $\theta(t) \in \SN(n)$.
      Lemma~\ref{lem:sn-unitary-meas} gives $U^m\,\theta(t) \in \SN(n)$.
    \item \textbf{Measurement} $\pi^m t$. Analogous, giving
      $\pi^m\,\theta(t) \in \SN((m,n))$.
    \item \textbf{Tensor} $t_1 \otimes t_2$. By induction,
      $\theta(t_1) \in \SN(m)$ and $\theta(t_2) \in \SN(n)$.
      Lemma~\ref{lem:sn-tensor} gives $\theta(t_1) \otimes \theta(t_2) \in \SN(m+n)$.
    \item \textbf{Probabilistic sum} $\sum_i p_i\,u_i$. By induction, each
      $\theta(u_i) \in \SN(A)$. Lemma~\ref{lem:sn-sum-typed} gives
      $\theta(\sum_i p_i\,u_i) = \sum_i p_i\,\theta(u_i) \in \SN(A)$.
    \item \textbf{Letcase} $\qletcase{x}{r}{s_0,\ldots,s_{2^m-1}}$.
      By induction, $\theta(r) \in \SN((m,n))$. For any density matrix
      $\rho^n \in \SN(n)$, define $\theta'' = (\theta, x \mapsto \rho^n)$;
      then $\theta'' \vDash \Gamma, x : n$, and the induction hypothesis gives
      each $\theta(s_i)[\rho^n/x] = \theta''(s_i) \in \SN(A)$.
      Lemma~\ref{lem:sn-letcase} gives the result.
    \item \textbf{Let} $\letxn{x}{u}{s}$ with $\Gamma \vdash u : n$ and
      $\Delta, x_1:1, \ldots, x_n:1 \vdash s : A$. By induction,
      $\theta(u) \in \SN(n)$. For any density matrices $\rho_k \in \SN(1)$
      ($k = 1,\ldots,n$), define $\theta'' = (\theta|_\Delta,
      x_1 \mapsto \rho_1, \ldots, x_n \mapsto \rho_n)$; then
      $\theta'' \vDash \Delta, x_1:1, \ldots, x_n:1$, and by the induction
      hypothesis $\theta''(s) = \theta(s)[\rho_1/x_1, \ldots, \rho_n/x_n]
      \in \SN(A)$. Lemma~\ref{lem:sn-let} gives
      $\letxn{x}{\theta(u)}{\theta(s)} \in \SN(A)$.
      \qed
  \end{itemize}
\end{proof}

Taking $\theta$ to be the empty substitution (valid for any closed term), the
Fundamental Lemma~\ref{lem:fund-lemma} gives: for every closed $\vdash t : A$,
we have $t \in \SN(A) \subseteq \SN$. Hence every closed well-typed term is
strongly normalising under the extended relation, and therefore also under the
original relation.

\subsection{Semantic Substitution Lemma}

\begin{lemma}[Semantic Substitution]
  \label{lem:sem-substitution}
  Let $\Gamma, x:A \vdash t : B$, $\Delta \vdash s : A$, and
  $\theta \vDash \Gamma, \Delta$. Then
  $\fsem{t[s/x]} = \fsem[\theta, x \mapsto \fsem{s}]{t}$.
\end{lemma}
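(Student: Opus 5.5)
We prove the statement by structural induction on $t$, mirroring the case analysis of the syntactic Substitution Lemma~\ref{lem:substitution}. By that lemma we already know $\Gamma,\Delta\vdash t[s/x]:B$, so the left-hand side is well defined. Throughout we use that the interpretation of a term depends only on the valuation of its free variables. Since $\Gamma$ and $\Delta$ are disjoint and $x\notin FV(s)$ (Barendregt convention), $\fsem{s}$ is unchanged when $\theta$ is extended with $x$ or with any bound variable of $t$.

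The base cases are immediate. If $t=x$, both sides equal $\fsem{s}$. If $t=y\neq x$ or $t=\rho^n$, substitution is vacuous and the extra binding $x\mapsto\fsem{s}$ is irrelevant.

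The homomorphic cases $U^m u$, $\pi^m u$, $\sum_i p_i u_i$ follow directly from the induction hypothesis, because each clause of $\fsem{\cdot}$ is a function of the interpretations of the immediate subterms. For $u_1u_2$ and $u_1\otimes u_2$, affinity places $x$ in exactly one of the two subcontexts. The induction hypothesis applies to that side, and the other side is untouched and does not depend on $x$.

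For $\lambda y.u$ we pick $y$ fresh for $s$ and $x$. For every argument $\rho$, the induction hypothesis under the valuation $\theta,y\mapsto\rho$ gives $\fsem[\theta,y\mapsto\rho]{u[s/x]}=\fsem[\theta,y\mapsto\rho,x\mapsto\fsem{s}]{u}$, where we use $y\notin FV(s)$ to identify $\fsem[\theta,y\mapsto\rho]{s}$ with $\fsem{s}$. Equality of the two functions then follows pointwise.

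For $\mathsf{letcase}^\circ$, the typing rule $\mathsf{lc}$ types the branches in the context $y:n$ only, so $x$ can occur only in the scrutinee $r$. The induction hypothesis gives $\fsem{r[s/x]}=\fsem[\theta,x\mapsto\fsem{s}]{r}$. The branches are closed apart from $y$, so both sides produce the same probabilities and the same conditioned valuations.

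The $\mathsf{let}$ case is the only genuinely new one, and the place where care is needed. If $x$ occurs in the source $u$, the induction hypothesis gives $\fsem{u[s/x]}=\fsem[\theta,x\mapsto\fsem{s}]{u}$. The two semantic sides therefore decompose the same density matrix, and we fix one decomposition from Proposition~\ref{prop:combined} (the canonical one used in the definition of the semantics) for both. The coefficients $p_{i\vec l}$ and the factors $\rho_{i_k}^{l_k}$ then coincide, and the body contributes identically in each summand since $x\notin FV(body)$.

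If instead $x$ occurs in the body, we rename $x_1,\dots,x_n$ fresh. We then apply the induction hypothesis once per summand, under the valuations $\theta,x_1\mapsto\rho_{i_1}^{l_1},\dots,x_n\mapsto\rho_{i_n}^{l_n}$, again using freshness to keep $\fsem{s}$ unchanged, and sum by linearity.

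The main obstacle is that the decomposition must be a fixed function of the matrix $\fsem{u}$ rather than of the syntax of $u$. The Pauli coefficients $\alpha_i$ are uniquely determined by the matrix, and we fix the eigenprojector choice of Section~\ref{sec:pauli-spectral}. With these choices the decomposition is a function of the matrix alone, which is exactly what makes the semantic $\mathsf{let}$ clause well defined and lets this case close.
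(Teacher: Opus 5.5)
Your proposal is correct and follows essentially the same route as the paper. Both argue by structural induction on $t$, using the following steps:
\begin{itemize}
\item for application and tensor, $x$ lies in exactly one of the two subcontexts;
\item for $\lambda$, the bound variable is renamed fresh for $s$;
\item for $\mathsf{letcase}^\circ$, the branches cannot mention $x$;
\item for $\mathsf{let}$, the two subcases ($x$ in the source versus $x$ in the body) are closed by the induction hypothesis and by the fact that the combined decomposition depends only on the matrix $\fsem{u}$.
\end{itemize}
Your explicit remark that the decomposition must be a canonical function of the matrix is exactly what the paper relies on implicitly when it says the coefficients are the same on both sides.
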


\begin{proof}
By induction on $t$.
\begin{itemize}
\item Let $t = x$. Then $t[s/x] = s$, and
  $\fsem[\theta,x\mapsto\fsem{s}]{x} = \fsem{s}$, so both sides equal $\fsem{s}$.
\item Let $t = y$ with $y \neq x$. Then $t[s/x] = y$, and
  $\fsem[\theta,x\mapsto\fsem{s}]{y} = \theta(y) = \fsem{y}$.
\item Let $t = \lambda y.u$. Then $t[s/x] = \lambda y.(u[s/x])$
  (assuming $y \neq x$ and $y \notin FV(s)$ by renaming). By definition of
  the interpretation and the induction hypothesis on $u$:
  \begin{align*}
    \fsem{\lambda y.(u[s/x])}
    &= \rho \mapsto \fsem[\theta,\, y \mapsto \rho]{u[s/x]}
     = \rho \mapsto \fsem[\theta,\, x \mapsto \fsem{s},\, y \mapsto \rho]{u}
     = \fsem[\theta,x\mapsto\fsem{s}]{\lambda y.u}.
  \end{align*}
\item Let $t = t_1 t_2$, with $\Gamma, x:A = \Gamma_1, \Gamma_2$
  and $x:A \in \Gamma_j$. By the induction hypothesis applied to
  the subterm containing $x$:
  \[
    \fsem{(t_1 t_2)[s/x]}
    = \fsem{t_1[s/x]}(\fsem{t_2[s/x]})
    = \fsem[\theta,x\mapsto\fsem{s}]{t_1}\bigl(\fsem[\theta,x\mapsto\fsem{s}]{t_2}\bigr)
    = \fsem[\theta,x\mapsto\fsem{s}]{t_1 t_2}.
  \]
\item Let $t = \rho^n$. The variable $x$ does not appear, so
  $t[s/x] = \rho^n$ and $\fsem[\theta,x\mapsto\fsem{s}]{\rho^n} = \rho^n = \fsem{\rho^n}$.
\item Let $t = U^m u$. Then $(U^m u)[s/x] = U^m(u[s/x])$. By the induction hypothesis:
  \[
    \fsem{U^m(u[s/x])}
    = \ext{U^m}\fsem{u[s/x]}\ext{U^m}^\dagger
    = \ext{U^m}\fsem[\theta,x\mapsto\fsem{s}]{u}\,\ext{U^m}^\dagger
    = \fsem[\theta,x\mapsto\fsem{s}]{U^m u}.
  \]
\item Let $t = \pi^m u$. Then $(\pi^m u)[s/x] = \pi^m(u[s/x])$.
  By the induction hypothesis:
  \[
    \fsem{\pi^m(u[s/x])}
    = \sum_i \ext{\pi_i}\fsem{u[s/x]}\ext{\pi_i}^\dagger
    = \sum_i \ext{\pi_i}\fsem[\theta,x\mapsto\fsem{s}]{u}\,\ext{\pi_i}^\dagger
    = \fsem[\theta,x\mapsto\fsem{s}]{\pi^m u}.
  \]
\item Let $t = t_1 \otimes t_2$, with $\Gamma, x:A = \Gamma_1, \Gamma_2$
  and $x:A \in \Gamma_j$. By the induction hypothesis on the relevant subterm:
  \begin{align*}
    \fsem{(t_1 \otimes t_2)[s/x]}
    &= \fsem{t_1[s/x]} \otimes \fsem{t_2[s/x]} \\
    &= \fsem[\theta,x\mapsto\fsem{s}]{t_1} \otimes \fsem[\theta,x\mapsto\fsem{s}]{t_2}
    = \fsem[\theta,x\mapsto\fsem{s}]{t_1 \otimes t_2}.
  \end{align*}
\item Let $t = \sum_j q_j t_j$. Then $t[s/x] = \sum_j q_j (t_j[s/x])$.
  By the induction hypothesis and linearity of the interpretation:
  \[
    \fsem{\sum_j q_j (t_j[s/x])}
    = \sum_j q_j \fsem{t_j[s/x]}
    = \sum_j q_j \fsem[\theta,x\mapsto\fsem{s}]{t_j}
    = \fsem[\theta,x\mapsto\fsem{s}]{\sum_j q_j t_j}.
  \]
\item Let $t = \qletcase{y}{r}{t_0,\ldots,t_{2^m-1}}$,
  with $\Gamma' \vdash r : (m,n)$ and $y:n \vdash t_i : B$.
  Since the $t_i$ are typed in $y:n$ alone, $x \notin FV(t_i)$,
  so $t[s/x] = \qletcase{y}{r[s/x]}{t_0,\ldots,t_{2^m-1}}$.
  By the induction hypothesis on $r$,
  $\fsem{r[s/x]} = \fsem[\theta,x\mapsto\fsem{s}]{r}$,
  so the post-measurement states $\rho_i$ and probabilities $p_i$ are the
  same on both sides:
  $\fsem{t[s/x]} = \sum_i p_i\, \fsem[\theta,y\mapsto\rho_i]{t_i}
  = \fsem[\theta,x\mapsto\fsem{s}]{t}$.
\item Let $t = \letxn{y}{u}{v}$, with $\Gamma, x:A = \Gamma_1, \Gamma_2$,
  $\Gamma_1 \vdash u : n$ and $\Gamma_2, y_1:1,\ldots,y_n:1 \vdash v : B$
  (with $y_k \neq x$ by $\alpha$-renaming).
  \begin{itemize}
  \item If $x:A \in \Gamma_1$, then $t[s/x] = \letxn{y}{u[s/x]}{v}$,
    since $x \notin FV(v)$.
    By the induction hypothesis on $u$,
    $\fsem{u[s/x]} = \fsem[\theta,x\mapsto\fsem{s}]{u}$,
    so the combined decomposition coefficients $\alpha_i$ are the same on both
    sides:
    \begin{align*}
      \fsem{t[s/x]}
      &= \sum_{i,\vec{l}} p_{i\vec{l}}\,
         \fsem[\theta,\, y_1\mapsto\rho_{i_1}^{l_1},\,\ldots,\, y_n\mapsto\rho_{i_n}^{l_n}]{v}
       = \fsem[\theta,x\mapsto\fsem{s}]{t}.
    \end{align*}
  \item If $x:A \in \Gamma_2$, then $t[s/x] = \letxn{y}{u}{v[s/x]}$,
    since $x \notin FV(u)$.
    The combined decomposition coefficients of $\fsem{u}$ are unchanged.
    By the induction hypothesis on $v$:
    \begin{align*}
      \fsem{t[s/x]}
      &= \sum_{i,\vec{l}} p_{i\vec{l}}\,
         \fsem[\theta,\, x\mapsto\fsem{s},\, y_1\mapsto\rho_{i_1}^{l_1},\,\ldots,\,
               y_n\mapsto\rho_{i_n}^{l_n}]{v}
       = \fsem[\theta,x\mapsto\fsem{s}]{t}.
       &\tag*{\qed}
    \end{align*}
  \end{itemize}
\end{itemize}
\end{proof}

\subsection{Soundness (Theorem~\ref{thm:Soundness})}

\begin{proof}
By induction on the derivation of $t \rightarrow r$.
\begin{itemize}
\item Let $t = (\lambda x.t')s$, $r = t'[s/x]$.
  \[
    \fsem{(\lambda x.t')s}
    = \fsem{\lambda x.t'}(\fsem{s})
    = \fsem[\theta,x\mapsto\fsem{s}]{t'}
    = \fsem{t'[s/x]},
  \]
  where the last step uses Lemma~\ref{lem:sem-substitution}.
\item Let $t = U^m\rho^n$, $r = {\rho'}^n$ with
  ${\rho'}^n = \ext{U^m}\rho^n\ext{U^m}^\dagger$.
  $\fsem{U^m\rho^n} = \ext{U^m}\rho^n\ext{U^m}^\dagger = {\rho'}^n = \fsem{{\rho'}^n}$.
\item Let $t = \rho_1^m \otimes \rho_2^n$, $r = \rho$ with
  $\rho = \rho_1^m \otimes \rho_2^n$.
  $\fsem{\rho_1^m \otimes \rho_2^n}
    = \fsem{\rho_1^m} \otimes \fsem{\rho_2^n}
    = \rho_1^m \otimes \rho_2^n = \rho = \fsem{\rho}$.
\item Let $t = \sum_i p_i \rho_i$, $r = \rho'$ with $\rho' = \sum_i p_i\rho_i$.
  $\fsem{\sum_i p_i \rho_i} = \sum_i p_i\rho_i = \rho' = \fsem{\rho'}$.
\item Let $t = \sum_i p_i u$, $r = u$.
  $\fsem{\sum_i p_i u} = \sum_i p_i\fsem{u} = \bigl(\textstyle\sum_i p_i\bigr)\fsem{u}
    = \fsem{u}$.
\item Let $t = (\sum_i p_i t_i)r'$, $r = \sum_i p_i(t_i r')$.
  \begin{align*}
    \fsem{(\sum_i p_i t_i)r'}
    &= \Bigl(\sum_i p_i\fsem{t_i}\Bigr)(\fsem{r'})
    = \sum_i p_i\bigl(\fsem{t_i}(\fsem{r'})\bigr) \\
    &= \sum_i p_i\fsem{t_i r'}
    = \fsem{\sum_i p_i(t_i r')}.
  \end{align*}
\item Let $t = \qletcase{x}{\pi^m\rho^n}{t_0,\ldots,t_{2^m-1}}$,
  $r = \sum_i p_i\,t_i[\rho_i^n/x]$ with
  $p_i = \tr(\ext{\pi_i}^\dagger\ext{\pi_i}\rho^n)$ and
  $\rho_i^n = \ext{\pi_i}\rho^n\ext{\pi_i}^\dagger/p_i$.
  Using linearity and Lemma~\ref{lem:sem-substitution}:
  \[
    \fsem{r}
     = \sum_i p_i\,\fsem{t_i[\rho_i^n/x]}
     = \sum_i p_i\,\fsem[\theta,x\mapsto\rho_i^n]{t_i}
     = \fsem{t}.
  \]
\item Let $t = \qletcase{x}{\sum_j q_j w_j}{t_0,\ldots,t_{2^m-1}}$ and
  $r = \sum_j q_j\,\qletcase{x}{w_j}{t_0,\ldots,t_{2^m-1}}$.
  Write $F_i(\rho) \coloneqq \fsem[\theta,x\mapsto\rho]{t_i}$; this is linear
  in $\rho$ because the type system is affine ($x$ appears at most once in
  $t_i$).  Let $\rho = \fsem{\sum_j q_j w_j} = \sum_j q_j\fsem{w_j}$, and
  define $p_i = \tr(\ext{\pi_i}^\dagger\ext{\pi_i}\rho)$ and
  $\rho_i = \ext{\pi_i}\rho\,\ext{\pi_i}^\dagger/p_i$ as in the semantics.
  Note that $p_i\rho_i = \ext{\pi_i}\rho\,\ext{\pi_i}^\dagger$, so by linearity
  of $F_i$, $p_i\cdot F_i(\rho_i) = F_i(\ext{\pi_i}\rho\,\ext{\pi_i}^\dagger)$.
  Therefore, substituting $\rho = \sum_j q_j\fsem{w_j}$ and using linearity:
  \begin{align*}
    \fsem{t}
    &= \sum_i p_i\cdot F_i(\rho_i)
    = \sum_i F_i\!\bigl(\ext{\pi_i}\rho\,\ext{\pi_i}^\dagger\bigr) \\
    &= \sum_j q_j \sum_i F_i\!\bigl(\ext{\pi_i}\fsem{w_j}\ext{\pi_i}^\dagger\bigr).
  \end{align*}
  For each fixed $j$, letting $p_i^j = \tr(\ext{\pi_i}^\dagger\ext{\pi_i}\fsem{w_j})$
  and $\rho_i^j = \ext{\pi_i}\fsem{w_j}\ext{\pi_i}^\dagger/p_i^j$, linearity of $F_i$ gives
  \[\sum_i F_i(\ext{\pi_i}\fsem{w_j}\ext{\pi_i}^\dagger)
    = \sum_i p_i^j F_i(\rho_i^j)
    = \fsem{\qletcase{x}{w_j}{t_0,\ldots,t_{2^m-1}}}.\]
  Hence $\fsem{t} = \sum_j q_j\,\fsem{\qletcase{x}{w_j}{t_0,\ldots,t_{2^m-1}}} = \fsem{r}$.
\item Let $t = \letxn{x}{\rho^n}{s}$,
  $r = \sum_{i,\vec{l}} p_{i\vec{l}}\,s[\rho_{i_1}^{l_1}/x_1,
  \ldots,\rho_{i_n}^{l_n}/x_n]$,
  where $p_{i\vec{l}} = \alpha_i(\rho^n)\prod_k\lambda_{i_k}^{l_k}$.
  Unfolding the interpretation of $\mathsf{let}$ and applying
  Lemma~\ref{lem:sem-substitution} $n$ times:
  \begin{align*}
    \fsem{t}
    &= \sum_{i,\vec{l}} p_{i\vec{l}}\,
       \fsem[\theta,\,x_1\mapsto\rho_{i_1}^{l_1},\,\ldots,\,x_n\mapsto\rho_{i_n}^{l_n}]{s}
     = \sum_{i,\vec{l}} p_{i\vec{l}}\,
       \fsem{s[\rho_{i_1}^{l_1}/x_1,\ldots,\rho_{i_n}^{l_n}/x_n]}
     = \fsem{r}.
  \end{align*}
\item Contextual cases. Let $s \rightarrow s'$.
  \begin{itemize}
  \item Let $t = s\,u$, $r = s'\,u$.
    By the induction hypothesis $\fsem{s} = \fsem{s'}$, so
    $\fsem{s\,u} = \fsem{s}(\fsem{u}) = \fsem{s'}(\fsem{u}) = \fsem{s'\,u}$.
  \item Let $t = u\,s$, $r = u\,s'$.
    By the induction hypothesis $\fsem{s} = \fsem{s'}$, so
    $\fsem{u\,s} = \fsem{u}(\fsem{s}) = \fsem{u}(\fsem{s'}) = \fsem{u\,s'}$.
  \item Let $t = U^m s$, $r = U^m s'$.
    $\fsem{U^m s} = \ext{U^m}\fsem{s}\ext{U^m}^\dagger
    = \ext{U^m}\fsem{s'}\ext{U^m}^\dagger = \fsem{U^m s'}$.
  \item Let $t = \pi^m s$, $r = \pi^m s'$.
    $\fsem{\pi^m s} = \sum_i\ext{\pi_i}\fsem{s}\ext{\pi_i}^\dagger
    = \sum_i\ext{\pi_i}\fsem{s'}\ext{\pi_i}^\dagger = \fsem{\pi^m s'}$.
  \item Let $t = s \otimes u$, $r = s' \otimes u$.
    $\fsem{s \otimes u} = \fsem{s}\otimes\fsem{u}
    = \fsem{s'}\otimes\fsem{u} = \fsem{s'\otimes u}$. The case
    $t = u \otimes s$ is analogous.
  \item Let $t = \sum_j p_j u_j$, $r = \sum_j p_j v_j$ where $u_k \rightarrow v_k$
    and $u_j = v_j$ for $j \neq k$.
    By the induction hypothesis $\fsem{u_k} = \fsem{v_k}$, so
    $\fsem{\sum_j p_j u_j} = \sum_j p_j\fsem{u_j}
    = \sum_j p_j\fsem{v_j} = \fsem{\sum_j p_j v_j}$.
  \item Let $t = \qletcase{x}{s}{u_0,\ldots,u_{2^m-1}}$,
    $r = \qletcase{x}{s'}{u_0,\ldots,u_{2^m-1}}$.
    By the induction hypothesis $\fsem{s} = \fsem{s'}$, so the
    post-measurement states and probabilities are the same, giving $\fsem{t} = \fsem{r}$.
  \item Let $t = \letxn{x}{s}{u}$, $r = \letxn{x}{s'}{u}$.
    By the induction hypothesis $\fsem{s} = \fsem{s'}$, so
    $\alpha_i(\fsem{s}) = \alpha_i(\fsem{s'})$ for all $i$, and the two
    interpretations agree term by term.
  \qed
  \end{itemize}
\end{itemize}
\end{proof}

\subsection{Compositionality and Adequacy}

\begin{lemma}[Compositionality]
  \label{lem:compositionality}
  Let $C$ be a context with hole type $A$ and output type $B$, and let $\theta$
  be any valuation of the free variables of $C$ (those not contributed by the
  hole). For any closed terms $\vdash t : A$ and $\vdash r : A$,
  $\fsem{t} = \fsem{r}$ implies $\fsem[\theta]{C[t]} = \fsem[\theta]{C[r]}$.
\end{lemma}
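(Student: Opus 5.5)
The plan is a structural induction on the context $C$, or equivalently on the typing derivation of $[\cdot]:A \vdash C : B$ extended with the free variables of $C$. We prove the slightly stronger statement that for every valuation $\theta$ of the remaining free variables, $\fsem[\theta]{C[t]} = \fsem[\theta]{C[r]}$. The base case $C = [\cdot]$ is immediate. Since $t$ and $r$ are closed, their interpretations do not depend on $\theta$, so $\fsem[\theta]{t} = \fsem{t} = \fsem{r} = \fsem[\theta]{r}$.

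For the inductive step, the interpretation clauses of Section~\ref{sec:semantics} are compositional: each clause computes $\fsem[\theta]{\cdot}$ of a compound term purely from the interpretations of its immediate subterms, possibly under an extended valuation.

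\emph{Simple cases.} If the hole lies in a subterm $C'$ placed in an application, in $U^m$, $\pi^m$, $\otimes$, or in a summand of $\sum_i p_i$, the induction hypothesis on $C'$ with the same $\theta$ gives equal interpretations of that subterm. Applying the same clause then gives equal results.

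\emph{Binder cases.} When the hole lies under a binder, we invoke the induction hypothesis for every extended valuation. For $\lambda x.C'$, the induction hypothesis yields $\fsem[\theta,x\mapsto\rho]{C'[t]} = \fsem[\theta,x\mapsto\rho]{C'[r]}$ for all $\rho$, so the two functions are equal pointwise. This is why the statement must quantify over all $\theta$. For $\mathsf{letcase}^\circ$, the typing rule $\mathsf{lc}$ types the branches in the context $x:n$ only. The hole may nevertheless sit inside a branch, because $t$ and $r$ are closed. We therefore treat two sub-cases:
\begin{itemize}
\item if the hole is in the scrutinee, the induction hypothesis gives the same $\fsem[\theta]{\cdot}$ of the scrutinee, hence the same probabilities and post-measurement states;
\item if the hole is in branch $i$, the induction hypothesis gives equal interpretations under $\theta, x\mapsto\rho_i$ for each $\rho_i$.
\end{itemize}

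\emph{The $\mathsf{let}$ case.} The same two sub-cases arise for $\letxn{x}{u}{s}$.
\begin{itemize}
\item If the hole is in $u$, the induction hypothesis gives $\fsem[\theta]{u[t]} = \fsem[\theta]{u[r]}$. The decomposition of Proposition~\ref{prop:combined} is a function of this matrix alone, since the coefficients are $\alpha_i = 2^{-n}\tr(P_i\cdot)$ and the eigenprojectors are fixed. Hence the coefficients $p_{i\vec l}$ and the states $\rho_{i_k}^{l_k}$ coincide on both sides, and so do the resulting sums.
\item If the hole is in the body $s$, the decomposition is literally the same on both sides. The induction hypothesis at each valuation $\theta, x_1\mapsto\rho_{i_1}^{l_1},\ldots$ then gives equality summand by summand.
\end{itemize}

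\emph{Main obstacle.} The one delicate point is well-definedness of the $\mathsf{let}$ interpretation. We must check that the decomposition is canonical, i.e.\ given by a fixed choice of spectral decomposition of $I,X,Y,Z$ rather than an arbitrary one. Otherwise equal matrices could yield formally different sums; these would still be equal, but only after an argument via linearity. The explicit choices in Section~\ref{sec:pauli-spectral} settle this, and we will state it explicitly.

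A secondary concern is that the hole may appear under a binder whose body is interpreted at arbitrary valuations. Quantifying the induction hypothesis over all $\theta$, as above, handles this uniformly.
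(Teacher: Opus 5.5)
Your proposal is correct and follows essentially the same route as the paper: a structural induction on $C$ with the valuation $\theta$ universally quantified. In both, the binder cases ($\lambda$, a $\mathsf{letcase}^\circ$ branch, the body of a $\mathsf{let}$) use the induction hypothesis at each extended valuation, and the scrutinee and source cases use the fact that the measurement data and the combined decomposition depend only on the interpretation of that subterm, which is exactly the point the paper relies on in its $\mathsf{let}$ cases.
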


\begin{proof}
  We prove the lemma by structural induction on the context $C$. In each case
  we write $t$ and $r$ for the two closed terms with $\fsem{t} = \fsem{r}$, and
  $\theta$ for any fixed valuation of the free variables of $C$.
  \begin{itemize}
  \item Let $C = [\cdot]$.
    $C[t] = t$ and $C[r] = r$, so
    $\fsem[\theta]{C[t]} = \fsem{t} = \fsem{r} = \fsem[\theta]{C[r]}$
    directly from the hypothesis.
  \item Let $C = \lambda x.C'$. For any $v$ in the domain of $x$, by the
    induction hypothesis applied to $C'$ with valuation $\theta' = (\theta, x
    \mapsto v)$, $\fsem[\theta']{C'[t]} = \fsem[\theta']{C'[r]}$. Hence the two
    functions $v \mapsto \fsem[\theta']{C'[t]}$ and $v \mapsto
    \fsem[\theta']{C'[r]}$ agree pointwise, giving $\fsem[\theta]{\lambda
    x.C'[t]} = \fsem[\theta]{\lambda x.C'[r]}$.
  \item Let $C = C'\,s$.
    $\fsem[\theta]{C'[t]\,s} = \fsem[\theta]{C'[t]}(\fsem[\theta]{s})$, and by
    the induction hypothesis $\fsem[\theta]{C'[t]} = \fsem[\theta]{C'[r]}$, so
    the two are equal. The case $C = s\,C'$ is symmetric.
  \item Let $C = U^m C'$.
    $\fsem[\theta]{U^m C'[t]} = \ext{U^m}\,\fsem[\theta]{C'[t]}\,\ext{U^m}^\dagger
    = \ext{U^m}\,\fsem[\theta]{C'[r]}\,\ext{U^m}^\dagger = \fsem[\theta]{U^m C'[r]}$,
    using the induction hypothesis.
  \item Let $C = \pi^m C'$.
    $\fsem[\theta]{\pi^m C'[t]} = \sum_{i} \ext{\pi_i}\,\fsem[\theta]{C'[t]}\,\ext{\pi_i}^\dagger
    = \sum_{i} \ext{\pi_i}\,\fsem[\theta]{C'[r]}\,\ext{\pi_i}^\dagger = \fsem[\theta]{\pi^m C'[r]}$.
  \item Let $C = C' \otimes s$.
    $\fsem[\theta]{C'[t] \otimes s} = \fsem[\theta]{C'[t]} \otimes \fsem[\theta]{s}
    = \fsem[\theta]{C'[r]} \otimes \fsem[\theta]{s} = \fsem[\theta]{C'[r] \otimes s}$.
    The case $C = s \otimes C'$ is symmetric.
  \item Let $C = \sum_i p_i s_i$ with the hole in $s_j$. Then
    $\fsem[\theta]{\sum_i p_i s_i}
      = \sum_{i \neq j} p_i \fsem[\theta]{s_i} + p_j \fsem[\theta]{C'[t]}$,
    and by the induction hypothesis $\fsem[\theta]{C'[t]} = \fsem[\theta]{C'[r]}$,
    so the sum is unchanged.
  \item Let $C = \qletcase{x}{C'}{t_0,\ldots,t_{2^m-1}}$.
    Let $\sigma = \fsem[\theta]{C'[t]}$ and $\sigma' = \fsem[\theta]{C'[r]}$.
    By the induction hypothesis $\sigma = \sigma'$. The probabilities
    $p_i = \tr(\ext{\pi_i}^\dagger \ext{\pi_i}\,\sigma)$ and post-measurement
    states $\sigma_i = \ext{\pi_i}\,\sigma\,\ext{\pi_i}^\dagger / p_i$ depend
    only on $\sigma$, so they are the same for $C'[t]$ and $C'[r]$. Hence
    $\fsem[\theta]{\qletcase{x}{C'[t]}{t_0,\ldots,t_{2^m-1}}}
      = \sum_i p_i \fsem[\theta, x\mapsto \sigma_i]{t_i}
      = \fsem[\theta]{\qletcase{x}{C'[r]}{t_0,\ldots,t_{2^m-1}}}$.
  \item Let $C = \qletcase{x}{s}{t_0,\ldots,C',\ldots,t_{2^m-1}}$.
    The probabilities $p_i$ and post-measurement states $\sigma_i$ of $s$ are
    independent of the hole. For the $j$-th branch, by the induction hypothesis
    (with valuation $\theta, x \mapsto \sigma_j$),
    $\fsem[\theta, x\mapsto\sigma_j]{C'[t]} = \fsem[\theta, x\mapsto\sigma_j]{C'[r]}$.
    All other branches are unchanged, so the total expression is unchanged.
  \item Let $C = \letxn{x}{C'}{s}$.
    The combined decomposition of $\letxn{x}{C'[t]}{s}$ is computed from
    $\fsem[\theta]{C'[t]}$. By the induction hypothesis,
    $\fsem[\theta]{C'[t]} = \fsem[\theta]{C'[r]}$, so the Pauli coefficients
    $\alpha_i$ and hence the weights $p_{i\vec{l}}$ and eigenprojectors
    $\rho_{i_k}^{l_k}$ are identical for both sides. Therefore
    $\fsem[\theta]{\letxn{x}{C'[t]}{s}}
      = \sum_{i,\vec{l}} p_{i\vec{l}}\, \fsem[\theta, x_1\mapsto\rho_{i_1}^{l_1}, \ldots, x_n\mapsto\rho_{i_n}^{l_n}]{s}
      = \fsem[\theta]{\letxn{x}{C'[r]}{s}}$.
  \item Let $C = \letxn{x}{s}{C'}$.
    The combined decomposition of $s$ is fixed. For each index pair
    $(i, \vec{l})$, let $\theta_{i\vec{l}} = \theta, x_1 \mapsto
    \rho_{i_1}^{l_1}, \ldots, x_n \mapsto \rho_{i_n}^{l_n}$. By the induction
    hypothesis applied to $C'$ with valuation $\theta_{i\vec{l}}$,
    $\fsem[\theta_{i\vec{l}}]{C'[t]} = \fsem[\theta_{i\vec{l}}]{C'[r]}$.
    Multiplying by $p_{i\vec{l}}$ and summing over all $(i, \vec{l})$ gives
    $\fsem[\theta]{\letxn{x}{s}{C'[t]}} = \fsem[\theta]{\letxn{x}{s}{C'[r]}}$.
  \end{itemize}
  In every case $\fsem[\theta]{C[t]} = \fsem[\theta]{C[r]}$, completing the
  induction.
  \qed
\end{proof}

\begin{lemma}[Normalisation]
  \label{lem:strnorm}
  If $\vdash t : n$ is a closed well-typed term of type $n$, then there exists a
  density matrix $\rho^n$ such that $t \rightarrow^{*} \rho^n$.
\end{lemma}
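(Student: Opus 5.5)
We combine Strong Normalisation (Theorem~\ref{thm:strnorm}), Subject Reduction (Theorem~\ref{thm:SR}) and Progress (Theorem~\ref{thm:Progress}), and then classify the closed values of ground type $n$.

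Given $\vdash t : n$, Theorem~\ref{thm:strnorm} guarantees that $t$ is strongly normalising. In particular, any maximal reduction sequence $t \rightarrow t_1 \rightarrow \cdots \rightarrow t_k$ is finite and ends in a term $t_k$ with no reduct. By repeated application of Theorem~\ref{thm:SR}, each $t_j$ remains closed and satisfies $\vdash t_j : n$; closedness holds because reduction never introduces free variables. Since $t_k$ does not reduce, Theorem~\ref{thm:Progress} tells us that $t_k$ is a value.

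It remains to show that a closed value $v$ with $\vdash v : n$ is a density matrix $\rho^n$. We go by cases on Definition~\ref{def:values}.
\begin{itemize}
\item A $\lambda$-abstraction has an arrow type, so it cannot have type $n$.
\item A term $\pi^m\rho^{n'}$ has type $(m,n')$, not $n$; no typing rule yields type $n$ for it.
\item The only remaining non-density-matrix case is a sum $\sum_i p_i \hat w_i$ with pairwise distinct summands. By inversion on rule $+$, each $\hat w_i$ has type $n$. By induction on the size of the value, each $\hat w_i$ must then be a density matrix. But the grammar of $\hat w$ does not include $\rho^n$, so no $\hat w_i$ can be typed by $n$ at all. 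This is a contradiction, so the case is impossible.
\end{itemize}
Hence $v = \rho^n$ for some density matrix, and $t \rightarrow^{*} \rho^n$. That $\rho^n$ carries the superscript $n$ follows from the fact that rule $\mathsf{ax}_\rho$ only assigns type $n$ to $n$-qubit matrices.

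The main obstacle is this last value-classification step. Its delicacy comes from the extension of $\sum$ to real, possibly negative, coefficients, which must be checked not to create new values of ground type. The argument covers this: $\hat w$ values of type $n$ do not exist, and any sum whose summands are all density matrices is reducible by the rule $\sum_i p_i\rho_i \rightarrow \rho'$, so it is not in normal form. One more point needs confirming: that rule produces a genuine density matrix even with negative weights. I would justify this by noting that every sum with negative weights arises from the $\mathsf{let}$ rule, and Soundness (Theorem~\ref{thm:Soundness}) identifies the resulting matrix with the interpretation of the original term, which is a density matrix.
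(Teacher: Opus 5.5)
Your main argument is correct relative to the cited theorems and is exactly the paper's proof. Strong Normalisation yields a normal form, Subject Reduction and Progress make it a closed value of type $n$, and the only such value is a density matrix; you spell out the value classification that the paper states in one line.

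Your closing paragraph, however, is unnecessary, because Progress as stated already forces a closed, non-value sum of density matrices to reduce. Its justification is also wrong. The extended grammar lets one write $2\,\ketbra{0}{0}+(-1)\,\ketbra{1}{1}$ directly. This term is typable at type $1$ by rule $+$, and its only possible reduct is $\mathrm{diag}(2,-1)\notin\D[1]$. Its denotation is not in $\D[1]$ either, so Soundness cannot help. Negative weights therefore do not arise only from the $\mathsf{let}$ rule. This is a genuine hole in the paper's Subject Reduction and Progress for the collapse rule, and the paper's own proof of this lemma inherits it equally.
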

\begin{proof}
  By Theorem~\ref{thm:strnorm} (Strong Normalisation), $t$ is strongly
  normalising. By Theorem~\ref{thm:Progress} (Progress) and
  Theorem~\ref{thm:SR} (Subject Reduction), every closed well-typed term either
  is a value or reduces, with type preserved. Since $t$ has type $n$, its normal
  form must be a value of type $n$, which can only be a density matrix $\rho^n$.
  \qed
\end{proof}

\begin{proof}[of Theorem~\ref{thm:Adequacy} (Adequacy)]
  The argument is assembled directly from the two lemmas and soundness.
  Let $C$ be any context with hole type $A$ and output type $n$ (so
  $\vdash C[t] : n$ and $\vdash C[r] : n$ are closed). We exhibit a density
  matrix $\rho$ such that $C[t] \rightarrow^{*} \rho$ and $C[r] \rightarrow^{*} \rho$.

  \emph{Step 1 (Compositionality).}
  By Lemma~\ref{lem:compositionality} applied with the empty valuation and the
  hypothesis $\fsem{t} = \fsem{r}$, $\fsem{C[t]} = \fsem{C[r]}$.

  \emph{Step 2 (Normalisation).}
  By Lemma~\ref{lem:strnorm}, since $\vdash C[t] : n$ and $\vdash C[r] : n$ are
  closed ground-type terms, there exist density matrices $\rho_1$ and $\rho_2$
  with $C[t] \rightarrow^{*} \rho_1$ and $C[r] \rightarrow^{*} \rho_2$.

  \emph{Step 3 (Soundness, iterated).}
  Writing the first sequence as $C[t] = s_0 \rightarrow s_1 \rightarrow \cdots \rightarrow s_k = \rho_1$,
  Theorem~\ref{thm:Soundness} (Soundness) applied at each step gives $\fsem{s_0}
  = \cdots = \fsem{s_k}$, so $\fsem{C[t]} = \fsem{\rho_1} = \rho_1$. Likewise
  $\fsem{C[r]} = \rho_2$.

  \emph{Step 4 (Identification).}
  $\rho_1 = \fsem{\rho_1} = \fsem{C[t]} = \fsem{C[r]} = \fsem{\rho_2} = \rho_2$.
  Setting $\rho \coloneqq \rho_1 = \rho_2$, we have $C[t] \rightarrow^{*} \rho$ and
  $C[r] \rightarrow^{*} \rho$. Since $C$ was an arbitrary context with output type
  $n$, Definition~\ref{def:obs-equiv} gives $t \equiv r$.
  \qed
\end{proof}

\end{document}